\documentclass[11pt]{article}
\newif\ifcompact\compactfalse

\usepackage{bbm,amsmath}
\usepackage{float}

\usepackage{booktabs}
\usepackage{threeparttable}
\usepackage{multirow}

\usepackage[ruled,linesnumbered]{algorithm2e}

\usepackage{xcolor}

\usepackage{amsthm,amssymb}
\usepackage{natbib}
\usepackage{graphicx}
\usepackage{geometry}
\geometry{margin=2.5cm, top=2.5cm, bottom=3.5cm}
\linespread{1.5}

\newtheorem{assumption}{Assumption}
\newtheorem{lemma}{Lemma}
\newtheorem{theorem}{Theorem}
\newtheorem{proposition}{Proposition}
\newtheorem{definition}{Definition}
\newtheorem{remark}{Remark}
\newtheorem{corollary}{Corollary}

\newenvironment{keywords}{{\centering\textbf{Keywords:}}}{}

\ifcompact
\newcommand{\mypageset}{
    \geometry{
        paperwidth=6.34in,
        paperheight=9.17in,
        left=4mm,
        right=4mm,
        top=5mm,
        bottom=12mm,
    }
}
\mypageset
\fi


\def\vb{\mathbf{b}}

\def\vp{\mathbf{p}}
\def\vq{\mathbf{q}}
\def\vr{\mathbf{r}}

\def\vv{\mathbf{v}}

\def\mA{\mathbf{A}}

\def\mC{\mathbf{C}}
\def\mD{\mathbf{D}}


\def\Dcal{{\mathcal{D}}}
\def\Ecal{{\mathcal{E}}}

\def\Gcal{{\mathcal{G}}}

\def\Ical{{\mathcal{I}}}
\def\Jcal{{\mathcal{J}}}
\def\Kcal{{\mathcal{K}}}
\def\Lcal{{\mathcal{L}}}

\def\Ocal{{\mathcal{O}}}

\def\Rcal{{\mathcal{R}}}
\def\Scal{{\mathcal{S}}}
\def\Tcal{{\mathcal{T}}}

\def\Xcal{{\mathcal{X}}}
\def\Ycal{{\mathcal{Y}}}
\def\Zcal{{\mathcal{Z}}}


\def\Ebb{{\mathbb{E}}}

\def\Rbb{{\mathbb{R}}}


\def\sS{{\mathbb{S}}}

%


\newcommand{\indicator}{\mathbbm{1}}
\def\normal{\mathcal{N}}

\def\E{\mathbb{E}}

\def\real{\mathbb{R}}


\def\Var{\mathrm{Var}}

\def\Cov{\mathrm{Cov}}


\newcommand{\norm}[1]{\lVert #1\rVert}

\newcommand{\abs}[1]{|#1|}
\newcommand{\bigabs}[1]{\bigl| #1\bigr|}
\newcommand{\Bigabs}[1]{\Bigl| #1\Bigr|}

\makeatletter

\newcommand{\Rmnum}[1]{{\rm\expandafter\@slowromancap\romannumeral #1@}}
\makeatother

\def\target{Q}
\def\targetx{{\target_{X}}}
\def\targetxcy{\target_{X|Y = k}}
\def\targety{\target_{Y}}
\def\source{P_{j}}

\def\sourcexcy{P_{j,X|Y = k}}
\def\sourcexcyhat{\hat{P}_{j,X|Y = k}}
\def\targetxhat{\hat{Q}_{X}}
\def\sourcey{P_{j,Y}}

\def\w{\mathbf{w}}

\def\Lcal{\mathcal{L}}
\def\Dcal{\mathcal{D}}
\def\Ocal{\mathcal{O}}
\def\Xcal{\mathcal{X}}

\def\Ahat{\widehat{A}}
\def\bbar{\bar{b}}

\def\kernel{\mathcal{K}}
\def\betabold{\boldsymbol{\beta}}

\def\vqhat{\hat{\vq}}
\def\MMD{\mathrm{MMD}}
\def\MMDhat{\widehat{\MMD}}
\def\RW{\mathrm{RW}}
\def\RE{\mathrm{RE}}
\renewcommand{\hat}{\widehat}

\DeclareMathOperator*{\argmin}{\mathrm{arg\,min}}

\begin{document}

\title{Robust Multi-Source Domain Adaptation under Label Shift}

\author{
Congbin Xu$^a$, Chengde Qian$^b$, Zhaojun Wang$^a$ and Changliang Zou$^a$\\
{\small $^a$\it{School of Statistics and Data Science, Nankai University}}\\
{\small $^b$\it{School of Mathematical Sciences, Shanghai Jiao Tong University}}
}

\maketitle

\begin{abstract}
As the volume of data continues to expand, it becomes increasingly common for data to be aggregated from multiple sources. Leveraging multiple sources for model training typically achieves better predictive performance on test datasets. Unsupervised multi-source domain adaptation aims to predict labels of unlabeled samples in the target domain by using labeled samples from source domains. This work focuses on robust multi-source domain adaptation for multi-category classification problems against the heterogeneity of label shift and data contamination. We investigate a domain-weighted empirical risk minimization framework for robust estimation of the target domain's class proportion. Inspired by outlier detection techniques, we propose a refinement procedure within this framework. With the estimated class proportion, robust classifiers for the target domain can be constructed. Theoretically, we study the finite-sample error bounds of the domain-weighted empirical risk minimization and highlight the improvement of the refinement step. Numerical simulations and real-data applications demonstrate the superiority of the proposed method.
\end{abstract}

\begin{keywords}
  Maximum mean discrepancy; Nonasymptotic error bound; Outlier detection; Robust estimation; Unsupervised domain adaptation.
\end{keywords}

\section{Introduction}

Recent progress in multi-category classification has been greatly aided by the availability of large labeled datasets from the covariate-label space $\Xcal \times \Ycal$, where $\Xcal \subset \real^d$ and $\Ycal = [K] \triangleq \{1,\ldots,K\}$.
However, in many practical areas, such as autonomous vehicles and personalized healthcare, it is still very expensive and time-consuming to gather large-scale labeled datasets.
Therefore, there is a crucial need to apply knowledge from existing labeled datasets to make predictions in new, unlabeled situations.
This need has driven the development of \textit{unsupervised domain adaptation} \citep{kouw2019review,farahani2021brief}, which aims to build robust predictors for a target distribution $Q$ by using labeled samples from a source distribution $P$ and unlabeled samples from $Q$.
A key focus of this research is to effectively combine labeled and unlabeled data to improve predictive performance on target test data, especially when there is a significant \textit{distributional shift} between $P$ and $Q$.

\textit{Label shift}, a prevalent scenario in distributional shifts \citep{saerens2002adjusting,storkey2008training}, posits that while the marginal distribution of labels $Y$ may vary, the conditional distribution of covariates $X$ given $Y$ remains invariant.
Formally, a distribution $P$ is said to exhibit label shift if
\begin{equation}\label{pls}
	\targety \neq P_Y \text{ and } \targetxcy = P_{X|Y = k} \text{ for all } k \in \Ycal,
\end{equation}
with the notation $X|Y=k$ referring to the conditional distribution of $X$ given $Y=k$ under populations $Q$ and $P$.
A pivotal task in addressing label shift is the estimation of the target class proportion, denoted as $\vq^\ast = \{\Pr_\target(Y = k)\}_{k \in \Ycal}$.
With a precise estimate of $\vq^\ast$ in hand, the label shift can be effectively mitigated, paving the way for the development of high-accuracy predictors (classifiers) for the target domain \citep{zhang2013domain,lipton2018detecting}.
However, obtaining a reliable estimate of $\vq^\ast$ is inherently challenging within the unsupervised domain adaptation framework, as the labels from the target distribution $\target$ are unavailable.
To address this, several effective methodologies have been proposed.
Notably, the works of \cite{zhang2013domain}, \cite{iyer2014maximum}, \cite{gong2016domain}, and \cite{lipton2018detecting} have proposed systematic approaches to estimate $\vq^\ast$ in scenarios with a single source domain.
These approaches hinge on minimizing the divergence between the marginal distribution of the target domain $\targetx$ and a composite distribution derived from $\{P_{X|Y = k}\}$.

However, with increasing availability of data sources, applying the label shift assumption (\ref{pls}) across all sources becomes impractical, as
the cost of conducting thorough quality checks on each source is prohibitive.
For instance, in digital recognition tasks, the conditional distribution of $X$ given $Y = k$ can vary significantly between datasets like MNIST and SVHN, as highlighted by \cite{shui2021aggregating}.
In light of this, we address a scenario with $m$ sources $\{P_j\}$, each providing covariate-label pairs, alongside a target domain $Q$ consisting solely of unlabeled samples.
We propose a relaxed label shift assumption that allows an $\epsilon$ fraction of sources (denoted by $\Ocal \subset [m]$) to deviate from the standard label shift model.
Specifically, we adopt the following {\it multi-source contaminated label shift} framework:
\begin{equation}\label{equation:multiple_LS}
	\targety \neq \sourcey ,\, \targetxcy = \sourcexcy \text{ for all } k \in \Ycal, \, j \in [m]\setminus \Ocal,
\end{equation}
with $\abs{\Ocal} \leq \lfloor \epsilon m \rfloor$.
We refer to the sources in $\Ical \triangleq [m]\setminus \Ocal$ as inlier sources and those in $\Ocal$ as outlier sources.
Although the exact composition of $\Ical$ and $\Ocal$ remains unknown, we assume that the proportion $\epsilon$ (or an upper bound thereof) is known {\it a priori}.

It is clear that the single-source methods discussed earlier are not directly applicable in this context.
Indiscriminately combining all sources for model training may result in sub-optimal results, as outlier sources could introduce significant biases \citep{mansour2008domain,zhao2020multi}.
To develop better source aggregation strategies, \cite{li2019target} and \cite{shui2021aggregating} proposed methods to measure the similarity between source and target domains, thereby identifying potential outlier sources. \cite{li2019target} proposed a metric based solely on the marginal distributions of $X$ within a representational space, which fails to capture the underlying relationship in the context of label shift \citep{shui2021aggregating}.
In contrast, \cite{shui2021aggregating} investigated the conditional similarities of $X$ given $Y = k$, using labeled samples from $Q$, which are typically unavailable in the unsupervised domain adaptation paradigm.

This article addresses the critical challenge of developing a robust approach to handle the unsupervised label shift problem in the context of multiple sources, where some may not follow
the label shift assumption.
Our contributions are as follows:
\begin{itemize}
	\item We propose a robust estimation technique for $\vq^\ast$ by minimizing a weighted empirical divergence across sources.
	 The weights are assigned using robust algorithms that focus on excess risk, enhancing the identification of outlier sources due to the reduced variance in excess risk.
	 Leveraging the resulting estimators, we construct a robust classifier for the target domain by calibrating the label shifts.

	\item We employ the Maximum Mean Discrepancy (MMD) as a measure of distribution divergence and establish a convergent $\ell_2$-error bound for the proposed estimate.
	 This analysis reveals that the impact of outlier sources on the error bound is approximately proportional to $\sqrt{\epsilon/n}$, where $n$ represents the sample size of each source domain.

	\item We substantiate the effectiveness of our method through extensive numerical simulations and real-world dataset examples, showcasing its superiority over existing approaches.
\end{itemize}

\subsection{Related Work}
To calibrate general distribution shifts, the {\it likelihood-ratio-weighted risk minimization }aims to minimize weighted risk $\Rcal(\theta) = \E_{(X,Y) \sim P} [r(X,Y) \ell_\theta(X,Y)] $, where $\ell_\theta(\cdot,\cdot)$ is the loss function, $r(X,Y)$ is the likelihood ratio $\frac{d Q}{d P}(X, Y)$ between $\target$ and $P$; see, for example, \cite{shimodaira2000improving} and \cite{gretton2008covariate}.
In the context of the label shift, $r(X, Y)$ reduces to the class probability ratio $\frac{\Pr_\target(Y = k)}{\Pr_P(Y = k)}$ \citep{lipton2018detecting,azizzadenesheli2018regularized}.
Therefore, the key task is to estimate the target class proportion $\{\Pr_\target(Y = k)\}_{k \in \Ycal}$.
An extension is the generalized label shift setting, where the pair $(\phi(X), Y)$ obeys the label shift condition after an invariant feature mapping $\phi(X)$ \citep{gong2016domain,combes2020domain,wu2023prominent}.

{\it Transfer learning and domain adaption} also seek to build a predictor for the target population $\target$ using samples from the source distribution $P$ \citep{pan2010transfer}, but focuses on more intuitive techniques to bridge the gap between the varying distributions.
One widely used method is fine-tuning \citep{yosinski2014transferable}, where a predictor is first trained on source samples and then adjusted using a small set of target samples to improve performance on $\target$.
Another prominent approach is feature learning \citep{ganin2016dann}, which learns domain-invariant features using adversarial training techniques and then trains a predictor in the learned feature space with source samples.

{\it Federated learning} shares several similarities with multiple-source domain adaptation \citep{wen2023survey}. 
In federated learning, private data is stored locally across multiple clients.
Since clients are located in different areas, each client has heterogeneous data with varying distributions. 
Additionally, federated learning addresses 
data and model security concerns, including poisoning \citep{chan2018data} and Byzantine \citep{qian2024bymi} attacks, which pose threats by maliciously altering either the private data or the information transmitted during the learning process.
Despite these similarities, the objectives of federated learning and multiple-source domain adaptation diverge significantly. 
Federated learning aims to construct a global model that generalizes effectively across all local clients, while preserving data privacy by avoiding direct data transmission during training. 
Conversely, multiple-source domain adaptation emphasizes the development of a predictor specifically for the target domain, leveraging samples from multiple source domains to enhance its performance.

\subsection{Structure and Notation}

The remainder of this paper is structured as follows.
Section~\ref{sec:method} outlines the main ideas.
Section~\ref{sec:theory} delves into the theoretical guarantee of the proposed method.
Section~\ref{sec:numerical_study} provides simulation studies and real-data analyses.
Section~\ref{sec:conclusion} concludes the paper.
All theoretical proofs, together with additional algorithmic and numerical details, are provided in the Appendix.

We state the frequently used notation here.
For real-valued sequences $\{a_n\}$ and $\{b_n\}$, $a_n \lesssim b_n$ ($a_n \gtrsim b_n$) indicates that there is a positive constant $c$ such that $a_n < c b_n$ ($a_n > c b_n$) for all $n$.
For any positive integer $m$, let $[m] = \{1,\ldots,m\}$.
Denote the vector $(a_i)_{i \in [m]} = (a_1, \ldots, a_m)^\top \in \Rbb^{m}$ and the matrix $(a_{ij})_{i \in [m], j \in [n]} \in \Rbb^{m \times n}$ with the entry in the $i$-th row and $j$-th column be $a_{ij}$.
The $K$-dimensional probability simplex is represented as $ \triangle_{K} = \{\vv \in \real^K: v_k \geq 0,\, \sum_{k \in [K]} v_k = 1\} $. For a matrix $\mA$, $\lVert \mA \rVert_{\mathrm{op}}$ represents its operator norm.

\section{Methodology}\label{sec:method}

In this section, we propose a robust method for estimating $\vq^\ast$ and introduce the associated optimization method, followed by an example using the Maximum Mean Discrepancy as distribution divergence.

Suppose that we have labeled samples $\Dcal_{j} = \{(x_{j,i},y_{j,i})\}_{i \in [n]}$ from the $j$-th source and unlabeled sample $\Dcal_0 = \{x_i\}_{i \in [N]}$ from the target domain.
For the sake of simplicity in notation, we assume that $\Dcal_{j}$'s are in the same size but the methodology can fit general scenarios.
Denote $\Dcal_{j,k} = \{x_{j,i}: (x_{j,i},y_{j,i}) \in \Dcal_{j} \text{ and } y_{j,i} = k\}$ as the covariate set belonging to the $k$-th category and $n_{j,k} = \abs{\Dcal_{j,k}}$ as its size.

\subsection{Robust Estimation of Target Class Proportion}
Since $\targetx= \sum_k q_k^\ast \targetxcy$, the target class proportion $\vq^\ast$ minimizes
\[
	\Lcal_{\target}(\vq) = \Lcal\Bigl(\targetx, \sum_k q_k \targetxcy\Bigr),
\]
for an appropriate distribution divergence measurement $\Lcal$.
If all the source distributions $\source$'s satisfy the label shift assumption (\ref{pls}), we also have $\Lcal_{\target}(\vq)
 = \Lcal(\targetx, \sum_k q_k \sourcexcy)$
for $j\in [m]$, which can be approximated by the empirical loss values $\Lcal_{j}(\vq) = \Lcal_{j, \targetxhat}(\vq) \triangleq \Lcal(\targetxhat,\sum_{k} q_k \sourcexcyhat)$, where $\targetxhat$ and $\sourcexcyhat$ 
are the empirical distributions of the corresponding observed samples in the two domains.
Accordingly, we can directly obtain the estimate of $\vq^\ast$ by minimizing the joint objective $\sum_{j \in [m]}\Lcal_j(\vq)$.

However, under the scenario of multi-source contaminated label shift (\ref{equation:multiple_LS}), a reliable estimator is needed to ensure resilience against outlier sources.
We consider the following weighted empirical risk minimization framework as a robust proxy for $\Lcal_{\target}(\vq)$:
\begin{equation}\label{equation:robust_w_loss}
	\begin{aligned}
		\vqhat = \argmin_{\vq \in \triangle_{K}} \sum_{j \in [m]} w_{j} \Lcal_{j}(\vq) \text{ s.t. } \w = \RW_{\epsilon_h}(\{\Lcal_{j}(\vq) - \Lcal_{j}(\vq') \}),
	\end{aligned}
\end{equation}
where $\epsilon_h$ is some upper bound of the true contamination proportion $\epsilon$, and the robust weighting function $\RW_{\epsilon_h} $ returns a weight vector $\w = (w_{j})_{j \in [m]}$ such that $w_{j} \in \{0, (1 - \epsilon_h)^{-1}m^{-1}\}$ and $\sum_{j} w_{j} = 1$. 
For the sake of simplicity, we assume that $m\epsilon_h$ is an integer. Ideally, to eliminate the effect of outliers, the weights corresponding to outlier sources should all be zero, while the weights for inlier sources should remain positive.
Here $\vq'$ is a rough but robust estimate for $\vq^\ast$, and $\vqhat$ represents a one-step refinement of $\vq'$.
Similar one-step estimation method can be found in the literature of outlier detection \citep{cerioli2010multivariate,ro2015outlier}.
Practically, $\vq'$ can be set as the estimate obtained by
\begin{equation}\label{equation:robust_w_divergence}
	\begin{aligned}
		\vqhat_\Lcal = \argmin_{\vq \in \triangle_{K}} \sum_{j \in [m]} w_{j} \Lcal_{j}(\vq) \text{ s.t. } \w = & \RW_{\epsilon_h}(\{\Lcal_{j}(\vq)\}).
	\end{aligned}
\end{equation}
A multi-step refinement can also be implemented by updating $\vq'$ iteratively.

Various weighting-based methods in the literature of robust mean estimation or outlier detection can be
used as $\RW_{\epsilon_h}$.
To illustrate with an example,
we consider the weights obtained by removing an $\epsilon_h$ proportion of the input set to minimize the sample variance of the remaining observations, i.e.,
\begin{equation}\label{eq:m_weighted_var}
	\widehat{\Jcal} = \argmin_{\Jcal \subset [m], |\Jcal| = (1 - \epsilon_h) m} |\Jcal|^{-1} \sum_{j \in \Jcal} \Bigl(z_{j} - |\Jcal|^{-1}\sum_{j \in \Jcal} z_{j}\Bigr)^2,
\end{equation}
and $\RW_{\epsilon_h}(\{z_j\}) = \w = \{w_j\}_{j \in [m]} = \{|\widehat{\Jcal}|^{-1} \indicator_{\{j \in \widehat{\Jcal}\}}\}_{j \in [m]}$,
where $\{z_j\}_{j \in [m]} \subset \Rbb$ is the input set of $\RW_{\epsilon_h}$.
We name this method the minimum weighted variance (MWV) in this work.
Several multivariate robust mean estimators are essentially equivalent to MWV in univariate cases, including the filtering method \citep{diakonikolas2017being} and the minimum covariance determinant method \citep{butler1993asymptotics}.
Other robust methods, including the trimmed loss minimization \citep{rousseeuw1984least} and the median/median-of-means loss minimization \citep{lerasle2011robust, lugosi2019risk,lecue2020robustmachine}, can also be incorporated into our framework.
Additional discussions can be found in the Appendix.

\begin{figure}[tb]
	\centering
	\includegraphics[width=0.8\textwidth]{"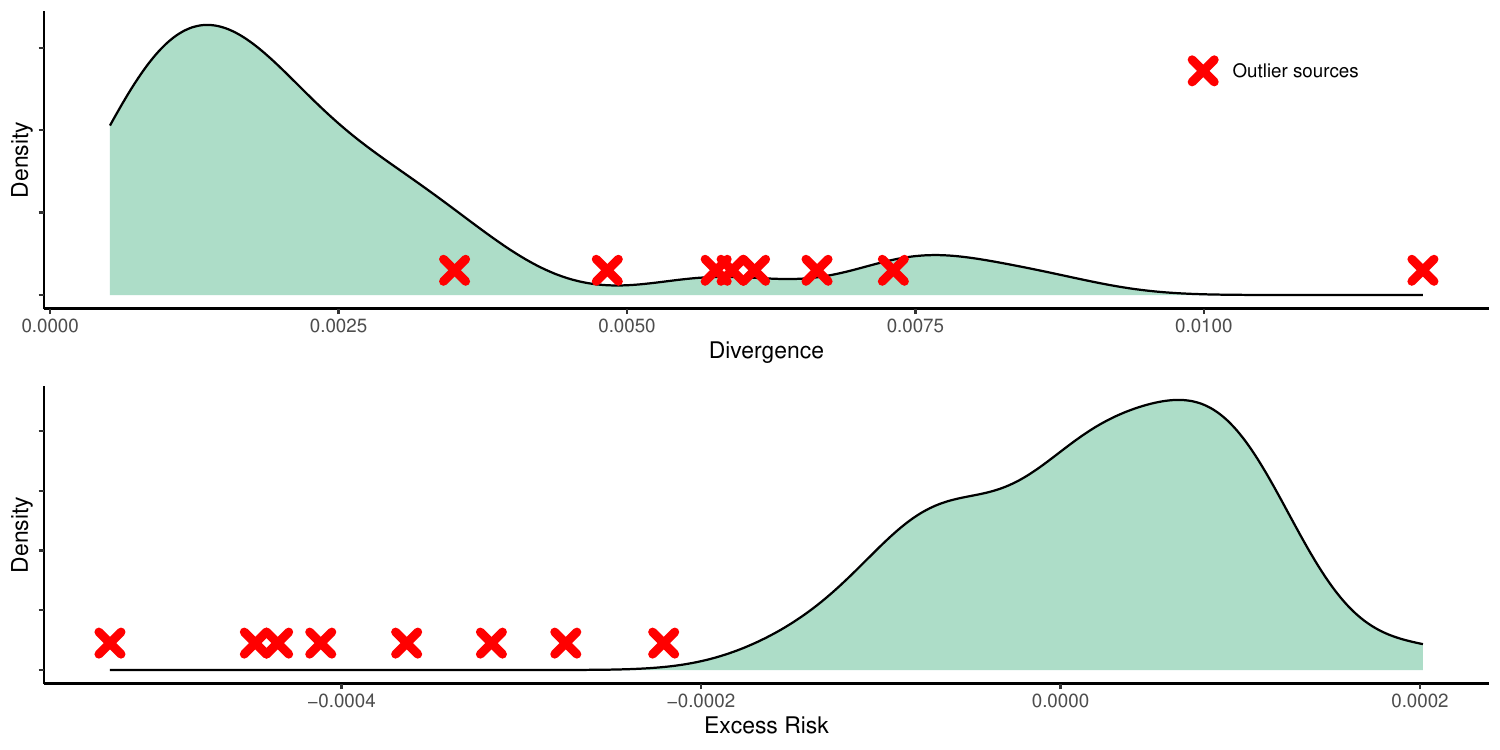"}
	\caption{The density plots of the inliers and the scatter plots of the outliers.
		We set $m = 40$, $n = 100$, $\epsilon = 20\%$ in the experiment.
		The details can be found in Section~\ref{sec:synthetic_data}.
	}
	\label{figure:divergence_vs_excess_risk}
\end{figure}

The robust weights in Eq.~(\ref{equation:robust_w_loss}) are obtained based on the excess risk $\Ecal_{j}(\vq,\vq') = \Lcal_{j}(\vq) - \Lcal_{j}(\vq')$ rather than the divergence $\Lcal_{j}(\vq)$, say the one in Eq.~(\ref{equation:robust_w_divergence}).
Using Eq.~(\ref{equation:robust_w_loss}) brings both theoretical and practical benefits, primarily due to the variance-reduction phenomenon of $\Ecal_{j}(\vq,\vq')$ comparing to $\Lcal_{j}(\vq)$ \citep{mathieu2021excess}.
Specifically, as $\vq'$ approaches $\vq^\ast$, the variance of $\Ecal_{j}(\vq, \vq')$ could be significantly smaller than that of $\Lcal_{j}(\vq)$.
In Section~\ref{sec:linear_model_variance_reduction}, a linear model is employed as an illustrative example to quantify the achieved variance reduction.
As demonstrated in Figure~\ref{figure:divergence_vs_excess_risk}, the smaller variance ensures that the excess risk of inlier sources is more concentrated in a smaller region, thereby making outlier sources easier to identify.
This advantage will be theoretically explored in Section~\ref{sec:theory}, where we show that our refined estimators $\hat{\vq}$ can achieve a faster convergence rate than the initial one $\vq'$.

\begin{remark}[Construction of robust classifiers] \label{remark:robust_classifier}
	Given $\vqhat$, we can further construct a classifier, parameterized by $\theta$, for the target domain $\target$ and estimate $\theta$ in a similar robust weighted empirical risk minimization framework with label shift calibration.
	Specially, we replace $\Lcal_j(\vq)$ and $\vq'$ in Eq.~(\ref{equation:robust_w_loss}) with the weighted empirical risk $\Rcal_j(\theta) = \sum_{i} \hat{q}_{y_{j,i}} / \hat{p}_{j,y_{j,i}} \ell_\theta(x_{j,i},y_{j,i})$ \citep{lipton2018detecting,azizzadenesheli2018regularized} and an initial estimate $\theta'$, respectively.
	Here, $\hat{\vp}_{j}$ is the empirical label proportion in the $j$-th source and $\ell_\theta(\cdot,\cdot)$ is some loss for classification tasks.
\end{remark}

\subsection{Optimization Algorithm}\label{sec:optimization}
Solving Eq.~(\ref{equation:robust_w_loss}) is a non-convex problem because the weight $\w$ varies with $\vq$ and lacks an explicit closed-form expression.
Therefore, we will focus on finding a local minimum.
A key observation is that almost surely (with respect to samples) for almost all $(\vq,\vq') \in \triangle_{K} \times \triangle_{K}$ (with respect to Lebesgue measure), there exists a neighbor around $(\vq,\vq')$ where $\RW_{\epsilon_h}(\{\Ecal_{j}(\vq,\vq')\})$ remains unchanged.
This implies that the derivative of $\w$ with respect to $\vq$ is zero except for a set of measure zero, almost surely.
For fixed $\{w_{j}\}$,
$ \sum_{j \in [m]} w_{j} \Lcal_{j}(\vq)$ is convex with repsect to $\vq$ when $\Lcal_j(\vq)$'s are convex.
Therefore, we can iteratively update $\vq$ and $\w$.
The gradient descent can be applied for updating $\vq$ with the gradient equals to $ \sum_{j \in [m]} w_{j} \nabla_{\vq} \Lcal_{j}(\vq)$, assuming $\Lcal_j(\vq)$'s are differentiable with respect to $\vq$.
We summarize the procedure in Algorithm~\ref{algorithm}.

\begin{algorithm}[tb]
	\label{algorithm}
	\KwIn{Empirical divergence $\{\Lcal_j(\vq)\}$, the hyperparameter $ \epsilon_h $, a rough estimate $\vq'$, a randomized initial value $\vq_0$, the maximum number of iterations $ T $, and the learning rate sequence $ \{\gamma_t\} $}
	\KwOut{Label proportion estimator $ \vqhat $ }
	\For{$ t = 1,\ldots,T $}{
	Update $ \w $: $\w_t = \RW_{\epsilon_h}( \{\Lcal_{j}(\vq_{t-1}) - \Lcal_{j}(\vq')\}_{j \in [m]} )$\\
	Update $ \vq $: $ \vq_{t} = \operatorname{Prox}_{\triangle_{K}}(\vq_{t-1} - \gamma_t \sum_{j \in [m]}w_{t,j} \nabla_{\vq} \Lcal_{j}(\vq_{t-1})) $, where $ \operatorname{Prox}_{\triangle_{K}}(\cdot) $ is projection operator onto $ \triangle_{K} $
	}
	\Return{$ \vqhat = \vq_{T} $}
	\caption{Optimization procedure for Eq.~(\ref{equation:robust_w_loss})}
\end{algorithm}

\subsection{The Implementation via Maximum Mean Discrepancy}

We use the MMD for measuring the divergence between $\targetx$ and $\sum_k q_k \sourcexcy$, which is a frequently used measure and widely adopted in domain adaptation tasks \citep{pan2010domain,long2013transfer}.
We refer to \cite{gretton2012kernel} for a detailed introduction.

The MMD between the marginal $\targetx$ and the mixed distribution $\sum_{k=1}^{K} q_k \targetxcy$, up to a constant not depending on $\vq$, can be represented as
\begin{align}\label{equation:source_population_MMD}
	\MMD\Bigl(\targetx,\sum_{k=1}^{K} q_k \targetxcy\Bigr) 
	\propto \vq^\top \mA \vq - 2 \vq^\top \vb ,
\end{align}
where 
$ \mA = (A_{k,k^\prime})_{k,k^\prime = 1}^{K} \in \real^{K\times K} $, $ \vb = (b_k)_{k=1}^{K} \in \real^K $ are defined as
\begin{equation*}
	\begin{aligned}
		A_{k,k^\prime} & = \E_{x, x^\prime} [\kernel(x,x^\prime)], \quad x \sim \targetxcy,\ x^\prime \sim \target_{X|Y = k^\prime}, \\
		b_k & = \E_{x, x^\prime} [\kernel(x,x^\prime)] , \quad x \sim \targetxcy,\ x^\prime \sim \targetx,
	\end{aligned}
\end{equation*}
for independent $ x $ and $ x^\prime $, and $\kernel: \Xcal\times \Xcal \rightarrow \real$ is a symetric kernel.
Under suitable conditions, $\MMD(\targetx,\sum_k q_k \targetxcy) = 0$ if and only if $\targetx = \sum_k q_k \targetxcy$.
Consequently,
we can use samples to estimate $\mA$ and $\vb$ \citep{gretton2012kernel,yan2017mind}, and obtain an unbiased estimate of the MMD in (\ref{equation:source_population_MMD}):
\begin{equation}\label{equation:MMD}
	\MMDhat\bigl(\vq;\Dcal_0,\Dcal_{j}\bigr) = \vq^\top \hat{\mA}_{j} \vq - 2 \vq^\top \hat{\vb}_{j}, 
\end{equation}
where $\hat{\mA}_{j} = (\hat{A}_{j,k,k^\prime})_{k,k^\prime = 1}^K$, $\hat{\vb}_{j} = (\hat{b}_{j,k})_{k = 1}^K$,
$\hat{A}_{j,k,k} = \frac{1}{n_{j,k} (n_{j,k} - 1)} \sum_{x, x^\prime \in \Dcal_{j,k}, x \neq x^\prime}\kernel(x,x^\prime)$,
$\hat{A}_{j,k,k^\prime} = \frac{1}{n_{j,k} n_{j,k^\prime}} \sum_{x \in \Dcal_{j,k}} \sum_{ x^\prime \in \Dcal_{j,k^\prime}} \kernel(x,x^\prime)$, and $\hat{b}_{j,k} = \frac{1}{n_{j,k} N} \sum_{x \in \Dcal_{j,k}} \sum_{x^\prime \in \Dcal_0}\kernel(x,x^\prime)$.
Algorithm~\ref{algorithm} can be applied with $\Lcal_{j}(\vq)$ replaced by $\MMDhat (\vq;\Dcal_0,\Dcal_{j})$.

\section{Theory}\label{sec:theory}
In this section, under the multi-source contaminated label shift model in Eq.~(\ref{equation:multiple_LS}), we first present theoretical results for the proposed method with general distribution divergence and robust weighting functions.
Subsequently, by specifying the distribution divergence as MMD and the robust weighting method as MWV, we derive the specific convergence rate of the proposed method.

Denote $\Ecal_{\target}(\vq,\vq^\prime) = \Lcal_{\target}(\vq) - \Lcal_{\target}(\vq^\prime)$ as the expected excess risk.
For general divergence and robust weighting methods, we consider the following assumptions.
\begin{assumption}\label{ass:convex}
	The empirical loss functions $\{\Lcal_j(\vq)\}$ are differentiable, and that the population loss $\Lcal_{\target}(\vq)$ satisfies $ \Lcal_\target(\vq) - \Lcal_\target(\vq^\ast) \geq \lambda \norm{\vq - \vq^\ast}_2^{2}$ for some $\lambda > 0$ and $\forall \vq \in \triangle_{K} $.
\end{assumption}

Assumption~\ref{ass:convex} ensures that the gradients of the empirical losses are well-defined and the population loss $\Lcal_{Q}(\vq)$ is strongly convex. The strong convexity ensures that $\vq^\ast$ is the unique minimizer of $\Lcal_\target(\vq)$.

\begin{assumption}\label{ass:uniform_converge_excess_risk}
	There exists a sequence $r_n = o(1)$ such that the robust excess risk estimation $\hat{\Ecal}(\vq_1,\vq_2) = \RE_{\epsilon_h} (\{\Ecal_{j}(\vq_1,\vq_2)\}_{j \in [m]})$, defined as the weighted mean estimation with the weight returned by $\RW_{\epsilon_h} (\{\Ecal_{j}(\vq_1,\vq_2)\}_{j \in [m]})$, satisfies that with probability at least $1 - \tau$,
	\[
		\sup_{\vq_1,\vq_2 \in \triangle_{K}} \frac{\abs{\hat{\Ecal}(\vq_1,\vq_2) - \Ecal_Q(\vq_1,\vq_2)}}{\max(\norm{\vq_1 - \vq^\ast}_{2} + \norm{\vq_2 - \vq^\ast}_{2}, r_n)} \lesssim r_n.
	\]
\end{assumption}

In Assumption~\ref{ass:uniform_converge_excess_risk}, the inclusion of the maximum in the denominator serves to prevent the possibility of division by zero. This assumption primarily asserts that the mean estimators derived from the robust weighting method can robustly and uniformly approximate the target $\Ecal_Q(\vq_1, \vq_2)$ for all pairs $(\vq_1, \vq_2)$. This uniform convergence is crucial for deriving the convergence rate of the proposed robust weighted risk minimization method.

Treat $\RW_{\epsilon_h}(\{\Ecal_{j}(\vq, \vq^\prime\})$ as a function of $\vq$, i.e., $\w(\vq) = \RW_{\epsilon_h}(\{\Ecal_{j}(\vq, \vq^\prime\})$.
In this context, we impose an additional identifiability assumption on the weighting function $\w(\vq)$.

\begin{assumption}\label{ass:weight_identifiability}
	There exists a constant $C > 0$ such that $\w(\vq)$ remains unchanged in the region $\{\vq: \lVert \vq - \vq^\ast \rVert_2 \le C \sqrt{r_n}\}$.
\end{assumption}

Given that $r_n = o(1)$ and the weights $\{w_{j} \in \{0, (1 - \epsilon_h)^{-1}m^{-1}\} \}$ are discrete, Assumption~\ref{ass:weight_identifiability} ensures that the identified outlier set $\{j \in [m]: w_j(\vq) = 0\}$ remains consistent within the small region $\{\vq: \lVert \vq - \vq^\ast \rVert_2 \le C \sqrt{r_n}\}$.
This assumption is generally reasonable, as it facilitates stability in the identification of outliers despite the inherent variability in the data.
When employing a continuous loss function $\Lcal_j(\vq)$, such as the MMD loss, in conjunction with a standard weighting method like the MWV, the weight function $\w(\vq)$ is characterized as piecewise constant within the domain $\triangle_{m}$. 
Furthermore, it remains locally unchanged almost everywhere with respect to the Lebesgue measure.

Building upon the three assumptions delineated, we present the following theorem for the proposed framework as specified in Eq.~(\ref{equation:robust_w_loss}) with general distribution divergence and robust weighting methods.
\begin{theorem}\label{theorem:general_one_step_update}
	Under Assumptions~\ref{ass:convex}, \ref{ass:uniform_converge_excess_risk} and \ref{ass:weight_identifiability}, if the rough estimate $\vq^\prime$ has an error bound $\norm{\vq^\prime - \vq^\ast}_{2} \leq r_n^\prime$, then there exists a local minimizer $\vqhat$ of Eq.~(\ref{equation:robust_w_loss}) that satisfies the following error bound with probability at least $1 - \tau$,
	\begin{equation}\label{eq:err_onestep_general}
		\norm{\vqhat - \vq^\ast}_{2} \lesssim \max\{r_n,\sqrt{r_n r_n^\prime}\}.
	\end{equation}
\end{theorem}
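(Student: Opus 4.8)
The plan is to localize the entire analysis to the ball $B = \{\vq \in \triangle_{K} : \norm{\vq - \vq^\ast}_2 \le C\sqrt{r_n}\}$ supplied by Assumption~\ref{ass:weight_identifiability}, on which the weights are frozen at $\w^\ast := \w(\vq^\ast)$, and then to produce a local minimizer of the objective strictly inside $B$ by comparing its value on a small sphere around $\vq^\ast$ with its value at $\vq^\ast$. Writing $F(\vq) = \sum_{j \in [m]} w_j(\vq)\Lcal_j(\vq)$ for the objective in Eq.~(\ref{equation:robust_w_loss}), the strong convexity from Assumption~\ref{ass:convex} will provide the upward curvature, the uniform control of Assumption~\ref{ass:uniform_converge_excess_risk} will quantify the noise, and Assumption~\ref{ass:weight_identifiability} will let me treat $F$ as a fixed-weight objective near $\vq^\ast$.

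First I would record the algebraic identity that converts the frozen weights into a quantity governed by Assumption~\ref{ass:uniform_converge_excess_risk}. Using $w_j(\vq) = w_j^\ast$ for every $\vq \in B$ and inserting $\pm \Lcal_j(\vq')$ gives, for $\vq \in B$,
\[
F(\vq) - F(\vq^\ast) = \sum_{j} w_j^\ast \Ecal_j(\vq,\vq') - \sum_{j} w_j^\ast \Ecal_j(\vq^\ast,\vq') = \hat{\Ecal}(\vq,\vq') - \hat{\Ecal}(\vq^\ast,\vq').
\]
The two right-hand identities are the crux of the proof: since $\w(\vq) = \w^\ast = \RW_{\epsilon_h}(\{\Ecal_j(\vq^\ast,\vq')\})$ holds throughout $B$, both weighted sums coincide exactly with the self-consistent robust means $\RE_{\epsilon_h}(\{\Ecal_j(\cdot,\vq')\})$, so Assumption~\ref{ass:uniform_converge_excess_risk} applies to each. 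The hard part is precisely this matching: without the weight-freezing of Assumption~\ref{ass:weight_identifiability}, the first sum would pair the weights $\w^\ast$ with the excess risks $\Ecal_j(\vq,\vq')$, whose own robust weights differ, and Assumption~\ref{ass:uniform_converge_excess_risk} — which controls only self-consistent robust means — could not be invoked.

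Next I would insert the uniform approximation and exploit a convenient cancellation. At the population level $\Ecal_{\target}(\vq,\vq') - \Ecal_{\target}(\vq^\ast,\vq') = \Lcal_{\target}(\vq) - \Lcal_{\target}(\vq^\ast) \ge \lambda\norm{\vq-\vq^\ast}_2^2$ by Assumption~\ref{ass:convex}, so the $\vq'$-dependence drops out and only enters through the Assumption~\ref{ass:uniform_converge_excess_risk} error terms via $\norm{\vq' - \vq^\ast}_2 \le r_n'$. Writing $t = \norm{\vq-\vq^\ast}_2$ and bounding $\max(t + r_n', r_n) \le t + r_n' + r_n$, I would obtain, with probability at least $1-\tau$,
\[
F(\vq) - F(\vq^\ast) \ge \lambda t^2 - C' r_n\bigl(t + r_n' + r_n\bigr),
\]
where $C'$ absorbs the constants of Assumption~\ref{ass:uniform_converge_excess_risk}.

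Finally I would close the argument by a boundary comparison. Solving the quadratic-in-$t$ lower bound shows it is strictly positive once $t \ge \rho$ for some $\rho \asymp \max\{r_n,\sqrt{r_n r_n'}\}$; for $n$ large (with $r_n' = o(1)$) this radius satisfies $\rho \le C\sqrt{r_n}$, so the compact set $\{\norm{\vq - \vq^\ast}_2 \le \rho\} \cap \triangle_{K}$ lies inside $B$. Continuity forces $F$ to attain its minimum over this set at some $\vqhat$; since $F(\vq) > F(\vq^\ast) \ge F(\vqhat)$ on the sphere $\norm{\vq - \vq^\ast}_2 = \rho$, the minimizer $\vqhat$ cannot sit on that sphere, whence $\norm{\vqhat - \vq^\ast}_2 < \rho \lesssim \max\{r_n,\sqrt{r_n r_n'}\}$. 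Because the weights are frozen on $B$, the full objective of Eq.~(\ref{equation:robust_w_loss}) agrees with the fixed-weight $F$ on a neighborhood of $\vqhat$, so $\vqhat$ is a genuine local minimizer of Eq.~(\ref{equation:robust_w_loss}), establishing Eq.~(\ref{eq:err_onestep_general}).
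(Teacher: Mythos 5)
Your proposal is correct, and it uses each assumption in the same role as the paper, but the mechanism by which the local minimizer is produced is genuinely different. The analytic core is shared: both proofs reduce the objective difference to self-consistent robust excess-risk means, apply Assumption~\ref{ass:uniform_converge_excess_risk} twice (at $(\vq,\vq^\prime)$ and at $(\vq^\ast,\vq^\prime)$), and combine with the strong convexity of Assumption~\ref{ass:convex} to reach the same quadratic inequality $\lambda \norm{\vq-\vq^\ast}_2^2 \lesssim r_n \max(\norm{\vq-\vq^\ast}_2, r_n^\prime, r_n)$, whose solution is $\max\{r_n,\sqrt{r_n r_n^\prime}\}$. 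The paper, however, first passes to the auxiliary problem in which $\Lcal_j$ is replaced by $\Ecal_j(\cdot,\vq^\prime)$, takes its \emph{global} minimizer $\vqhat_\Ecal$ over all of $\triangle_K$, uses the basic inequality $\hat{\Ecal}(\vqhat_\Ecal,\vq^\prime) \le \hat{\Ecal}(\vq^\ast,\vq^\prime)$, and only afterwards invokes Assumption~\ref{ass:weight_identifiability} to argue that $\vqhat_\Ecal$ lands in the frozen-weight region and is therefore a local minimizer of Eq.~(\ref{equation:robust_w_loss}). You instead work entirely inside the frozen-weight ball $B$ from the start: the identity $F(\vq)-F(\vq^\ast)=\hat{\Ecal}(\vq,\vq^\prime)-\hat{\Ecal}(\vq^\ast,\vq^\prime)$ on $B$ (which is exactly right, since $\w(\vq)=\w(\vq^\ast)$ there, so the fixed-weight sums coincide with the self-consistent robust means), followed by compactness and a sphere comparison at radius $\rho \asymp \max\{r_n,\sqrt{r_n r_n^\prime}\}$. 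Your route buys a point of rigor the paper glosses: $\hat{\Ecal}(\cdot,\vq^\prime)$ is only piecewise continuous in $\vq$ (e.g., the MWV mean can jump when the selected subset switches), so the existence of a global minimizer of the auxiliary problem over $\triangle_K$ is not automatic, whereas on your ball the weights are frozen, $F$ is continuous, and the constrained minimum exists by compactness; your boundary comparison then places it strictly inside, making local minimality of the original self-consistent objective immediate. The paper's route is shorter and identifies the estimator as the minimizer of a concrete auxiliary program. One step to keep honest: your containment $\rho \le C\sqrt{r_n}$ (needed so the $\rho$-ball sits inside $B$) silently absorbs the constant $M$ in $\rho = M\max\{r_n,\sqrt{r_n r_n^\prime}\}$, which your parenthetical $r_n^\prime = o(1)$ handles; the paper makes the identical constant-absorbing move via $r_n^\prime \le 1$ and $r_n = o(1)$, so you are no less rigorous than the original on this point.
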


Theorem~\ref{theorem:general_one_step_update} establishes a connection between the error bound $O(r_n)$ of the robust weighted mean estimator for the excess risk as articulated in Assumption~\ref{ass:uniform_converge_excess_risk} and the error bound of the estimator derived from the proposed robust weighted risk minimization framework in Eq.~(\ref{equation:robust_w_loss}).
This theorem indicates that once the error rate $r_n^\prime$ associated with the rough estimate $\vq^\prime$ exceeds $r_n$, a one-step refinement process will reduce the estimation error.
Building upon this result, we subsequently present a corollary regarding the multi-step refinement estimator.

\begin{corollary}\label{corollary:general_multistep_update}
	Under the same assumptions as in Theorem~\ref{theorem:general_one_step_update}, consider the multi-step refinement procedure, where the $t$-th estimate $\vqhat_t$ is obtained by setting $\vq^\prime=\vqhat_{t-1}$ in Eq.~(\ref{equation:robust_w_loss}).
	If the initial estimate $\vqhat_0$ satisfies $\norm{\vqhat_0 - \vq^\ast}_2 \leq r_n^\prime$ with $r_n^\prime \gtrsim r_n$ and $\vqhat_t$'s are local minimizers described in Theorem~\ref{theorem:general_one_step_update}, then with probability at least $1-\tau$, we have
    \begin{equation}\label{eq:err_multistep_general}
        \norm{\vqhat_{t} - \vq^\ast}_2 \lesssim r_n^{1-2^{-t}} (r_n^\prime)^{2^{-t}} \mbox{and}\, \lim_{t \rightarrow \infty} \norm{\vqhat_{t} - \vq^\ast}_2 \lesssim r_n.
    \end{equation}
\end{corollary}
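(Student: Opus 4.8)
The plan is to prove both assertions by induction on $t$, using Theorem~\ref{theorem:general_one_step_update} as the single-step engine while taking care that the implied constant in $\lesssim$ does not accumulate across iterations. First I would fix the high-probability event on which Assumption~\ref{ass:uniform_converge_excess_risk} holds; because that uniform bound is stated simultaneously for all pairs $(\vq_1,\vq_2) \in \triangle_{K} \times \triangle_{K}$, the conclusion of Theorem~\ref{theorem:general_one_step_update} is valid \emph{deterministically} on this event for every admissible rough estimate $\vq^\prime$. Consequently, applying the theorem repeatedly along the refinement path $\vqhat_0, \vqhat_1, \ldots$ incurs no union-bound penalty, and the whole argument lives on one event of probability at least $1-\tau$. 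Writing $e_t = \norm{\vqhat_t - \vq^\ast}_2$ and invoking Theorem~\ref{theorem:general_one_step_update} with $\vq^\prime = \vqhat_{t-1}$ (so that one may take $r_n^\prime = e_{t-1}$ there) would yield the recursion
\[
	e_t \le C_0 \max\{r_n,\ \sqrt{r_n\, e_{t-1}}\}
\]
for a fixed constant $C_0$ coming from the theorem.

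With this recursion in hand, I would establish $e_t \lesssim r_n^{1-2^{-t}}(r_n^\prime)^{2^{-t}}$ by induction. The base case $t=0$ is immediate from $e_0 \le r_n^\prime$. For the inductive step, substituting $e_{t-1} \lesssim r_n^{1-2^{-(t-1)}}(r_n^\prime)^{2^{-(t-1)}}$ into the square-root branch and using $\tfrac12(2 - 2^{-(t-1)}) = 1 - 2^{-t}$ together with $\tfrac12 2^{-(t-1)} = 2^{-t}$ gives
\[
	\sqrt{r_n\, e_{t-1}} \lesssim r_n^{1-2^{-t}}(r_n^\prime)^{2^{-t}},
\]
while the hypothesis $r_n^\prime \gtrsim r_n$ yields $(r_n^\prime)^{2^{-t}} \gtrsim r_n^{2^{-t}}$ and hence $r_n \lesssim r_n^{1-2^{-t}}(r_n^\prime)^{2^{-t}}$, so the constant branch of the maximum is dominated as well. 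This closes the induction and delivers the first claimed bound.

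The main obstacle is to confirm that the constant hidden in $\lesssim$ stays bounded uniformly in $t$, since naive chaining of Theorem~\ref{theorem:general_one_step_update} $t$ times would suggest a blow-up like $C_0^{t}$. I would track it explicitly: carrying a constant $C_t$ through the induction produces an update of the form $C_t = C_0 \max\{\kappa, \sqrt{C_{t-1}}\}$ for a fixed $\kappa$ absorbing the factor from $r_n^\prime \gtrsim r_n$, which for large $C_{t-1}$ reduces to $\log C_t = \log C_0 + \tfrac12 \log C_{t-1}$. This affine recursion has the globally attracting fixed point $\log C^\ast = 2\log C_0$, so $C_t$ is bounded uniformly in $t$, and the product bound therefore holds with a single constant throughout. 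For the second assertion I would note that for fixed $n$ the ratio $r_n^\prime/r_n$ is a fixed finite number, so $(r_n^\prime/r_n)^{2^{-t}} \to 1$ as $t \to \infty$, whence $\limsup_{t\to\infty} e_t \lesssim r_n$. The one remaining point to verify is that each intermediate estimate $\vqhat_{t-1}$ genuinely satisfies the hypotheses of Theorem~\ref{theorem:general_one_step_update} — in particular Assumption~\ref{ass:weight_identifiability} taken relative to $\vq^\prime = \vqhat_{t-1}$; this holds because the errors $e_t$ are eventually decreasing, so all iterates remain within a fixed neighborhood of $\vq^\ast$ on the good event.
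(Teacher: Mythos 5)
Your proposal is correct and takes essentially the same route as the paper: the paper's (two-sentence) proof likewise observes that the probability-$(1-\tau)$ event from Assumption~\ref{ass:uniform_converge_excess_risk} is uniform over all initial estimates $\vq^\prime$, so Theorem~\ref{theorem:general_one_step_update} can be applied iteratively on a single good event with no union-bound penalty, yielding the recursion and the stated rate. Your explicit induction on the exponents and, in particular, your tracking of the constants via $C_t = C_0\max\{\kappa,\sqrt{C_{t-1}}\}$ (showing they stay bounded uniformly in $t$) supply details the paper leaves implicit, but they do not change the argument.
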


As the number of steps $t$ in Corollary~\ref{corollary:general_multistep_update} increases, the error rate of the multi-step refinement estimator ultimately reduces to $O(r_n)$.

To illustrate the enhancement of the robust weighting method based on the excess risk values in comparison to the conventional approach that relies on loss values, we introduce Assumption~\ref{ass:uniform_converge_divergence}, which controls the error in robust loss estimation.

\begin{assumption}\label{ass:uniform_converge_divergence}
	The $\hat{\Lcal}(\vq) = \RE_{\epsilon_h} (\{\Lcal_{j}(\vq)\}_{j \in [m]})$ satisfies that with probability at least $1 - \tau$,
	\[\sup_{\vq \in \triangle_{K}} | \hat{\Lcal}(\vq) - \Lcal_\target(\vq)| \lesssim r_n,\]
    with the same sequence of error rate $r_n$ in Assumptions~\ref{ass:uniform_converge_excess_risk}.
\end{assumption}
We demonstrate the equivalence of the error rate $r_n$ in Assumptions~\ref{ass:uniform_converge_excess_risk} and \ref{ass:uniform_converge_divergence} specifically using the MWV method as an example.
By substituting Assumption~\ref{ass:uniform_converge_excess_risk} with Assumption~\ref{ass:uniform_converge_divergence}, we can derive the error bound of the estimator $\vqhat_\Lcal$ defined in Eq.~(\ref{equation:robust_w_divergence}).

\begin{theorem}\label{theorem:general_robust_loss}
	Under Assumptions~\ref{ass:convex} and \ref{ass:uniform_converge_divergence}, the minimizer $\vqhat_\Lcal$ in Eq.~(\ref{equation:robust_w_divergence}) satisfies the following error bound with probability at least $1 - \tau$,
	\begin{equation}\label{eq:err_minloss_general}
	    \norm{\vqhat_\Lcal - \vq^\ast}_{2} \lesssim \sqrt{r_n}.
	\end{equation}
\end{theorem}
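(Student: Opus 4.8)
The plan is to recognize that $\vqhat_\Lcal$ is simply the minimizer of the robust loss functional $\hat{\Lcal}$ and then run the standard basic-inequality argument for M-estimators. The first step is the key identification: by the definition of the robust mean operator $\RE_{\epsilon_h}$ (the weighted mean induced by the weights returned by $\RW_{\epsilon_h}$), the $\vq$-dependent reweighted objective in Eq.~(\ref{equation:robust_w_divergence}) coincides pointwise with $\hat{\Lcal}(\vq)=\RE_{\epsilon_h}(\{\Lcal_j(\vq)\}_{j\in[m]})$, because for each fixed $\vq$ the constraint forces $\w=\RW_{\epsilon_h}(\{\Lcal_j(\vq)\})$, so that $\sum_{j\in[m]}w_j\Lcal_j(\vq)=\hat{\Lcal}(\vq)$. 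Hence $\vqhat_\Lcal=\argmin_{\vq\in\triangle_{K}}\hat{\Lcal}(\vq)$, which is exactly the object controlled by Assumption~\ref{ass:uniform_converge_divergence}.

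Working on the good event of probability at least $1-\tau$ on which the uniform bound $\sup_{\vq\in\triangle_{K}}|\hat{\Lcal}(\vq)-\Lcal_\target(\vq)|\le Cr_n$ of Assumption~\ref{ass:uniform_converge_divergence} holds, I would invoke the basic inequality $\hat{\Lcal}(\vqhat_\Lcal)\le\hat{\Lcal}(\vq^\ast)$, valid because $\vq^\ast\in\triangle_{K}$ is feasible and $\vqhat_\Lcal$ is a global minimizer. Applying the uniform bound at both $\vqhat_\Lcal$ and $\vq^\ast$ then transfers this comparison to the population loss,
\[
\Lcal_\target(\vqhat_\Lcal)-\Lcal_\target(\vq^\ast)\le\bigl(\hat{\Lcal}(\vqhat_\Lcal)-\hat{\Lcal}(\vq^\ast)\bigr)+2Cr_n\le 2Cr_n,
\]
where the last inequality uses the basic inequality to discard the first term.

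Finally I would convert this excess-risk control into an $\ell_2$ bound through the quadratic-growth (strong-convexity) condition of Assumption~\ref{ass:convex}: since $\lambda\norm{\vqhat_\Lcal-\vq^\ast}_2^2\le\Lcal_\target(\vqhat_\Lcal)-\Lcal_\target(\vq^\ast)\le 2Cr_n$, dividing by $\lambda$ and taking square roots yields $\norm{\vqhat_\Lcal-\vq^\ast}_2\lesssim\sqrt{r_n}$. Every step here is routine; I expect the only point needing genuine care to be the first one, namely making the identification $\sum_{j\in[m]} w_j\Lcal_j(\vq)=\hat{\Lcal}(\vq)$ fully rigorous (so that the global-minimizer basic inequality is legitimately available despite the weights depending on $\vq$), and checking that the deviation constant and the growth constant $\lambda$ are uniform over $\triangle_{K}$ so that no hidden $\vq$-dependence leaks into the suppressed constants. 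The resulting $\sqrt{r_n}$ rate, rather than the $O(r_n)$ rate of Theorem~\ref{theorem:general_one_step_update}, is the expected price of controlling the loss values directly instead of the lower-variance excess risk.
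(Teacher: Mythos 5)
Your proposal is correct and follows essentially the same route as the paper's proof: the paper likewise treats $\vqhat_\Lcal$ as the minimizer of $\hat{\Lcal}(\vq)$, combines the basic inequality with the uniform deviation bound of Assumption~\ref{ass:uniform_converge_divergence} to obtain $\lambda\norm{\vqhat_\Lcal-\vq^\ast}_2^2 \le 2\sup_{\vq\in\triangle_K}|\hat{\Lcal}(\vq)-\Lcal_\target(\vq)| \lesssim r_n$, and takes square roots using the quadratic growth of Assumption~\ref{ass:convex}. The only difference is that you spell out the identification of the weight-constrained objective in Eq.~(\ref{equation:robust_w_divergence}) with $\hat{\Lcal}(\vq)$, a point the paper leaves implicit.
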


By Theorem~\ref{theorem:general_robust_loss}, if we choose $\vqhat_0 = \vqhat_\Lcal$ in Corollary~\ref{corollary:general_multistep_update}, then $r_n^\prime = O(\sqrt{r_n})$.
For the $t$-th iteration ($t \ge 1$), the error bound of $\vqhat_t$ becomes $O(r_n^{1 - 2^{-t-1}})$, which is considerably faster than the error bound of $\vqhat_{\Lcal}$, $O(\sqrt{r_n})$, in Theorem~\ref{theorem:general_robust_loss}.

By designating the loss function as MMD and the robust weighting function as MWV, we determine the error rate $r_n$ for the proposed method, thereby highlighting its robustness and efficiency. 
For the analysis, we need the following assumptions regarding the data-generating process and the properties of the MMD loss.
\begin{assumption}\label{ass:sample_balance}
	For inlier sources, $ \min_{k \in [K], j \in \Ical} n_{j,k} \geq n\psi > 0 $ for some constant $\psi > 0$.
\end{assumption}

Assumption~\ref{ass:sample_balance} posits that the sample sizes across all categories and domains are balanced, which is a criterion frequently employed in the literature \citep{lipton2018detecting,azizzadenesheli2018regularized}.
This assumption serves a technical purpose, enabling a streamlined discussion and proof of the subsequent results.

\begin{assumption}\label{ass:linearly_independent}
	Under the label shift model, for the target domain and the inlier domains $ j \in \Ical$, the conditional distributions $\{\targetxcy\}_{k \in [K]} = \{\sourcexcy\}_{k \in [K]}$, are linearly independent.
\end{assumption}

Assumption~\ref{ass:linearly_independent} is also employed in \cite{garg2020unified} to guarantee the identifiability of $\vq^\ast$.
It is important to note that Assumptions~\ref{ass:sample_balance}--\ref{ass:linearly_independent} are imposed on inlier sources, while no requirements are placed on outlier sources.
The data in the outlier sources can follow arbitrary distribution and may violate the label shift assumption.

\begin{assumption}\label{ass:kernel}
	(i) The kernel $\Kcal$ is symmetric, positive and bounded above by $\kappa > 0$, i.e., $\kernel(x_1,x_2) = \kernel(x_2,x_1) \leq \kappa$ and $\sum_{i,j \in [s]} a_{i} a_{j}\kernel(x_i,x_j) > 0$ for any positive integer $s$ and vector $(a_i)_{i \in [s]} \in \real^s \setminus \{\boldsymbol{0}\}$, and $\{x_i\}_{i \in [s]} \subset \Xcal$.
	(ii) The kernel $\kernel$ is characteristic, i.e., $\kernel(x_1,\cdot) \neq \kernel(x_2,\cdot)$ for distinct $x_1, x_2 \in \Xcal $.
	(iii) $\max_{k \in [K]} \lVert \Cov(\phi_k) \rVert_{\mathrm{op}} < C$ for some constant $C$ where $\phi_k = (\Kcal(x_{k1}, x_{g2}))_{g \in [K]}$, and $\{x_{k1}, x_{k2}\}$ are i.i.d. samples from $\targetxcy$.
\end{assumption}

The characteristic kernel assumption on $\Kcal$ guarantees that $\MMD(\targetx,\sum_k q_k \targetxcy) = 0$ if and only if $\targetx = \sum_k q_k \targetxcy$ \citep{fukumizu2007kernel,gretton2012kernel}.
Many commonly used kernels, e.g.,
the Gaussian kernel $\Gcal(x_1,x_2)= \exp\{-{\norm{x_1 - x_2}_2^2}/(2\sigma^2)\}$, satisfy Assumption~\ref{ass:kernel} \citep{sriperumbudur2008injective}.
Assumptions~\ref{ass:linearly_independent} and \ref{ass:kernel} together guarantee that the matrix $\mA$ in Eq.~(\ref{equation:source_population_MMD}) is positive definite, which further ensures Assumption~\ref{ass:convex}.

With the above assumptions, we now focus on determining the rate $r_n$. 
Specifically, consider samples $\{ z_j \}_{j=1}^m$ that are independently generated from distributions with a common mean and variances upper bounded by $\sigma_z^2$. If a proportion $\epsilon$  of these samples is arbitrarily corrupted, any robust mean estimate derived from $\{ z_j \}_{j=1}^m$ will maintain a minimax lower bound on the error rate of $\sigma_z \sqrt{\epsilon}$ \citep{catoni2012challenging, chen2016general}. It is important to note that the empirical loss values $\{ \Lcal_{j}(\vq) = \Lcal_{j, \targetxhat}(\vq) = \Lcal(\targetxhat,\sum_{k} q_k \sourcexcyhat)\}$ are correlated, as they derive from the same empirical distribution $\targetxhat$, which similarly affects the excess risk values $\{ \Ecal_{j}(\vq_1,\vq_2)\}$. 
To align with the standard independence setting, we can instead investigate the values $\{ \Lcal_{j, \targetx}(\vq) = \Lcal(\targetx,\sum_{k} q_k \sourcexcyhat)\}$ and $\{ \Ecal_{j, \targetx}(\vq_1, \vq_2) = \Lcal_{j, \targetx}(\vq_1) - \Lcal_{j, \targetx}(\vq_2)\}$, which are accurate approximations for $\{ \Lcal_{j}(\vq)\}$ and $\{ \Ecal_{j}(\vq_1,\vq_2)\}$, respectively. Subsequently, the properties of the empirical ones can be obtained.

Based on the above guidance, to determine the rate $r_n$ specified in Assumptions~\ref{ass:uniform_converge_excess_risk} and \ref{ass:uniform_converge_divergence}, it is essential to derive the upper bounds of the variances associated with $\{ \Ecal_{j, \targetx}(\vq_1, \vq_2)\}$ and $\{\Lcal_{j, \targetx}(\vq)\}$.The related findings are presented in Lemma~\ref{lemma:upper_var}.
\begin{lemma}\label{lemma:upper_var}
    Under Assumptions~\ref{ass:sample_balance} and \ref{ass:kernel}, for some universal constant $C > 0$, it holds that
     \begin{equation*}
		\Var[\Ecal_{j,\targetx}(\vq_1,\vq_2)] \le \frac{C \norm{\vq_1 - \vq_2}_2^2}{n \psi} \le \frac{C (\norm{\vq_1-\vq^\ast}_{2} + \norm{\vq_2-\vq^\ast}_{2})^2}{n\psi},
	\end{equation*}
    uniformly for all $j \in [m]$ and $(\vq_1, \vq_2) \in \Delta_{K} \times \Delta_{K}$. Similarly, it holds that
    \begin{equation*}
		\Var[\Lcal_{j,\targetx}(\vq)] \le \frac{C}{n\psi},
	\end{equation*}
    uniformly for all $j \in [m]$ and $\vq \in \Delta_K$.
\end{lemma}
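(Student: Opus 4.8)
The plan is to pass to the reproducing kernel Hilbert space $\Hcal$ of $\Kcal$ and write everything in terms of kernel mean embeddings. Let $\phi(x)=\Kcal(x,\cdot)$, let $\hat\mu_{j,k}=n_{j,k}^{-1}\sum_{x\in\Dcal_{j,k}}\phi(x)$ be the empirical embedding of $\sourcexcyhat$, let $\mu_{j,k}=\E\hat\mu_{j,k}$, and let $\mu_Q$ be the population embedding of $\targetx$. Writing $\hat\mu_j(\vq)=\sum_k q_k\hat\mu_{j,k}$, the population-target loss is exactly $\Lcal_{j,\targetx}(\vq)=\norm{\mu_Q-\hat\mu_j(\vq)}_\Hcal^2$, so all randomness enters only through the source-$j$ samples and $\Kcal\le\kappa$ gives $\norm{\phi(x)}_\Hcal\le\sqrt\kappa$. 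Applying the difference-of-squares identity $\norm{a}^2-\norm{b}^2=\langle a-b,a+b\rangle$ with $a=\mu_Q-\hat\mu_j(\vq_1)$ and $b=\mu_Q-\hat\mu_j(\vq_2)$ yields
\[
\Ecal_{j,\targetx}(\vq_1,\vq_2)=\bigl\langle \hat\mu_j(\vq_1-\vq_2),\, \hat\mu_j(\vq_1+\vq_2)-2\mu_Q\bigr\rangle .
\]
The key structural fact is that $\sum_k(q_{1,k}-q_{2,k})=0$ because $\vq_1,\vq_2\in\Delta_K$; this zero-sum constraint is exactly what will produce the $\norm{\vq_1-\vq_2}_2^2$ factor.

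Next I decompose each factor into its mean and fluctuation. Writing $D_k=\hat\mu_{j,k}-\mu_{j,k}$ (independent across $k$, centred, with $\E\norm{D_k}_\Hcal^2\le\kappa/n_{j,k}$), set $m_A=\sum_k(q_{1,k}-q_{2,k})\mu_{j,k}$, $\tilde A=\sum_k(q_{1,k}-q_{2,k})D_k$, and analogously $m_B,\tilde B$ for the second factor. Expanding the inner product produces a constant term $\langle m_A,m_B\rangle$ (no variance) plus three fluctuating terms, so $\Var[\Ecal_{j,\targetx}]\le 3(\Var\langle m_A,\tilde B\rangle+\Var\langle\tilde A,m_B\rangle+\Var\langle\tilde A,\tilde B\rangle)$. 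Using $\sum_k(q_{1,k}-q_{2,k})=0$ to recentre and then Cauchy--Schwarz together with $\norm{\mu_{j,k}}\le\sqrt\kappa$ gives the contrast bound $\norm{m_A}\lesssim\norm{\vq_1-\vq_2}_2$, whereas $\norm{m_B}\le 4\sqrt\kappa=O(1)$ since $\sum_k(q_{1,k}+q_{2,k})=2$. Each linear term is then bounded by its one-dimensional projection variance, a centred average of $n_{j,k}\ge n\psi$ bounded summands (Assumption~\ref{ass:sample_balance}), giving $\Var\langle m_A,\tilde B\rangle\lesssim\norm{m_A}^2/(n\psi)\lesssim\norm{\vq_1-\vq_2}_2^2/(n\psi)$ and likewise $\Var\langle\tilde A,m_B\rangle\lesssim\norm{\vq_1-\vq_2}_2^2/(n\psi)$. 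For the bilinear term I use $\abs{\langle\tilde A,\tilde B\rangle}\le\norm{\tilde A}\norm{\tilde B}$ with the fourth-moment bounds $\E\norm{\tilde A}^4\lesssim(\norm{\vq_1-\vq_2}_2^2/(n\psi))^2$ and $\E\norm{\tilde B}^4\lesssim(n\psi)^{-2}$ for Hilbert-space empirical means of bounded increments, which by Cauchy--Schwarz shows this piece is of strictly higher order $\norm{\vq_1-\vq_2}_2^2/(n\psi)^2$; the operator-norm covariance control in Assumption~\ref{ass:kernel}(iii) supplies the uniform constants here. Summing the three bounds gives the first inequality, and $\norm{\vq_1-\vq_2}_2\le\norm{\vq_1-\vq^\ast}_2+\norm{\vq_2-\vq^\ast}_2$ gives the second.

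For $\Var[\Lcal_{j,\targetx}(\vq)]$ the argument is identical but simpler: I expand $\norm{\mu_Q-\hat\mu_j(\vq)}^2$, drop the constant $\norm{\mu_Q}^2$, and note that now the relevant mean embedding $\sum_k q_k\mu_{j,k}$ has norm $\le\sqrt\kappa=O(1)$ (as $\sum_k q_k=1$) and the fluctuation coefficients satisfy $\sum_k q_k^2\le 1$. The linear-in-fluctuation terms contribute $O(1/(n\psi))$ and the quadratic fluctuation term $\Var\norm{\sum_k q_k D_k}^2$ contributes the higher-order $O(1/(n\psi)^2)$ via Assumption~\ref{ass:kernel}(iii), yielding $\Var[\Lcal_{j,\targetx}(\vq)]\lesssim 1/(n\psi)$ uniformly in $\vq\in\Delta_K$.

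The main obstacle lies in the variance bounds for the two linear terms and the bilinear term: I must show that the $\norm{\vq_1-\vq_2}_2$ factor is genuinely extracted---which hinges on the zero-sum identity $\sum_k(q_{1,k}-q_{2,k})=0$ and the resulting contrast bound on $\norm{m_A}$---and that the cross-covariances among the three fluctuating pieces, all built from the same $D_k$'s, do not spoil the rate. Controlling these uniformly over $(\vq_1,\vq_2)\in\Delta_K\times\Delta_K$, together with the fourth-moment estimates for the bilinear term, is precisely where the boundedness of $\Kcal$ and the covariance bound in Assumption~\ref{ass:kernel}(iii) are needed.
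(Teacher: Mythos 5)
Your RKHS route---kernel mean embeddings plus a mean/fluctuation (Hoeffding-type) decomposition into $\langle m_A,\tilde B\rangle$, $\langle \tilde A, m_B\rangle$, $\langle \tilde A,\tilde B\rangle$---is a clean repackaging of what the paper does coordinatewise: the paper writes $\Ecal_{j,\targetx}(\vq_1,\vq_2)=\sum_{i,h}v_{i,+}v_{h,-}\hat{A}_{i,h}-\sum_i \bar{b}_i v_{i,-}$, enumerates the covariances $\Cov(\hat{A}_{i,h},\hat{A}_{g,k})$ by index-overlap pattern (its Proposition~\ref{prop:bound_cov}), and controls the resulting quadratic forms via $\norm{\vv_+}_1=2$ and the operator-norm bound $\lVert\Cov(\phi_k)\rVert_{\mathrm{op}}\le C$ of Assumption~\ref{ass:kernel}(iii). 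However, the one step you single out as key does not hold as stated. The ``contrast bound'' $\norm{m_A}\lesssim\norm{\vq_1-\vq_2}_2$ is false with a $K$-free constant, and the zero-sum identity $\sum_k(q_{1,k}-q_{2,k})=0$ neither delivers it nor appears anywhere in the paper's proof: Cauchy--Schwarz with $\norm{\mu_{j,k}}_{\Hcal}\le\sqrt\kappa$ only gives $\norm{m_A}\le\sqrt\kappa\,\norm{\vv_-}_1\le\sqrt{\kappa K}\,\norm{\vv_-}_2$, and recentring cannot remove the $\sqrt K$. Concretely, split the $K$ classes into two groups with orthogonal, equal embeddings $\mu\perp\nu$ and take $v_{k,-}=\pm c$ on the two groups; the zero sum holds, yet $\norm{m_A}^2\asymp K\kappa\norm{\vv_-}_2^2$. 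Consequently your bound $\Var\langle m_A,\tilde B\rangle=\sum_k v_{k,+}^2\,\E\langle m_A,D_k\rangle^2\lesssim\norm{m_A}^2/(n\psi)$ silently picks up a factor $K$, which contradicts the ``universal constant'' claimed in the lemma. The repair is exactly where the paper places its weight: write $\langle m_A,D_k\rangle$ as a centred average of $f_A(x)=\sum_g v_{g,-}\E_{x'\sim \target_{X|Y=g}}[\kernel(x',x)]$, observe that $f_A(x_{k1})=\vv_-^\top\,\E[\phi_k\mid x_{k1}]$, and use the law of total covariance $\Cov(\E[\phi_k\mid x_{k1}])\preceq\Cov(\phi_k)$ so that Assumption~\ref{ass:kernel}(iii) yields $\Var(f_A(x_{k1}))\le C\norm{\vv_-}_2^2$ dimension-free; this is precisely how the matrices $\mD_i$ enter the paper's bounds for its terms $\Rmnum{1}$ and $\Rmnum{3}$. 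Your other linear term is fine as written, since there the $\ell_2$ structure comes from independence of the $D_k$ across classes together with $\norm{m_B}\le 4\sqrt\kappa$. The true source of the $\norm{\vq_1-\vq_2}_2^2$ factor is thus linearity of the excess risk in $\vv_-$ combined with classwise independence and operator-norm covariance control---not the simplex constraint.

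A second, smaller inaccuracy: the identity $\Lcal_{j,\targetx}(\vq)=\norm{\mu_Q-\hat\mu_j(\vq)}_{\Hcal}^2$ is not exact. The paper's $\hat{A}_{j,k,k}$ is the diagonal-excluded U-statistic, whereas $\norm{\hat\mu_{j,k}}_{\Hcal}^2$ is the V-statistic, so your loss differs from the paper's by (besides the harmless constant $\norm{\mu_Q}_{\Hcal}^2$) a random per-class correction of size $O(\kappa/n_{j,k})$ whose variance is $O(\kappa^2(n\psi)^{-3})$; in the excess risk it enters with coefficient $v_{k,+}v_{k,-}$ and is higher order, but it must be stated and absorbed, since otherwise the quantity you decompose is not the one in the lemma. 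With these two fixes---routing $\langle m_A,\tilde B\rangle$ through Assumption~\ref{ass:kernel}(iii) via conditional covariance, and accounting for the U/V discrepancy---your argument goes through at the claimed rate; your fourth-moment treatment of the bilinear term $\langle\tilde A,\tilde B\rangle$ (which only needs the bounded kernel, not Assumption~\ref{ass:kernel}(iii)) and your simpler argument for $\Var[\Lcal_{j,\targetx}(\vq)]$ are consistent with, and a tidy alternative to, the paper's term-by-term covariance enumeration.
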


From Lemma~\ref{lemma:upper_var}, we can observe the variance reduction of $\Var[\Ecal_{j,\targetx}(\vq_1,\vq_2)]$ compared to $\Var[\Lcal_{j,\targetx}(\vq)]$, in the sense that $\Var[\Ecal_{j,\targetx}(\vq_1,\vq_2)] = o(\Var[\Lcal_{j,\targetx}(\vq)])$ when $\norm{\vq_1 - \vq_2}_2 = o(1)$.
With the variances controlled, we can establish the following uniform results for the MWV estimator.
\begin{lemma} \label{lemma:uniform_bound_mwv}
    When using the MWV method with $\epsilon_h$ slightly larger than $\epsilon$ and $\epsilon_h = O(\epsilon)$ to estimate $\hat{\Ecal}(\vq_1, \vq_2)$ and $\hat{\Lcal}(\vq)$, the excess risk estimator satisfies that with probability at least $1 - \tau$,
    \begin{equation*}
        \sup_{\vq_1,\vq_2 \in \triangle_{K}} \frac{\abs{\hat{\Ecal}(\vq_1,\vq_2) - \Ecal_{\target}(\vq_1,\vq_2)}}{\max(\norm{\vq_1 - \vq^\ast}_{2} + \norm{\vq_2 - \vq^\ast}_{2},r_n)}  \lesssim r_n,
    \end{equation*}
    and the loss estimator satisfies that with probability at least $1 - \tau$,
    \begin{equation*}
        \sup_{\vq \in \triangle_{K}} \abs{\hat{\Lcal}(\vq) - \Lcal_\target(\vq)} \lesssim r_n,
    \end{equation*}
        where 
        \begin{equation}\label{eq:specified_rn}
            r_n = \sqrt{\frac{\epsilon}{n \psi}} + \sqrt{\frac{\log(1/\tau) + K \log (K m n) }{mn \psi}} + \kappa\sqrt{\frac{K\log(mK/\tau)}{N}}.
        \end{equation}
\end{lemma}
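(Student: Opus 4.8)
The plan is to route the argument through the ``population-target'' surrogates $\{\Lcal_{j,\targetx}(\vq)\}$ and $\{\Ecal_{j,\targetx}(\vq_1,\vq_2)\}$, which (unlike the empirical losses $\Lcal_j(\vq)=\Lcal_{j,\targetxhat}(\vq)$) are independent across $j$, and then to absorb the replacement of $\targetx$ by $\targetxhat$ as a common perturbation. Concretely, for each fixed $\vq$ I would write
\[
	\hat{\Lcal}(\vq) - \Lcal_\target(\vq) = \bigl[\hat{\Lcal}(\vq) - \hat{\Lcal}^{Q}(\vq)\bigr] + \bigl[\hat{\Lcal}^{Q}(\vq) - \Lcal_\target(\vq)\bigr],
\]
where $\hat{\Lcal}^{Q}(\vq)$ is the MWV estimate applied to the independent surrogates $\{\Lcal_{j,\targetx}(\vq)\}$. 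The second bracket is pure robust mean estimation on independent data; the first is the target-replacement error. The same split applies to $\hat{\Ecal}$.

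For the second bracket I would invoke Lemma~\ref{lemma:upper_var}: the inlier surrogates share the common mean $\Lcal_\target(\vq)$ with variance at most $C/(n\psi)$, while an $\epsilon$-fraction is arbitrarily corrupted. The deterministic trimming property of MWV (discarding the $\epsilon_h m$ highest-variance points, with $\epsilon_h=O(\epsilon)$ slightly exceeding $\epsilon$) then gives, pointwise in $\vq$, a contamination bias $\asymp\sqrt{\epsilon/(n\psi)}$ (the first summand of $r_n$) together with a sampling fluctuation $\asymp\sqrt{\log(1/\tau)/(mn\psi)}$. To upgrade from pointwise to uniform over $\triangle_K$ I would build a net at resolution $\sim 1/(mn)$, control the Lipschitz modulus of $\Lcal_j(\vq)=\vq^\top\hat{\mA}_j\vq-2\vq^\top\hat{\vb}_j$ in $\vq$ via the $\kappa$-boundedness of the kernel in Assumption~\ref{ass:kernel}, and take a union bound over the net; this yields the covering factor $K\log(Kmn)$ in the second summand of $r_n$.

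For the first bracket the MMD structure is decisive: only $\hat{\vb}_j$ carries the target sample, so $\Lcal_j(\vq)-\Lcal_{j,\targetx}(\vq)=-2\vq^\top(\hat{\vb}_j-\vb_j^{Q})$ is linear in $\vq$, with $\hat{b}_{j,k}-b_{j,k}^{Q}$ a fluctuation of the empirical target mean of $\kernel(x,\cdot)$ over the $N$ samples in $\Dcal_0$. Since this perturbation enters every source through the same $\Dcal_0$, I would bound $\sup_{j,k}\abs{\hat{b}_{j,k}-b_{j,k}^{Q}}$ by a Hoeffding-type estimate uniform over the finitely many source points and $K$ categories, giving the rate $\kappa\sqrt{K\log(mK/\tau)/N}$ (the third summand); because it is common across sources and uniformly small, it passes through the bounded MWV weights essentially unchanged. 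For the excess-risk statement I repeat the decomposition but use that $\Ecal_j(\vq_1,\vq_2)=(\vq_1-\vq_2)^\top\hat{\mA}_j(\vq_1+\vq_2)-2(\vq_1-\vq_2)^\top\hat{\vb}_j$, so both the variance (Lemma~\ref{lemma:upper_var}) and the target-replacement term acquire a prefactor $\norm{\vq_1-\vq_2}_2\le\norm{\vq_1-\vq^\ast}_2+\norm{\vq_2-\vq^\ast}_2$. Dividing by $\max(\norm{\vq_1-\vq^\ast}_2+\norm{\vq_2-\vq^\ast}_2,r_n)$ cancels this prefactor, so each of the three contributions is again bounded by $r_n$, yielding the ratio bound.

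The main obstacle I anticipate is the uniform control of the MWV selection together with the ratio structure of the excess-risk bound. The difficulty is twofold. First, the retained set $\widehat{\Jcal}$ in Eq.~(\ref{eq:m_weighted_var}) depends on the argument $\vq$, so the trimming guarantee must be made to hold simultaneously across the whole net rather than pointwise. Second, in the regime where $\norm{\vq_1-\vq^\ast}_2+\norm{\vq_2-\vq^\ast}_2$ is of the same order as $r_n$, the variance prefactor $\norm{\vq_1-\vq_2}_2^2$ is tiny, so the net must be fine enough---this is precisely why $mn$ appears inside the logarithm---that the discretization error does not swamp the small-variance regime. Reconciling the covering resolution with the scaling of the variance is where the bookkeeping is most delicate.
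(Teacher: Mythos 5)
Your overall architecture is essentially the paper's: the paper also routes through the surrogates $\{\Lcal_{j,\targetx}(\vq)\}$, $\{\Ecal_{j,\targetx}(\vq_1,\vq_2)\}$ (independent across $j$), obtains the robust-mean rate $\sigma_n\sqrt{\epsilon}+\sigma_n\sqrt{\log(1/\tau)/m}$ with $\sigma_n^2 = O((n\psi)^{-1})$ from Lemma~\ref{lemma:upper_var}, bounds the target-replacement term by $\sup_{j,k}\bigabs{\hat{b}_{j,k}-\bar{b}_{j,k}} = O(\kappa\sqrt{\log(mK/\tau)/N})$ via McDiarmid plus a union bound over $j,k$, and gets uniformity from a $\gamma$-net with Lipschitz transfer and $\tilde{\tau}=\gamma^{2K}\tau$, yielding the $K\log(Kmn)$ covering factor (this is Lemma~\ref{lemma:sup_of_adaptive_RB} together with the verification of Assumptions~\ref{ass:lipschitz}--\ref{ass:adaptive_moment}). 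However, your decomposition has a genuine gap at the first bracket. The quantity $\hat{\Lcal}(\vq)-\hat{\Lcal}^{Q}(\vq)$ is a difference of two MWV outputs on \emph{different} input sets, and the retained subset $\widehat{\Jcal}$ in Eq.~(\ref{eq:m_weighted_var}) is input-dependent. Writing $\hat{\Lcal}(\vq)-\hat{\Lcal}^{Q}(\vq)=\sum_j w_j^{\mathrm{emp}}\bigl(\Lcal_j(\vq)-\Lcal_{j,\targetx}(\vq)\bigr)+\sum_j \bigl(w_j^{\mathrm{emp}}-w_j^{\mathrm{sur}}\bigr)\Lcal_{j,\targetx}(\vq)$, the first sum is indeed controlled by the sup-norm of the perturbation, but the second is not: two $(1-\epsilon_h)m$-subsets can both (nearly) minimize the retained-sample variance yet have means far apart, e.g.\ in bimodal configurations, so the claim that the perturbation ``passes through the bounded MWV weights essentially unchanged'' is an assertion, not a proof. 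It is telling that for the truncated mean the paper proves exactly this perturbation stability (Lemma~\ref{lemma:target_convergence}) from monotonicity and translation-equivariance --- two properties the variance-minimizing selection of MWV does not enjoy, which is precisely why the MWV analysis cannot take your shortcut.

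The repair, which is what the paper actually does, is to never compare two MWV outputs: fold the target-replacement error and the net-discretization error into a stability certificate for the single empirical input set, and apply the MWV guarantee once. Concretely, at each net point the surrogates contain with high probability a large subset with weighted bias $O(\delta_n)$ and weighted variance $O(\delta_n^2/\epsilon_h)$ (Lemmas~\ref{lemma:control_the_variance} and \ref{lemma:main_lemma_in_RB}, which require Bennett-type truncation arguments since inliers only have bounded variance --- another ingredient your pointwise step leaves implicit); the Lipschitz bound $\abs{z_{j,\targetx}-z_{j,\targetx}^\prime}\leq 4L\gamma$ with $\gamma = L^{-1}\delta_n r_n$ and the uniform bound $\abs{z_j - z_{j,\targetx}}\leq a_{m,N}\,(\norm{\vq_1-\vq^\ast}_2+\norm{\vq_2-\vq^\ast}_2)$ show that the \emph{same} subset certifies the empirical set $\{\Ecal_j(\vq_1,\vq_2)\}$ with $\delta_n$ inflated to $\delta_n + a_{m,N}$; then the deterministic Lemma~\ref{lemma:stability_on_corrupted_set} applied directly to the empirical set gives the pointwise bound, and the union bound over the net gives uniformity. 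This also resolves the difficulty you correctly anticipated about the $\vq$-dependence of $\widehat{\Jcal}$: no uniformity over selections is needed, because Lemma~\ref{lemma:stability_on_corrupted_set} holds deterministically whenever the certificate exists. Your cancellation of the $\norm{\vq_1-\vq_2}_2$ prefactor by the $\max(\cdot\,,r_n)$ denominator and your accounting of the covering resolution against the small-variance regime are consistent with the paper's bookkeeping.
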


The proof of Lemma~\ref{lemma:uniform_bound_mwv} primarily hinges on establishing the estimation accuracy of the MWV estimator, leveraging the stability arguments developed by \citet{diakonikolas2020outlier}. 
In contrast to the conventional robust mean estimation problem, a significant challenge in the proof arises from the necessity for the error bounds to hold uniformly for all pairs $(\vq_1, \vq_2)$ and all $\vq$.
Further details regarding the proof can be found in the Appendix.

The rate $r_n$ is comprised of three distinct terms.
The first term, which emerges from contamination, signifies that the most detrimental effect of the contamination is of rate $ \sqrt{{\epsilon}/{n}}$. 
This aligns with the optimal dimension-free error bound in distributed robust learning tasks, as noted by \cite{zhu2023byzantine}. 
The second term, $ \sqrt{{1}/({nm})} $, achieves, up to a logarithmic factor, the optimal convergence rate for aggregating information from inlier sources.
The third term is associated with the convergence rate of $\targetxhat$ to $\targetx$. Given the prevalent assumption that the sample size of unlabeled data $ N $ is generally large, as in the context of semi-supervised learning, this term is often negligible compared to the others. 
Consequently, when $\epsilon$ is treated as a constant, the leading term of the error bound can be predominantly represented by $\sqrt{{\epsilon}/{n}}$.

We are now ready to present the principal result regarding the specific convergence rate achieved by the proposed method, which employs the MMD loss in conjunction with the MWV estimator.
\begin{theorem}\label{theorem:robust_excess_mwv}
	Consider using the MWV method in Eq.~(\ref{eq:m_weighted_var}) as the robust weighting function $\RW$
	with some $\epsilon_h$ slightly larger than $\epsilon$ and $\epsilon_h = O(\epsilon)$.
	Under Assumptions~\ref{ass:weight_identifiability} and \ref{ass:sample_balance}--\ref{ass:kernel}, if the rough estimate $\vq^\prime$ satisfies $\norm{\vq^\prime - \vq^\ast}_2 \leq r_n^\prime$, then with probability at least $1-\tau$, Eq.~(\ref{eq:err_onestep_general}) and Eq.~(\ref{eq:err_multistep_general}) hold
    with the rate $r_n$ specified in Eq.~(\ref{eq:specified_rn}).
\end{theorem}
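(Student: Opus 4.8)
The plan is to recognize that Theorem~\ref{theorem:robust_excess_mwv} is a specialization of the general results in Theorem~\ref{theorem:general_one_step_update} and Corollary~\ref{corollary:general_multistep_update} to the concrete choice of the MMD loss and the MWV weighting function. Accordingly, the entire proof reduces to verifying that Assumptions~\ref{ass:convex}, \ref{ass:uniform_converge_excess_risk}, and \ref{ass:weight_identifiability} all hold in this setting, with the rate $r_n$ given by Eq.~(\ref{eq:specified_rn}); once these are in place, Eq.~(\ref{eq:err_onestep_general}) and Eq.~(\ref{eq:err_multistep_general}) follow immediately by invoking those general results.

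First I would establish Assumption~\ref{ass:convex}. The empirical MMD loss $\Lcal_j(\vq) = \MMDhat(\vq; \Dcal_0, \Dcal_j) = \vq^\top \hat{\mA}_j \vq - 2\vq^\top \hat{\vb}_j$ is quadratic in $\vq$ and therefore differentiable, so the first part is immediate. For the strong convexity of the population loss, I would write $\Lcal_\target(\vq) - \Lcal_\target(\vq^\ast) = (\vq - \vq^\ast)^\top \mA (\vq - \vq^\ast)$ using the quadratic form in Eq.~(\ref{equation:source_population_MMD}) together with the fact that $\vq^\ast$ is the minimizer of the quadratic (so the cross term $(\vq - \vq^\ast)^\top(\mA \vq^\ast - \vb)$ vanishes). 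As remarked after Assumption~\ref{ass:kernel}, the combination of Assumptions~\ref{ass:linearly_independent} and \ref{ass:kernel} guarantees that $\mA$ is positive definite; taking $\lambda$ to be its smallest eigenvalue yields the required lower bound $\Lcal_\target(\vq) - \Lcal_\target(\vq^\ast) \ge \lambda \norm{\vq - \vq^\ast}_2^2$ for all $\vq \in \triangle_K$.

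Second, Assumption~\ref{ass:uniform_converge_excess_risk} is precisely the first uniform bound established in Lemma~\ref{lemma:uniform_bound_mwv}, which shows that the MWV-based robust excess risk estimator $\hat{\Ecal}(\vq_1, \vq_2)$ approximates $\Ecal_\target(\vq_1, \vq_2)$ uniformly with the rate $r_n$ of Eq.~(\ref{eq:specified_rn}), under the contamination budget $\epsilon_h$ chosen slightly larger than $\epsilon$ with $\epsilon_h = O(\epsilon)$. That lemma in turn rests on the variance bounds of Lemma~\ref{lemma:upper_var} together with the stability analysis of MWV; the delicate part there, namely the uniformity of the bound over all pairs $(\vq_1, \vq_2)$ and the passage from the correlated empirical losses $\Lcal_j(\vq)$ to the decorrelated surrogates $\Lcal_{j,\targetx}(\vq)$, is already handled in the proof of Lemma~\ref{lemma:uniform_bound_mwv}. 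Assumption~\ref{ass:weight_identifiability} is retained directly as a hypothesis of the theorem, so no further verification is needed.

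With Assumptions~\ref{ass:convex}, \ref{ass:uniform_converge_excess_risk}, and \ref{ass:weight_identifiability} all satisfied and $r_n$ identified, the one-step error bound Eq.~(\ref{eq:err_onestep_general}) follows from Theorem~\ref{theorem:general_one_step_update}, and the multi-step bound Eq.~(\ref{eq:err_multistep_general}) from Corollary~\ref{corollary:general_multistep_update}, both automatically carrying the explicit rate $r_n$. The main obstacle in this argument is not the final assembly, which is essentially bookkeeping, but lies upstream in Lemma~\ref{lemma:uniform_bound_mwv}: controlling the MWV estimator uniformly over the simplex while disentangling the shared randomness in $\targetxhat$ that couples the per-source losses. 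Since that lemma is assumed proved, the remaining task for Theorem~\ref{theorem:robust_excess_mwv} itself is the routine verification of strong convexity via the positive definiteness of $\mA$ and the invocation of the general machinery.
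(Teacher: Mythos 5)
Your proposal matches the paper's own proof essentially step for step: the paper likewise verifies Assumption~\ref{ass:convex} by noting that $\Lcal_j(\vq)$ is quadratic (hence differentiable) and that $\mA\vq^\ast = \vb$ gives $\Lcal_\target(\vq) - \Lcal_\target(\vq^\ast) = (\vq-\vq^\ast)^\top\mA(\vq-\vq^\ast) \ge \lambda_{\mathrm{min}}(\mA)\norm{\vq-\vq^\ast}_2^2$ with $\lambda_{\mathrm{min}}(\mA)>0$ from Assumptions~\ref{ass:linearly_independent} and \ref{ass:kernel}, then verifies Assumption~\ref{ass:uniform_converge_excess_risk} with the rate in Eq.~(\ref{eq:specified_rn}) by citing Lemma~\ref{lemma:uniform_bound_mwv}, and concludes via Theorem~\ref{theorem:general_one_step_update} (with Corollary~\ref{corollary:general_multistep_update} for the multi-step bound, which you correctly make explicit). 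Your identification of where the real work lies (Lemma~\ref{lemma:uniform_bound_mwv}) and your treatment of Assumption~\ref{ass:weight_identifiability} as a retained hypothesis are both consistent with the paper.
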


By integrating Corollary~\ref{corollary:general_multistep_update} with Theorem~\ref{theorem:robust_excess_mwv}, the final estimator resulting from the multi-step refinement achieves a convergence rate of $O(\sqrt{\epsilon / n} + \sqrt{K/(mn)} + \sqrt{K/N})$, up to logarithmic factors. Prior literature \citep{iyer2014maximum, lipton2018detecting, azizzadenesheli2018regularized, garg2020unified} has established theoretical error bounds of $O(\sqrt{{K}/{n}} + \sqrt{{K}/{N}})$ in single-source scenarios without data contamination. Notably, it aligns with our result when $m = 1$ and $\epsilon = 0$.

Similarly, we have the specific convergence rate for the estimator $\vqhat_{\Lcal}$ in Eq.~(\ref{equation:robust_w_divergence}).
\begin{theorem}\label{theorem:robust_divergence_mwv}
	Consider using the MWV method in Eq.~(\ref{eq:m_weighted_var}) as the robust weighting function $\RW$
	with some $\epsilon_h$ slightly larger than $\epsilon$ and $\epsilon_h = O(\epsilon)$.
	Under Assumptions~\ref{ass:sample_balance}, \ref{ass:linearly_independent} and \ref{ass:kernel}(i)--(ii), with probability at least $1 - \tau$, Eq.~(\ref{eq:err_minloss_general}) holds with the rate $r_n$ specified in Eq.~(\ref{eq:specified_rn}).
\end{theorem}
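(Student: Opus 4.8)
The plan is to deduce Theorem~\ref{theorem:robust_divergence_mwv} from the general result in Theorem~\ref{theorem:general_robust_loss} by verifying, for the MMD loss paired with the MWV weighting function, the two structural hypotheses of that theorem: the strong convexity of the population loss (Assumption~\ref{ass:convex}) and the uniform robust-loss estimation bound (Assumption~\ref{ass:uniform_converge_divergence}) with the explicit rate $r_n$ of Eq.~(\ref{eq:specified_rn}). Once both are in place on a single high-probability event, Theorem~\ref{theorem:general_robust_loss} delivers $\norm{\vqhat_\Lcal - \vq^\ast}_2 \lesssim \sqrt{r_n}$, which is Eq.~(\ref{eq:err_minloss_general}).

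First, I would verify Assumption~\ref{ass:convex}. By the representation (\ref{equation:source_population_MMD}), the population loss is $\Lcal_\target(\vq) = \vq^\top \mA \vq - 2\vq^\top \vb$ up to an additive constant, where $\mA$ is the Gram matrix of the kernel mean embeddings $\{\mu_k\}$ of the conditionals $\{\targetxcy\}$. Assumption~\ref{ass:kernel}(ii) makes the kernel characteristic, so the embedding map is injective; combined with the linear independence of $\{\targetxcy\}$ from Assumption~\ref{ass:linearly_independent} and the positive-definiteness of the kernel from Assumption~\ref{ass:kernel}(i), the embeddings $\{\mu_k\}$ are linearly independent in the RKHS and thus $\mA \succ 0$. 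Since $\Lcal_\target(\vq^\ast) = \MMD(\targetx,\targetx) = 0$ is the global minimum, the quadratic form gives $\Lcal_\target(\vq) - \Lcal_\target(\vq^\ast) = (\vq-\vq^\ast)^\top \mA (\vq-\vq^\ast) \ge \lambda_{\min}(\mA)\,\norm{\vq-\vq^\ast}_2^2$ for all $\vq \in \triangle_K$, establishing Assumption~\ref{ass:convex} with $\lambda = \lambda_{\min}(\mA)>0$; differentiability of the empirical losses is immediate since each $\MMDhat$ is quadratic in $\vq$. Crucially, this step uses only parts (i)--(ii) of Assumption~\ref{ass:kernel}.

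Second, I would verify Assumption~\ref{ass:uniform_converge_divergence} with the stated $r_n$, which is precisely the loss-estimator conclusion of Lemma~\ref{lemma:uniform_bound_mwv} under the MWV method with $\epsilon_h$ slightly larger than $\epsilon$ and $\epsilon_h = O(\epsilon)$. The key input is the variance bound $\Var[\Lcal_{j,\targetx}(\vq)] \lesssim 1/(n\psi)$ from Lemma~\ref{lemma:upper_var}. I would emphasize that this particular bound relies only on the boundedness of the kernel (Assumption~\ref{ass:kernel}(i)) together with the sample-balance condition (Assumption~\ref{ass:sample_balance}); the operator-norm control in Assumption~\ref{ass:kernel}(iii), which is needed only for the sharper variance-reduction bound on the excess risk $\Ecal_{j,\targetx}$, is not invoked here. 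This is exactly why the divergence-based estimator requires only the weaker hypothesis set (i)--(ii), in contrast to the excess-risk-based Theorem~\ref{theorem:robust_excess_mwv}.

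Finally, I would apply Theorem~\ref{theorem:general_robust_loss}: on the event of probability at least $1-\tau$ on which Lemma~\ref{lemma:uniform_bound_mwv} holds, Assumptions~\ref{ass:convex} and \ref{ass:uniform_converge_divergence} are both satisfied, so the general bound yields $\norm{\vqhat_\Lcal - \vq^\ast}_2 \lesssim \sqrt{r_n}$. The reduction itself is routine; the genuine difficulty lives inside the cited Lemma~\ref{lemma:uniform_bound_mwv}, namely controlling the MWV estimation error \emph{uniformly} over the whole simplex $\triangle_K$ (rather than pointwise) through the stability framework of \citet{diakonikolas2020outlier}, and cleanly separating the three error sources composing $r_n$: contamination ($\sqrt{\epsilon/(n\psi)}$), inlier aggregation ($\sqrt{K\log(Kmn)/(mn\psi)}$), and the target-marginal approximation $\targetxhat \to \targetx$ ($\kappa\sqrt{K\log(mK/\tau)/N}$). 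Relative to the present theorem, however, that analysis is assumed available, so the main remaining obstacle is simply the bookkeeping that certifies the loss-only route avoids Assumption~\ref{ass:kernel}(iii).
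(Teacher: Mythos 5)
Your proposal is correct and follows essentially the same route as the paper: verify Assumption~\ref{ass:convex} via $\lambda_{\min}(\mA)>0$ (characteristic kernel plus linear independence, with $\mA\vq^\ast=\vb$ giving the exact quadratic identity), invoke Lemma~\ref{lemma:uniform_bound_mwv} for Assumption~\ref{ass:uniform_converge_divergence} with the rate $r_n$ of Eq.~(\ref{eq:specified_rn}), and conclude by Theorem~\ref{theorem:general_robust_loss}. Your added bookkeeping --- that the loss-variance bound in Lemma~\ref{lemma:upper_var} rests only on Proposition~\ref{prop:bound_cov} (hence Assumptions~\ref{ass:sample_balance} and \ref{ass:kernel}(i)), so Assumption~\ref{ass:kernel}(iii) is genuinely dispensable here --- is left implicit in the paper but is accurate.
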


As demonstrated in Lemma~\ref{lemma:upper_var}, the advantage of the robust weighting function based on the excess risk stems from the variance of $\Ecal_{j}(\vq,\vq^\prime)$ being significantly smaller than that of $\Lcal_j(\vq)$ when $\vq$ is close to $\vq^\prime$.
It results in a faster uniform convergence rate of the objective in Eq.~(\ref{equation:robust_w_loss}) in the neighborhood of $\vq^\prime$ \citep{mathieu2021excess}.

\section{Numerical Results}
\label{sec:numerical_study}
In this section, we evaluate the performance of the proposed robust weighted risk minimization approach in Eq.~(\ref{equation:robust_w_loss}) on both synthetic and real-world datasets.
For the construction of the MMD loss, we employ the standard Gaussian kernel, given by $\exp(-{\norm{x_1-x_2}_2^2}/{2})$.
All reported results are averaged over $500$ replications to ensure statistical reliability.

We consider the following methods for the comparison.
\begin{itemize}
	\item \textbf{Single}: Estimates $\vq^\ast$ by minimizing a randomly selected $\Lcal_{j}(\vq)$ where $j \in \Ical$.
	\item \textbf{Average}: Estimates $\vq^\ast$ by minimizing $m^{-1}\sum_{j \in [m]} \Lcal_j(\vq)$, the average of all loss functions.
	\item \textbf{Trim}: Conventional trimmed loss minimization, which estimates $\vq^\ast$ by minimizing $\sum_{j \in [(1-\epsilon_h)m]} \Lcal_{(j)}(\vq)$, where $\Lcal_{(j)}(\vq)$ denotes the $j$-th smallest value in $\{\Lcal_j(\vq): j \in [m]\}$.
	\item \textbf{ROD}: Estimates $\vq^\ast$ by Eq.~(\ref{equation:robust_w_divergence}).
	\item \textbf{ROE}: Estimates $\vq^\ast$ by Eq.~(\ref{equation:robust_w_loss}), with an initial estimate $\vq'$ derived from Eq.~(\ref{equation:robust_w_divergence}).
	\item \textbf{Oracle}: Computes the simple average over the uncontaminated sources.
\end{itemize}

\subsection{Experiments with Synthetic Data}\label{sec:synthetic_data}

We consider a binary classification task with the following data-generating process.
The samples from both the target and source domains are generated independently according to the following specified distributions.
In the target domain, the samples $\{(x_i, y_i)\}_{i \in [N]}$, for which the responses $\{y_i\}_{i \in [N]}$ remain unobserved during training, are drawn independently and identically distributed (i.i.d.) from the joint distribution $(X,Y) \sim Q$, where $\Pr_\target(Y = 1) = 0.6$, $\Pr_\target(Y = 2) = 0.4$, $\target_{X|Y = 1} = \normal(0,1)$ and $\target_{X|Y = 2} = \normal(4,1)$.
Here, $\normal(\mu,\sigma^2)$ denotes the normal distribution with mean $\mu$ and variance $\sigma^2$.
In the $j$-th source domain, the samples $\{(x_{j,i},y_{j,i})\}_{i \in[n]}$ are generated i.i.d. from the distribution $(X,Y) \sim P_j$, where the conditional distributions satisfy
the label shift assumption, such that $P_{j,X|Y = k} = \target_{X|Y = k}$ for $k \in \{1,2\}$ and $\Pr_{P_j}(Y = 1) = p_j$ and $\Pr_{P_j}(Y = 2) = 1-p_j$ with $p_j$ randomly drawn from the uniform distribution over the interval $[0,1]$.
For the contamination, an $\epsilon$ proportion of sources $\{\Dcal_j\}_{j \in [m]}$ is randomly selected as outlier sources $\Ocal$. For each outlier source, a proportion of $5/\sqrt{n}$ of the original labels from the larger category is flipped to the other category.

We use the MWV method as the robust weighting function $\RW$ and assume that $\epsilon$ is known and set $\epsilon_h = \epsilon$ in the algorithms.
It is a promising direction to study data-driven selection of $\epsilon_h$ for the proposed methodology.

For the evaluation metrics, we consider the mean squared error (MSE), $\Ebb \norm{\vqhat - \vq_\ast}_2^2$, of the target class proportion estimator $\vqhat$ and the false selection number of outlier sources (FSN).
The FSN is defined as $\Ebb \abs{\Ocal \cap \Scal}$, where $\Scal$ denotes the non-zero indices of the weights returned from $\RW_{\epsilon_h}(\{\Lcal_j(\vqhat)\})$ or $\RW_{\epsilon_h}(\{\Ecal_j(\vqhat,\vq^\prime)\})$. Both metrics reflect better performance when their values are lower.

For the parameters $(N, m, n, \epsilon)$, we adopt the following configuration.
The sample size of the target domain is set at $N=nm$.
One of the rest $(m, n, \epsilon)$ is varied, while the other two are maintained at their baseline values of $(m,n,\epsilon)=(40,100,0.2)$.

The results are presented in Figure~\ref{figure:plotrdvsrer} across varying parameter values for $(m,n,\epsilon)$.
The Average approach demonstrates suboptimal performance, primarily attributable to substantial bias induced by data corruption.
The Single approach is inadequate, as it fails to fully exploit the information available from all sources.
Conversely, the Trim, ROD, and ROE methodologies exhibit enhanced accuracy in aggregating information from multiple sources and display robustness in detecting outlier sources.
Notably, the proposed ROE method yields a lower MSE than the Trim and ROD methods.
Furthermore, ROE has a smaller FSN compared to both Trim and ROD, thereby enhancing its robustness against contamination.
Remarkably, even when the contamination proportion increases to $40\%$, the ROE method continues to perform relatively well, maintaining a performance level comparable to that of the Oracle.
This superiority is consistent with expectations regarding the improved outlier identification capacity of the ROE method, as illustrated in the second row of Figure~\ref{figure:plotrdvsrer}.
These empirical findings align with the theoretical insights presented in Section~\ref{sec:theory}.

\begin{figure}[t]
	\centering
	\includegraphics[width=0.96\linewidth]{"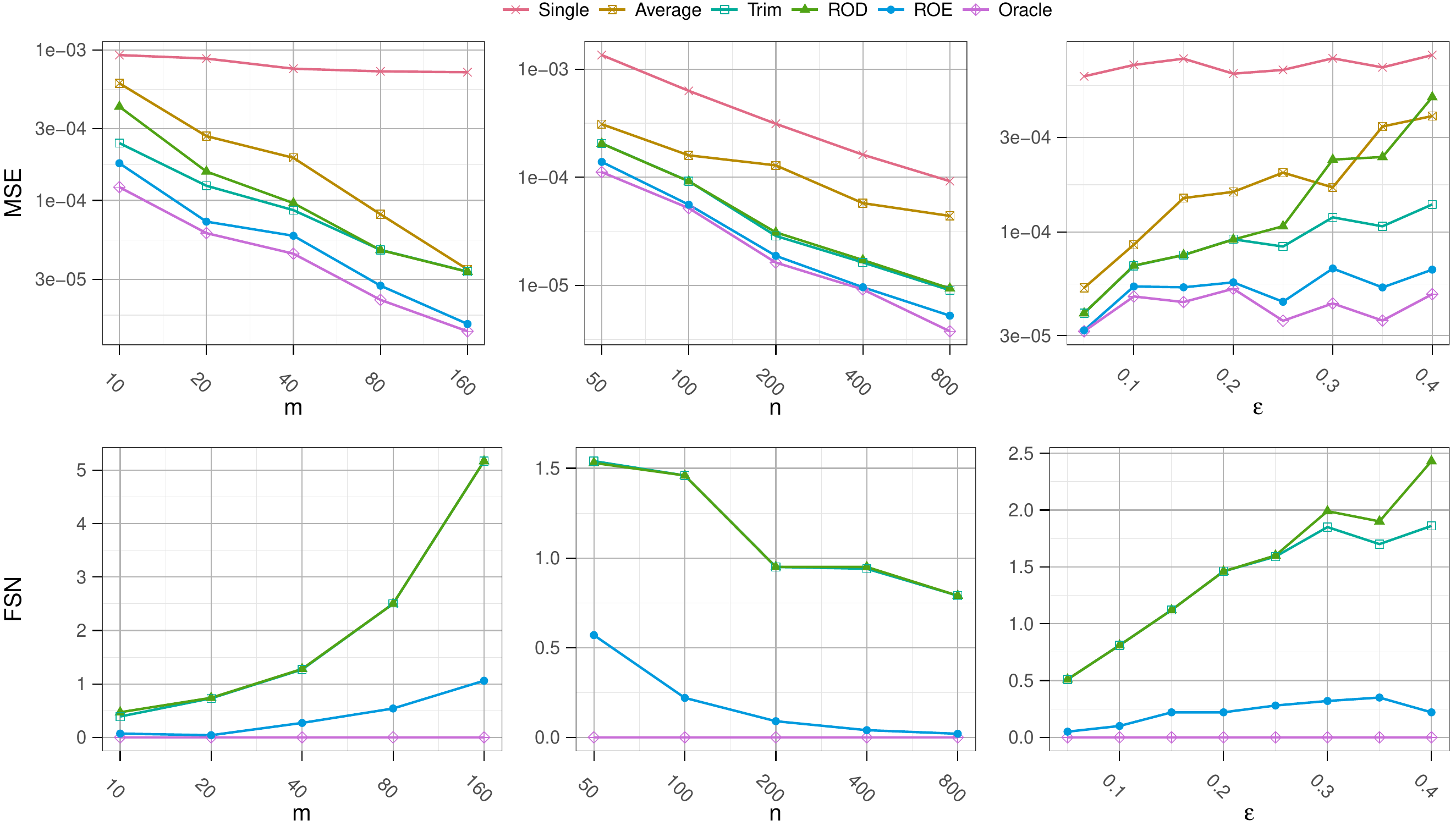"}
	\caption{MSE of $\vqhat$ and FSN for the comparison methods in the synthetic task.}
	\label{figure:plotrdvsrer}
\end{figure}

\subsection{Experiments with Real Data}
We further demonstrate the efficacy of the proposed ROE method through two real-world case studies.
Besides the MWV estimator, we also explore the truncated mean as an alternative robust weighting function $\RW$, which eliminates $\epsilon$ proportion of the extreme values from both ends of the loss values. 
For comprehensive details, please refer to the Appendix. 
The methodologies that employ the truncated mean estimator are denoted as ROE-TRU and ROD-TRU, while those using the MWV estimator are denoted as ROE-MWV and ROD-MWV, respectively.

Besides analyzing the estimation error of $\hat{\vq}$, we also investigate the impact of these methods on downstream classification tasks after obtaining $\hat{\vq}$, as described in Remark~\ref{remark:robust_classifier}. We present the results in terms of the misclassification error of the learned classifiers on the target domain samples.

\subsubsection{Remote Sensing Application}
The Zurich-Summer dataset is designed for the classification of super-pixels derived from high-resolution satellite imagery into various land cover classes \citep{volpi2015semantic}.
The dataset consists of $20$ large-scale satellite images, each containing four spectral channels.

\begin{table}[t]
    \centering
    \setlength\tabcolsep{0.1em}
    \renewcommand{\arraystretch}{1}
    \begin{threeparttable}
    {\scriptsize 
	\begin{tabular*}{1\textwidth}{@{\extracolsep{\fill}}c*{16}{r}}
		\toprule
		\multirow{2}{*}{Method} & \multicolumn{15}{c}{Image index of the target domain} & \multirow{2}{*}{Average} \\
		& 1 & 2 & 3 & 4 & 5 & 6 & 7 & 8 & 9 & 10 & 11 & 12 & 13 & 14 & 15 & \\
		\midrule
		& \multicolumn{15}{c}{MSE ($ \times $100)} & \\
		Single & 9.27 & 10.05 & 7.68 & 6.30 & 8.35 & 4.83 & 6.67 & 11.24 & 15.94 & 14.75 & 7.33 & 9.93 & 15.35 & 9.63 & 7.00 & 9.62 \\
		Average & 8.43 & 9.69 & 6.03 & 5.73 & 5.20 & 8.68 & 5.08 & 9.16 & 12.76 & 6.99 & 8.64 & 6.49 & 8.78 & 5.79 & 7.00 & 7.63 \\
		Trim & 5.54 & 8.62 & 2.60 & 4.13 & 5.46 & \pmb{3.87} & 2.02 & 9.82 & 16.61 & 8.47 & \pmb{2.37} & 5.74 & 12.68 & 4.89 & \pmb{2.08} & 6.33 \\
		ROD-MWV & 5.46 & 7.94 & 2.87 & 4.06 & 5.22 & 4.34 & 2.32 & 9.16 & 14.95 & 8.13 & 2.70 & 5.37 & 11.54 & 5.06 & 2.73 & 6.12 \\
		ROE-MWV & 4.79 & \pmb{6.93} & 2.43 & 3.58 & 4.67 & 4.49 & 2.68 & \pmb{6.84} & \pmb{9.14} & 6.48 & 2.69 & \pmb{4.98} & \pmb{8.26} & \pmb{4.54} & 2.96 & \pmb{5.03} \\
		ROD-TRU & 6.99 & 8.86 & 3.52 & 4.48 & 5.65 & 4.56 & 2.75 & 10.46 & 17.74 & 9.30 & 3.34 & 6.27 & 13.12 & 5.99 & 2.52 & 7.04 \\
		ROE-TRU & \pmb{4.33} & 7.10 & \pmb{2.02} & \pmb{3.09} & \pmb{4.47} & 4.33 & \pmb{1.90} & 7.65 & 13.14 & \pmb{6.36} & 2.43 & 5.30 & 9.79 & 4.84 & 2.64 & 5.29 \\
		\midrule
		& \multicolumn{15}{c}{Misclassification error(\%)} & \\
		Single & 32.94 & 30.14 & 27.29 & 26.70 & 29.18 & 25.13 & 29.39 & 29.93 & 32.81 & 35.86 & 28.18 & 40.08 & 32.68 & 35.09 & 29.73 & 31.01 \\
		Average & 41.47 & 36.59 & 30.65 & 33.29 & 31.74 & 36.04 & 32.95 & 34.07 & 36.31 & 34.58 & 33.18 & 44.00 & 33.69 & 30.92 & 34.36 & 34.92 \\
		Trim & 30.05 & 31.36 & 19.85 & 21.59 & 27.53 & 23.46 & 22.41 & 30.88 & 34.60 & 32.60 & 21.40 & 39.40 & 30.89 & 29.01 & \pmb{23.22} & 27.88 \\
		ROD-MWV & 29.77 & 32.28 & 19.75 & 21.75 & 28.24 & 25.14 & 23.37 & 31.33 & 35.83 & 33.65 & 22.49 & 39.74 & 31.57 & 28.89 & 23.85 & 28.51 \\
		ROE-MWV & \pmb{27.21} & \pmb{29.97} & 17.55 & \pmb{18.03} & \pmb{27.40} & \pmb{21.57} & 22.44 & \pmb{29.22} & \pmb{32.17} & \pmb{31.42} & 20.24 & \pmb{39.03} & \pmb{30.36} & \pmb{27.62} & 23.45 & \pmb{26.51} \\
		ROD-TRU & 32.25 & 31.68 & 22.41 & 22.48 & 28.45 & 25.67 & 24.30 & 31.19 & 34.92 & 32.77 & 24.22 & 39.63 & 31.26 & 31.24 & 24.33 & 29.12 \\
		ROE-TRU & 28.76 & 31.71 & \pmb{17.41} & 20.25 & 28.25 & 22.55 & \pmb{21.84} & 30.58 & 34.04 & 31.80 & \pmb{20.20} & 39.59 & 31.54 & 28.13 & 23.24 & 27.33 \\
		\bottomrule
        \end{tabular*}%
    }
    \end{threeparttable}
    \caption{MSE of $\vqhat$ and misclassification error for the comparison methods on the Zurich-Summer dataset.
        Results with the best performance are highlighted in bold.}
    \label{table:zurich}
\end{table}

For the study, we selected pixels corresponding to roads, buildings, trees, and grass, ensuring that these classes are represented across all images. Due to the inherent class imbalance across different images, there is no need to artificially alter the label proportions to introduce label shift.
In each experimental iteration, one image is designated as the target domain, from which $N=20,000$ pixels are randomly extracted to serve as target samples.
The remaining $19$ images are used as source domains, with $n=1,000$ pixels randomly extracted from each image to constitute the source samples.
For the contamination, an $\epsilon = 30\%$ proportion of sources are randomly selected as the outlier sources $\Ocal$. For each outlier source, all the original labels undergo the following cyclic perturbation: road $\rightarrow$ building $\rightarrow$  tree $\rightarrow$ grass $\rightarrow$ road.

The results are summarized in Table~\ref{table:zurich}, which presents the MSE of $\hat{\vq}$ and the misclassification error for scenarios in which the first $15$ images are successively selected as the target domain. 
The proposed ROE-based methods demonstrate a superior performance relative to other methodologies across the majority of these $15$ cases, achieving an overall enhancement in performance, thereby underscoring its effectiveness and robustness.

\subsubsection{Digital Recognition}
We explore the digital recognition task using the MNIST dataset, which consists of 60,000 training images and 10,000 test images, with labels ranging from digital``0" to digital ``9".
An autoencoder is employed and trained on the training images to extract four features from each image.
In this experimental setup, the test images are designated as target samples with balanced label proportion $(0.1,\ldots,0.1) \in \triangle_{10}$, while the training images are randomly allocated to $m=40$ sources. To introduce label shift between target and source domains, each source consists of $300$ images randomly drawn from the training dataset, with label proportion given by $(0.12,0.08,0.12,0.08,\dots,0.12,0.08) \in \triangle_{10}$.
For the contamination, an $\epsilon$ proportion of sources are randomly selected as the outlier sources $\Ocal$. For each outlier source, $50\%$ of the clean labels in classes 0 to 3 are shifted one position to the right, as illustrated below: $ 0\rightarrow1\rightarrow2\rightarrow 3 \rightarrow 4$. Various values of the corruption proportion $\epsilon$ are investigated, ranging from $5\%$ to $40\%$.

Table~\ref{table:mnist} summarizes the results in terms of the MSE of $\vqhat$ and the misclassification error. The results highlight that the ROE method outperforms other approaches, yielding lower MSE and misclassification rate, thereby underscoring the robustness and efficacy in the task. Methods based on truncation perform slightly worse than those based on MWV, as they remove a larger proportion of inlier sources, leading to a loss in efficiency.

\begin{table}[t]
	\centering
    \setlength\tabcolsep{0.3em}
    \renewcommand{\arraystretch}{1}
    \begin{threeparttable}
		\begin{tabular*}{.9\linewidth}{@{\extracolsep{\fill}}c*{8}{r}}
			\toprule
			\multirow{2}{*}{Method} & \multicolumn{8}{c}{Corruption proportion $ \epsilon$} \\
			& 0.05 & 0.10 & 0.15 & 0.20 & 0.25 & 0.30 & 0.35 & 0.40 \\
			\midrule
			 & \multicolumn{8}{c}{MSE ($1,000^{-1}$)} \\
			Single & 2.04 & 1.88 & 1.91 & 2.21 & 1.83 & 1.98 & 1.87 & 1.73 \\
                Average & 0.36 & 0.47 & 0.60 & 0.78 & 1.00 & 1.28 & 1.58 & 1.93 \\
                Trim & 0.33 & 0.36 & 0.35 & 0.35 & 0.34 & 0.36 & 0.34 & 0.36 \\
                ROD-MWV & 0.33 & 0.36 & 0.35 & 0.35 & 0.34 & 0.36 & 0.36 & 0.35 \\
                ROE-MWV & $\pmb{0.29}$ & $\pmb{0.26}$ & $\pmb{0.25}$ & $\pmb{0.26}$ & $\pmb{0.25}$ & $\pmb{0.28}$ & $\pmb{0.28}$ & $\pmb{0.31}$ \\
                ROD-TRU & 0.34 & 0.39 & 0.40 & 0.44 & 0.45 & 0.53 & 0.57 & 0.77 \\
                ROE-TRU & $\pmb{0.29}$ & 0.28 & 0.28 & 0.27 & 0.28 & 0.35 & 0.37 & 0.54 \\
			\midrule
			 & \multicolumn{8}{c}{Misclassification error (\%)} \\
			Single & 14.14 & 14.15 & 14.09 & 13.94 & 14.05 & 13.99 & 14.02 & 13.90 \\
                Average & 11.38 & 11.43 & 11.64 & 12.02 & 12.63 & 13.51 & 14.71 & 16.18 \\
                Trim & 11.47 & 11.52 & 11.52 & 11.52 & 11.50 & 11.53 & 11.52 & 11.53 \\
                ROD-MWV & 11.47 & 11.52 & 11.51 & 11.52 & 11.49 & 11.52 & 11.51 & 11.49 \\
                ROE-MWV & $\pmb{10.29}$ & $\pmb{10.30}$ & $\pmb{10.28}$ & $\pmb{10.29}$ & $\pmb{10.25}$ & $\pmb{10.27}$ & $\pmb{10.29}$ & $\pmb{10.27}$ \\
                ROD-TRU & 11.47 & 11.53 & 11.54 & 11.57 & 11.55 & 11.62 & 11.68 & 11.76 \\
                ROE-TRU & 10.31 & 10.32 & 10.32 & 10.33 & 10.34 & 10.39 & 10.44 & 10.54 \\
			\bottomrule
		\end{tabular*}%
    \end{threeparttable}
    \caption{MSE of $\vqhat$ and misclassification error for the comparison methods on the MNIST dataset. Results with the best performance are highlighted in bold.}
    \label{table:mnist}
\end{table}

\section{Concluding Remarks}\label{sec:conclusion}

In this paper, we present a robust solution to the unsupervised label shift adaption problem involving multiple sources, some of which may not conform to the label shift assumption.
We propose a robust estimation method for the target class proportion employing a weighted empirical risk minimization framework. 
The weights are determined through robust weighting algorithms on the excess risk values, which facilitates the detection of outlier sources through the variance reduction property.
Theoretically, we developed a general convergence theory for the proposed approach with general divergence and robust weighting methods.
By using MMD as the measure of distribution divergence and MWV as the robust weighting method, we provide specific theoretical error guarantees that match the optimal results in the robust learning literature. Extensive numerical experiments are also conducted to validate the proposed method's robustness and efficiency, demonstrating its superior performance.

Nonetheless, several advancing research issues are imperative. First, while a larger $\epsilon_h$ may enhance the model's robustness to contamination, excessively high values could undermine performance, illustrating a crucial trade-off. Therefore, data-driven selection of the hyperparameter $\epsilon_h$ deserves further studies. Second, we assume that a proportion of sources are corrupted in this work. It is promising to study the robust estimation task that all data sources are contaminated with a certain proportion of corrupted samples. 
Third, our theoretical analysis predominantly focuses on the setting with bounded variance of the loss function, such as the MMD loss with bounded kernel assumption. Relaxing the bounded variance condition, e.g., to the infinite variance setting \citep{xu2023infinite} and extending the results to other distribution measures, such as the measures based on optimal transport \citep{redko2019optimal}, require further studies. These considerations will be the subject of future research endeavors.

\appendix

\begin{center}
    \LARGE{Appendix}
\end{center}

\noindent The Appendix includes the proof of the theoretical results in Section~\ref{sec:theory}, further discussions regarding the existing literature on label shift adaptation and robust estimation, and additional numerical results.

\section{Proof of Theoretical Results}
This section provides detailed proofs of the theoretical results stated in Section~\ref{sec:theory}.
The section is organized as follows.
In Sections~\ref{sec:general_thm}--\ref{sec:proof_general_robust_loss}, we provide proofs for the theoretical results related to general distribution divergence losses and robust weighting functions.
Subsequently, in Section~\ref{sec:lem_mwv}, we explore the theoretical properties of the MWV estimator introduced in Eq.~(\ref{eq:m_weighted_var}).
With these properties, in Sections~\ref{sec:proof_upper_var}--\ref{sec:robust_divergence_mwv}, we continue to prove the specific results when the MMD loss and the MWV estimator are adopted. At the end, we illustrate the variance reduction phenomenon with the linear model example.

For the sake of simplicity, we establish the commonly used notation for the proof.
Let $\triangle_{m}$ be the $ m $-dimensional probability simplex.
Denote $\triangle_{m,\epsilon} = \{\w \in \triangle_{m}: w_{j} \leq m^{-1}(1-{\epsilon})^{-1} \}$, which indicates that at most ${\epsilon} m$ of the entries in $\w \in \triangle_{m,\epsilon}$ are zero.
Let $\Scal = \{x_i\}_{i \in [m]} \subset \real$ denotes a set of $m$ points.
Let $\mu_\Scal = \abs{\Scal}^{-1} \sum_{x_i \in \Scal} x_i$ and $ \sigma^2_{\Scal,\mu} = \abs{\Scal}^{-1} \sum_{x_i \in \Scal} (x_i - \mu)^2 $.
For a weight $\w \in \triangle_{m}$, let $\mu_{\w,\Scal} = \sum_{x_i \in \Scal} w_i x_i$ and $ \sigma^2_{\w,\Scal,\mu} = \sum_{x_i \in \Scal} w_i(x_i - \mu)^2 $.
Denote $\sigma^2_{\Scal} =\sigma^2_{\Scal,\mu_{\Scal}}$ for simplicity.
The weighted average estimator defined in Assumption~\ref{ass:uniform_converge_excess_risk} can be formulated as,
\[\RE_{\epsilon_h}(\{z_j\}_{j \in [m]}) = \sum_{j \in [m]} w_{j} z_j \text{ s.t. } \w = \RW_{\epsilon_h}(\{z_j\}_{j \in [m]}),\]
where $\{z_j\}_{j \in [m]} \subset \real$.

\subsection{Proof of Theorem~\ref{theorem:general_one_step_update}}\label{sec:general_thm}
\begin{proof}
	We specify the local minimizer to which the theorem refers.
	As demonstrated in Section~\ref{sec:optimization}, $\w = \RW_{\epsilon_h}(\{\Ecal_{j}(\vq,\vq')\})$ can be viewed as a function of $\vq$, i.e., $\w = \w(\vq)$.
	By Assumption~\ref{ass:weight_identifiability}, in the region $\{\vq: \lVert \vq - \vq^\ast\rVert_2 < C \sqrt{r_n}\}$,
	the optimization problem (\ref{equation:robust_w_loss}) share the same local minimizers with the problem,
	\begin{equation}\label{equation:robust_excess_risk}
		\vqhat_\Ecal = \argmin_{\vq \in \triangle_{K}} \sum_{j \in [m]} w_{j} \Ecal_{j}(\vq,\vq^\prime) \text{ s.t. } \w = \RW_{\epsilon_h}(\{\Ecal_{j}(\vq,\vq^\prime)\}).
	\end{equation}
	We will show that the global minimizer $\vqhat_\Ecal$ of Eq.~(\ref{equation:robust_excess_risk}) satisfies $\norm{\vqhat_\Ecal - \vq^\ast}_{2} \lesssim \max\{r_n,\sqrt{r_n r_n^\prime}\}$.
	Then since $r_n' \le 1$ and $r_n = o(1)$, we have $\max\{r_n,\sqrt{r_n r_n^\prime}\} \le \sqrt{r_n}$ and $\vqhat_\Ecal \in \{\vq: \lVert \vq - \vq^\ast\rVert_2 \le C \sqrt{r_n}\}$.
	Therefore, $\vqhat_\Ecal$ is the local minimizer with the desired error bound in the theorem.

	The error bound of $\vqhat_\Ecal$ comes from the following fact.
	\begin{equation*}
		\begin{aligned}
		& \lambda \norm{\vqhat_\Ecal - \vq^\ast}_{2}^{2}
        \leq \Ecal_\target(\vqhat_\Ecal, \vq^\ast) = \Ecal_\target(\vqhat_\Ecal, \vq^\prime) - \Ecal_\target(\vq^\ast,\vq^\prime) \\
		\leq &\Ecal_\target(\vqhat_\Ecal, \vq^\prime) - \Ecal_\target(\vq^\ast,\vq^\prime) - \hat{\Ecal}(\vqhat_\Ecal, \vq^\prime) + \hat{\Ecal}(\vq^\ast,\vq^\prime) \\
		\leq& 2 \max(\norm{\vqhat_\Ecal - \vq^\ast}_{2} + \norm{\vq^\prime - \vq^\ast}_{2}, r_n) \sup_{\vq_1, \vq_2 \in \triangle_{K}} \frac{\abs{\hat{\Ecal}(\vq_1,\vq_2) - \Ecal_Q(\vq_1,\vq_2)}}{\max(\norm{\vq_1 - \vq^\ast}_{2} + \norm{\vq_2 - \vq^\ast}_{2}, r_n)} \\
		\lesssim& \max(\norm{\vqhat_\Ecal - \vq^\ast}_{2}, r_n^\prime,r_n) \times r_n,
		\end{aligned}
	\end{equation*}
	where the first inequality is due to Assumption~\ref{ass:convex}, the second inequality comes from the fact that $\vqhat_\Ecal$ is the minimizer of $\hat{\Ecal}(\vq,\vq^\prime)$, and the last inequality holds with probability at least $1 - \tau$, which is due to Assumption~\ref{ass:uniform_converge_excess_risk}.
	After reorganization, we obtain the desired result:
	\[
        \norm{\vqhat_\Ecal - \vq^\ast}_{2} \lesssim \max\{r_n, \sqrt{r_n r_n^\prime}\}.
    \]
\end{proof}

\subsection{Proof of Corollary~\ref{corollary:general_multistep_update}}\label{sec:proof_cor}
\begin{proof}
    From the proof of Theorem~\ref{theorem:general_one_step_update}, we know that the probability $1 - \tau$ is from Assumption~\ref{ass:uniform_converge_excess_risk} and holds uniformly for all initial estimates $\vq^\prime$.
	By iteratively applying Theorem~\ref{theorem:general_one_step_update}, the results can be obtained.
\end{proof}

\subsection{Proof of Theorem~\ref{theorem:general_robust_loss}}\label{sec:proof_general_robust_loss}
\begin{proof}
	Note that
	\begin{equation*}
		\begin{aligned}
			\lambda \norm{\vqhat_\Lcal - \vq^\ast}_{2}^{2} & \leq \Lcal_\target(\vqhat_\Lcal) - \Lcal_\target(\vq^\ast) \\
			 & \leq \Lcal_\target(\vqhat_\Lcal) - \hat{\Lcal}(\vqhat_\Lcal)
			+ \hat{\Lcal}(\vq^\ast) - \Lcal_\target(\vq^\ast) \\
			 & \leq 2 \sup_{\vq \in \triangle_{K}} |\Lcal_\target(\vq) - \hat{\Lcal}(\vq)| \\
			 & \lesssim r_n,
		\end{aligned}
	\end{equation*}
	where the first inequality is due to Assumption~\ref{ass:convex}, the second inequality is from the fact that $\vqhat_\Lcal$ is the minimizer of $\hat{\Lcal}(\vq)$, and the last inequality is due to Assumption~\ref{ass:uniform_converge_divergence}.
	The desired conclusion follows immediately by taking the square root of both sides of the above inequality.
\end{proof}

\subsection{Properties of the MWV Estimator}\label{sec:lem_mwv}
In this section, we study the theoretical properties of the MWV estimator
\[
	\mathrm{MWV}_{\epsilon_h} (\{z_j\}_{j\in [m]}) = \abs{\widehat{\Jcal}}^{-1} \sum_{j \in \widehat{\Jcal}} z_j,
\]
where $\widehat{\Jcal}$ is defined in Eq.~(\ref{eq:m_weighted_var}).
The goal of the section is to prove that the MWV estimator satisfies the requirements in Assumptions~\ref{ass:uniform_converge_excess_risk} and \ref{ass:uniform_converge_divergence}, which are provided in Lemma~\ref{lemma:sup_of_adaptive_RB}.

Before proving Lemma~\ref{lemma:sup_of_adaptive_RB}, we first introduce the definition of $(\epsilon,\delta)$-stability and list several properties about it with self-contained proofs.
We refer to \cite{diakonikolas2020outlier} for more details.
\begin{definition}[$ (\epsilon, \delta) $-stability]
	Let $\delta>0$, $0<\epsilon<1/2$, we say that $\Scal$ is $(\epsilon, \delta)$-stable with respect to $\mu$ if, for any subset $\Scal^\prime \subseteq \Scal$ with $|\Scal^\prime| \geq m(1 - \epsilon)$, the following conditions hold: 1) $ |\mu_{\Scal^\prime} - \mu| \leq \delta $ and 2) $ \sigma^2_{\Scal^\prime,\mu} \leq \delta^2 / \epsilon$.
\end{definition}

The following lemma demonstrates that after removing/adding some points in an $(\epsilon,\delta)$-stable set $\Scal$, the error of the mean of the new set $\Scal^\prime$ can be controlled once the sample variance of $\Scal^\prime$ is bounded.
\begin{lemma}\label{lemma:efficiency_for_methods_based_on_stability}
	If $\Scal$ is an $(\epsilon, \delta)$-stable set with respect to $\mu$, and for any set $\Scal_1$ such that $|\Scal \cap \Scal_1| = (1 - \epsilon_1) |\Scal_1|$, $|\Scal \cap \Scal_1| \geq (1 - \epsilon) |\Scal|$ and $\sigma^2_{\Scal_1} \leq \xi$, then we have $ |\mu_{\Scal_1} - \mu| \leq \delta + \sqrt{ \xi \epsilon_1/(1-\epsilon_1)}$.
\end{lemma}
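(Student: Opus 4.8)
The plan is to run the standard stability-based decomposition of $\Scal_1$ into its ``clean'' and ``corrupted'' parts and to control the displacement of the mean caused by each. First I would set $G = \Scal \cap \Scal_1$ and $B = \Scal_1 \setminus \Scal$, so that $|G| = (1-\epsilon_1)|\Scal_1|$ and $|B| = \epsilon_1|\Scal_1|$. The key observation is that $G \subseteq \Scal$ with $|G| = |\Scal \cap \Scal_1| \geq (1-\epsilon)|\Scal|$, so $G$ is an admissible subset in the definition of $(\epsilon,\delta)$-stability of $\Scal$; hence $|\mu_{G} - \mu| \leq \delta$ (and also $\sigma^2_{G,\mu} \leq \delta^2/\epsilon$, though only the former is needed here).

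Next I would write the mean of $\Scal_1$ as the convex combination $\mu_{\Scal_1} = (1-\epsilon_1)\mu_{G} + \epsilon_1 \mu_{B}$, which gives $\mu_{\Scal_1} - \mu_{G} = \epsilon_1(\mu_{B} - \mu_{G})$. By the triangle inequality, $|\mu_{\Scal_1} - \mu| \leq |\mu_{G} - \mu| + \epsilon_1|\mu_{B} - \mu_{G}| \leq \delta + \epsilon_1|\mu_{B} - \mu_{G}|$, so it remains to bound the between-group mean gap $|\mu_{B} - \mu_{G}|$ through the sample variance of $\Scal_1$.

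The crux is a variance decomposition. Computing the sample variance of $\Scal_1$ around its own mean and splitting the sum over $G$ and $B$ yields
\[
	\sigma^2_{\Scal_1} = (1-\epsilon_1)\sigma^2_{G} + \epsilon_1 \sigma^2_{B} + \epsilon_1(1-\epsilon_1)(\mu_{G} - \mu_{B})^2,
\]
where the between-group term arises from $\mu_{G} - \mu_{\Scal_1} = \epsilon_1(\mu_{G} - \mu_{B})$ and $\mu_{B} - \mu_{\Scal_1} = (1-\epsilon_1)(\mu_{B} - \mu_{G})$. Since all three summands are nonnegative and $\sigma^2_{\Scal_1} \leq \xi$, I obtain $\epsilon_1(1-\epsilon_1)(\mu_{G} - \mu_{B})^2 \leq \xi$, that is, $|\mu_{B} - \mu_{G}| \leq \sqrt{\xi/(\epsilon_1(1-\epsilon_1))}$. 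Substituting into the triangle-inequality bound gives $\epsilon_1|\mu_{B} - \mu_{G}| \leq \sqrt{\xi\epsilon_1/(1-\epsilon_1)}$ and hence $|\mu_{\Scal_1} - \mu| \leq \delta + \sqrt{\xi\epsilon_1/(1-\epsilon_1)}$, as claimed.

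As for difficulty, there is no genuine obstacle here: the only point requiring care is that $\sigma^2_{\Scal_1}$ is the variance around $\mu_{\Scal_1}$ rather than around $\mu$, so the cross terms in the decomposition must be tracked exactly in order to produce the sharp factor $\epsilon_1(1-\epsilon_1)$ instead of a looser constant. I also note that the stability hypothesis enters only through the first-moment bound $|\mu_{G} - \mu| \leq \delta$; the second stability condition controlling $\sigma^2_{G,\mu}$ is not invoked in this particular lemma.
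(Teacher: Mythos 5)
Your proof is correct and follows essentially the same route as the paper's: the same split $\Scal_1 = (\Scal\cap\Scal_1)\cup(\Scal_1\setminus\Scal)$, the same within/between variance decomposition yielding $|\mu_{G}-\mu_{B}|\le\sqrt{\xi/(\epsilon_1(1-\epsilon_1))}$, and the same triangle-inequality combination with the first stability condition. Your closing observation that only the first-moment half of the stability hypothesis is invoked also matches the paper's argument.
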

\begin{proof}
	Let $\Tcal = \Scal \cap \Scal_1$ and $\Tcal_1 = \Scal_1 \setminus \Scal$.
	A straightforward calculation gives
	\[\sigma^2_{\Scal_1}= (1-\epsilon_1)\sigma^2_{\Tcal} + \epsilon_1 \sigma^2_{\Tcal_1} + \epsilon_1 (1-\epsilon_1) (\mu_{\Tcal} - \mu_{\Tcal_1})^2,\]
	which implies
	\[|\mu_{\Tcal} - \mu_{\Tcal_1}| \leq \sqrt{\frac{\sigma^2_{\Scal_1}}{\epsilon_1(1-\epsilon_1)}} \leq \sqrt{ \frac{\xi}{ \epsilon_1 (1-\epsilon_1)}} .\]
	The final result is obtained through the following calculation:
	\begin{align*}
		|\mu_{\Scal_1} - \mu| & = |(1-\epsilon_1)\mu_{\Tcal} + \epsilon_1 \mu_{\Tcal_1} - \mu| \\
		 & \leq |\mu_{\Tcal} - \mu | + \epsilon_1 |\mu_{\Tcal} - \mu_{\Tcal_1}| \\
		 & \leq \delta + \sqrt{ \xi \epsilon_1/(1-\epsilon_1)},
	\end{align*}
	where we use the $ (\epsilon,\delta) $-stability of $\Scal$ and $\abs{\Tcal} \geq (1-\epsilon)\abs{\Scal}$.
\end{proof}

The following lemma provides sufficient deterministic conditions for a set to be $(\epsilon,\delta)$-stable.
\begin{lemma}\label{lemma:entire_set_stability_to_subset_stability}
	If $|\mu_\Scal - \mu| \leq \delta_1$ and $\sigma^2_{\Scal,\mu} \leq \delta_1^2/\epsilon_1$, then for any $\epsilon_2 \leq 1/2$, $\Scal$ is $(\epsilon_2, \delta_2)$-stable with respect to $\mu$, where $\delta_2 = \delta_1(1+\sqrt{2\epsilon_2/\epsilon_1})$.
\end{lemma}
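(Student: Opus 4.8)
The plan is to verify directly the two defining conditions of $(\epsilon_2,\delta_2)$-stability for an arbitrary admissible subset, since the hypotheses only control the full set. Fix any $\Scal' \subseteq \Scal$ with $|\Scal'| \ge (1-\epsilon_2)m$, write $\Scal'' = \Scal \setminus \Scal'$, and set $\alpha = |\Scal''|/m \in [0,\epsilon_2]$. The whole argument rests on the exact splitting identities for the empirical first and second moments taken about the fixed center $\mu$, namely $\mu_\Scal = (1-\alpha)\mu_{\Scal'} + \alpha\,\mu_{\Scal''}$ and $\sigma^2_{\Scal,\mu} = (1-\alpha)\sigma^2_{\Scal',\mu} + \alpha\,\sigma^2_{\Scal'',\mu}$, both of which follow from partitioning the defining sums over $\Scal$ into their parts indexed by $\Scal'$ and $\Scal''$. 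These two identities, together with $\epsilon_2\le 1/2$ (so that $(1-\alpha)^{-1}\le 2$), are the only tools needed.

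I would first dispatch the variance condition, which is the cleaner of the two. Since every term in the second identity is nonnegative, discarding the $\Scal''$ contribution gives $\sigma^2_{\Scal',\mu} \le (1-\alpha)^{-1}\sigma^2_{\Scal,\mu} \le 2\delta_1^2/\epsilon_1$, where I used the hypothesis $\sigma^2_{\Scal,\mu}\le\delta_1^2/\epsilon_1$. It then suffices to check $2\delta_1^2/\epsilon_1 \le \delta_2^2/\epsilon_2$, which is immediate because, with $t=\sqrt{2\epsilon_2/\epsilon_1}$, one has $\delta_2^2 = \delta_1^2(1+t)^2 \ge \delta_1^2 t^2 = \delta_1^2\,(2\epsilon_2/\epsilon_1)$; multiplying by $\epsilon_2^{-1}$ gives exactly what is required. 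This settles condition (2).

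For the mean condition I would bound the deviation of the removed block and feed it back into the first splitting identity. Discarding instead the $\Scal'$ term in the variance identity gives $\alpha\,\sigma^2_{\Scal'',\mu} \le \sigma^2_{\Scal,\mu} \le \delta_1^2/\epsilon_1$, and Jensen's inequality (the squared mean of the centered values is at most their mean square) converts this into $|\mu_{\Scal''}-\mu| \le \sqrt{\sigma^2_{\Scal'',\mu}} \le \delta_1/\sqrt{\alpha\epsilon_1}$. Rearranging the first identity as $(1-\alpha)(\mu_{\Scal'}-\mu) = (\mu_\Scal-\mu) - \alpha(\mu_{\Scal''}-\mu)$ and applying the triangle inequality with $|\mu_\Scal-\mu|\le\delta_1$ yields
\[
|\mu_{\Scal'}-\mu| \le \frac{\delta_1}{1-\alpha}\Bigl(1+\sqrt{\alpha/\epsilon_1}\Bigr),
\]
after which $(1-\alpha)^{-1}\le 2$ and $\sqrt{\alpha/\epsilon_1}\le\sqrt{\epsilon_2/\epsilon_1}$ over the range $\alpha\le\epsilon_2\le 1/2$ are meant to collapse the right-hand side into the advertised form $\delta_1(1+\sqrt{2\epsilon_2/\epsilon_1})$.

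The main obstacle is the constant bookkeeping in this last collapse: the factor $(1-\alpha)^{-1}$ generated by deleting an $\alpha$-fraction is exactly what forces the restriction $\epsilon_2\le 1/2$, and one must confirm that absorbing it does not overshoot $\delta_2$. I expect the cleanest route is to retain the sharp intermediate bound $\tfrac{\delta_1}{1-\alpha}(1+\sqrt{\alpha/\epsilon_1})$, observe that it is increasing in $\alpha$ (a product of positive increasing factors) so that the worst case is $\alpha=\epsilon_2$, and then reduce the claim to the single scalar inequality $\tfrac{1}{1-\epsilon_2}\bigl(1+\sqrt{\epsilon_2/\epsilon_1}\bigr)\le 1+\sqrt{2\epsilon_2/\epsilon_1}$. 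Verifying this inequality is the one genuinely delicate computation, and it is precisely where the factor $\sqrt{2}$ appearing in $\delta_2$ (rather than $1$) and the bound $\epsilon_2\le 1/2$ are consumed; by contrast, condition (2) is routine once the splitting identities are recorded.
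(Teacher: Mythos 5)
Your variance condition is correct and coincides with the paper's argument, but the mean condition has a genuine gap: the scalar inequality you defer to the end, $\frac{1}{1-\epsilon_2}\bigl(1+\sqrt{\epsilon_2/\epsilon_1}\bigr)\le 1+\sqrt{2\epsilon_2/\epsilon_1}$, is false over much of the allowed range. Take $\epsilon_1=\epsilon_2=1/2$: the left side is $4$ while the right side is $1+\sqrt{2}\approx 2.41$; even $\epsilon_1=\epsilon_2=0.2$ gives $2.5 > 1+\sqrt{2}$; and as $\epsilon_2\to 1/2$ with $\epsilon_1\to 0$ the ratio of the two sides tends to $\sqrt{2}$, so no asymptotic regime rescues it. The loss is structural rather than a bookkeeping slip: by routing the estimate through the removed block $\Scal''$, you pay a factor $(1-\alpha)^{-1}$ on the $|\mu_\Scal-\mu|\le\delta_1$ term and a factor $\sqrt{\alpha}/(1-\alpha)$ (rather than the sharp $\sqrt{\alpha/(1-\alpha)}$) on the second-moment term; at $\alpha=\epsilon_2=1/2$ your intermediate bound $\frac{\delta_1}{1-\alpha}(1+\sqrt{\alpha/\epsilon_1})$ equals $2\delta_1(1+\sqrt{\epsilon_2/\epsilon_1})$, roughly twice the target. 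So your argument only establishes $(\epsilon_2, c\,\delta_2)$-stability with $c\approx 2$ --- enough for the paper's downstream uses, which are all up to constants, but not the lemma with the stated $\delta_2$.

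The fix is to compare $\mu_{\Scal'}$ to $\mu_\Scal$ in a single step, which is the paper's route: writing $w_i = |\Scal'|^{-1}\indicator_{\{x_i\in\Scal'\}}$ and applying Cauchy--Schwarz,
\[
|\mu_{\Scal'}-\mu_\Scal| = \Bigabs{\sum_{i\in[m]}(w_i - 1/m)(x_i-\mu)} \le \sqrt{m\sum_{i\in[m]}(w_i-1/m)^2}\,\sqrt{\sigma^2_{\Scal,\mu}},
\]
where a direct computation gives $m\sum_{i\in[m]}(w_i-1/m)^2 = \alpha/(1-\alpha) \le \epsilon_2/(1-\epsilon_2)\le 2\epsilon_2$ for $\epsilon_2\le 1/2$. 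Combined with the triangle inequality and $|\mu_\Scal-\mu|\le\delta_1$, this yields exactly $\delta_1(1+\sqrt{2\epsilon_2/\epsilon_1})=\delta_2$. The point is that the denominator $1-\alpha$ stays inside the square root, which is precisely where your two separate $(1-\alpha)^{-1}$ factors overshoot; your splitting identities and the monotonicity-in-$\alpha$ reduction are fine, but they cannot recover the stated constant.
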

\begin{proof}
	For any subset $\Scal_1 \subset \Scal$ and $|\Scal_1|\geq (1-\epsilon_2)m$, the variance of $\Scal_1$ is bounded by
	\begin{align*}
		\sigma^2_{\Scal_1,\mu} \leq \frac{1}{1-\epsilon_2} \sigma^2_{\Scal,\mu} \leq 2\delta_1^2/\epsilon_1 \leq \delta_2^2/\epsilon_2.
	\end{align*}
	Denote $\w = (w_i)_{i\in [m]}$ such that $w_i = 1/\abs{\Scal_1}$ if $x_i \in \Scal_1$ otherwise $0$.
	The mean of $\Scal_1$ is bounded by
	\begin{align*}
		|\mu_{\Scal_1} - \mu| & \leq |\mu_{\Scal_1} - \mu_\Scal| + |\mu_\Scal - \mu| \\
		 & = \Bigabs{\sum_{i \in [m]} (w_i - 1/m) (x_i - \mu)} + |\mu_\Scal - \mu| \\
		 & \leq \sqrt{m \sum_{i \in [m]} (w_i - 1/m)^2} \sqrt{m^{-1}\sum_{i \in [m]} (x_i - \mu)^2} + \delta_1 \\
		 & \leq \sqrt{2\epsilon_2} \times \sqrt{\delta_1^2/\epsilon_1} + \delta_1
		\leq \delta_2,
	\end{align*}
	which follows from Hölder's inequality.
\end{proof}
The following lemma shows that if $\Scal$ contains a stable subset with respect to $\mu$, then the MWV method produces a robust estimator of $\mu$ with a small error.

\begin{lemma}\label{lemma:stability_on_corrupted_set}
	Assume that there exists a subset $\Scal_1 \subset \Scal$ with $\abs{\Scal_1} = (1-\epsilon_1) \abs{\Scal}$ such that $\abs{\mu_{\Scal_1} - \mu} \leq \delta_1$ and $\sigma^2_{\Scal_1,\mu} \leq \delta_1^2 /\epsilon_1$.
	Choose $\epsilon_h$ as $ \epsilon_1 \le \epsilon_h <1/2$.
	Let the MWV mean estimator $\mu_{\Scal^\ast}$ be determined by solving
	$$\Scal^\ast = \argmin_{\Jcal \subset \Scal, |\Jcal| \geq (1 - \epsilon_h) m} \sigma_{\Jcal}^2.$$
	We have $\abs{\mu_{\Scal^\ast} - \mu} \leq O(\delta_1 (1 + \sqrt{\epsilon_h / \epsilon_1}))$.
	Additionally if $\epsilon_h = O(\epsilon_1)$, then $\abs{\mu_{\Scal^\ast} - \mu} \leq O(\delta_1)$.
\end{lemma}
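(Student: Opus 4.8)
The plan is to reduce the statement to the two deterministic stability results already in hand, Lemma~\ref{lemma:entire_set_stability_to_subset_stability} and Lemma~\ref{lemma:efficiency_for_methods_based_on_stability}, applied with the good subset $\Scal_1$ playing the role of the stable reference set and the MWV-selected set $\Scal^\ast$ playing the role of the candidate set. First I would upgrade the two hypotheses on $\Scal_1$ — namely $|\mu_{\Scal_1} - \mu| \le \delta_1$ and $\sigma^2_{\Scal_1,\mu} \le \delta_1^2/\epsilon_1$ — into $(\epsilon_2,\delta_2)$-stability of $\Scal_1$ via Lemma~\ref{lemma:entire_set_stability_to_subset_stability}, with $\delta_2 = \delta_1(1 + \sqrt{2\epsilon_2/\epsilon_1})$ and $\epsilon_2$ the overlap-deficit parameter determined below (which stays below $1/2$ in the regime of interest).

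The crux of the argument is the optimality of the MWV selection. Since $\epsilon_1 \le \epsilon_h$, the good set $\Scal_1$, of size $(1-\epsilon_1)m \ge (1-\epsilon_h)m$, is feasible for the minimization defining $\Scal^\ast$; hence $\sigma^2_{\Scal^\ast} \le \sigma^2_{\Scal_1} \le \sigma^2_{\Scal_1,\mu} \le \delta_1^2/\epsilon_1$, where the middle step uses that the empirical variance is minimized at the empirical mean. This supplies the variance control $\xi = \delta_1^2/\epsilon_1$ required as input to Lemma~\ref{lemma:efficiency_for_methods_based_on_stability}, without any reference to the structure of the corruption.

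It then remains to control the overlap between $\Scal_1$ and $\Scal^\ast$, which is the main bookkeeping obstacle. By inclusion–exclusion, $|\Scal^\ast \setminus \Scal_1| \le |\Scal \setminus \Scal_1| = \epsilon_1 m$ and $|\Scal_1 \setminus \Scal^\ast| \le |\Scal \setminus \Scal^\ast| = \epsilon_h m$, so, writing $|\Scal_1 \cap \Scal^\ast| = (1-\epsilon')|\Scal^\ast|$, one gets $\epsilon' \le \epsilon_1/(1-\epsilon_h) \le 2\epsilon_1$ and a deficit relative to $\Scal_1$ of at most $\epsilon_2 := \epsilon_h/(1-\epsilon_1)$. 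Feeding $\epsilon'$, $\epsilon_2$, and $\xi$ into Lemma~\ref{lemma:efficiency_for_methods_based_on_stability} yields
\[
|\mu_{\Scal^\ast} - \mu| \le \delta_2 + \sqrt{\xi\,\epsilon'/(1-\epsilon')} \le \delta_1\bigl(1 + 2\sqrt{\epsilon_h/\epsilon_1}\bigr) + \delta_1\sqrt{2/(1-2\epsilon_h)},
\]
where the last term uses $\epsilon' \le 2\epsilon_1$ together with $\sigma^2_{\Scal^\ast} \le \delta_1^2/\epsilon_1$.

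Finally I would collect constants. Because $\epsilon_h$ is taken bounded away from $1/2$ in the setting considered, $\sqrt{2/(1-2\epsilon_h)} = O(1)$, so the bound simplifies to $|\mu_{\Scal^\ast} - \mu| \lesssim \delta_1(1 + \sqrt{\epsilon_h/\epsilon_1})$, which is the first claim; the additional assumption $\epsilon_h = O(\epsilon_1)$ makes $\sqrt{\epsilon_h/\epsilon_1} = O(1)$ and collapses the bound to $O(\delta_1)$. I expect the only genuine difficulty to be the overlap accounting: one must simultaneously verify that the deficit $\epsilon_2$ fed to Lemma~\ref{lemma:entire_set_stability_to_subset_stability} obeys $\epsilon_2 \le 1/2$ and that the candidate-set hypotheses of Lemma~\ref{lemma:efficiency_for_methods_based_on_stability} hold with these $\epsilon'$ and $\epsilon_2$, so that the two stability lemmas chain together and produce exactly the advertised rate.
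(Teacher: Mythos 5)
Your proof is correct and takes essentially the same route as the paper's: upgrade $\Scal_1$ to $(\epsilon_2,\delta_2)$-stability via Lemma~\ref{lemma:entire_set_stability_to_subset_stability} with $\epsilon_2 = \epsilon_h/(1-\epsilon_1)$, use feasibility of $\Scal_1$ in the MWV minimization to get $\sigma^2_{\Scal^\ast} \le \delta_1^2/\epsilon_1$, and chain through Lemma~\ref{lemma:efficiency_for_methods_based_on_stability} with the identical overlap accounting ($\epsilon' \le \epsilon_1/(1-\epsilon_h)$ relative to $\Scal^\ast$, deficit $\epsilon_2$ relative to $\Scal_1$). Your intermediate step $\sigma^2_{\Scal^\ast} \le \sigma^2_{\Scal_1} \le \sigma^2_{\Scal_1,\mu}$, invoking that the empirical variance is minimized at the empirical mean, is in fact slightly more careful than the paper's corresponding line, which writes $\sigma^2_{\Scal^\ast,\mu} \le \sigma^2_{\Scal_1,\mu}$ even though the minimization directly controls only the variance about the selected set's own mean.
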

\begin{proof}
	Given that $|\mu_{\Scal_1} - \mu| \leq \delta_1$ and $\sigma^2_{\Scal_1,\mu} \leq \delta_1^2/\epsilon_1$, by Lemma~\ref{lemma:entire_set_stability_to_subset_stability}, $\Scal_1$ is $(\epsilon_2, \delta_2)$-stable with respect to $\mu$, where $\epsilon_2 = \epsilon_h/(1-\epsilon_1) $ and $\delta_2 = \delta_1 (1+\sqrt{2\epsilon_2/\epsilon_1})$.
	Since $\abs{\Scal_1} = (1-\epsilon_1) \abs{\Scal} \geq (1-\epsilon_h) \abs{\Scal}$, we have $ \sigma^2_{\Scal^\ast,\mu} \leq \sigma^2_{\Scal_1,\mu} \leq \delta_1^2/\epsilon_1$ by the definition of $\Scal^\ast$.

	Since $\abs{\Scal^\ast \cap \Scal_1} \geq (1-\epsilon_h-\epsilon_1)m $, we have
	\[
		\abs{\Scal^\ast \cap \Scal_1} \geq \frac{1-\epsilon_h-\epsilon_1}{1-\epsilon_h} \abs{\Scal^\ast} = (1-\frac{\epsilon_1}{1-\epsilon_h}) \abs{\Scal^\ast},
	\]
	and
	\[
		\abs{\Scal^\ast \cap \Scal_1} \geq \frac{1-\epsilon_h-\epsilon_1}{1-\epsilon_1} \abs{\Scal_1} = (1-\epsilon_2) \abs{\Scal_1}.
	\]
	Applying Lemma~\ref{lemma:efficiency_for_methods_based_on_stability} with $\xi = \delta_1^2/\epsilon_1$, we obtain
	\[
		|\mu_{\Scal^\ast} - \mu| = \delta_2 + \sqrt{\frac{\delta_1^2}{\epsilon_1} \frac{\epsilon_1/(1 - \epsilon_h)}{1- \epsilon_1/(1 - \epsilon_h)}} = O\biggl(\delta_1 \biggl(1 + \sqrt{\frac{\epsilon_h}{ \epsilon_1}}\biggr)\biggr).
	\]
	And when $\epsilon_h = O(\epsilon_1)$, $|\mu_{\Scal^\ast} - \mu| = O(\delta_1)$.
\end{proof}

By applying Lemma~\ref{lemma:entire_set_stability_to_subset_stability}, we show that when the distributions of the samples are with bounded variance, then a stable subset exists with high probability.
\begin{lemma}\label{lemma:control_the_variance}
	Let $\Scal = \{x_i\}_{i \in [m]}$ be a set of $m$ random variables such that $\E[x_i] = \mu$ and $\max_{i \in [m]} \Var(x_i) \leq \sigma^2 < \infty$.
	For $\epsilon^\prime \ge C\log(1/\tau) / m$ with the constant $C > 0$ specified in the proof, with probability at least $1 - \tau$, there exists a subset $\Scal_1 \subset \Scal$ with $\abs{\Scal_1} \geq (1-\epsilon^\prime) \Scal$ such that
	\[
		|\mu_{\Scal_1} - \mu| \leq 6 \sigma \sqrt{\epsilon^\prime}, \, \sigma^2_{\Scal_1,\mu} \leq \frac{2 \sigma^2}{1-\epsilon^\prime}.
	\]
\end{lemma}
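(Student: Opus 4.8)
The plan is to construct the stable subset explicitly by truncation: fix a threshold $T \asymp \sigma/\sqrt{\epsilon^\prime}$ and let $\Scal_1 = \{x_i \in \Scal : |x_i - \mu| \le T\}$ consist of the ``good'' points lying within $T$ of the true mean. Since the claim only asserts existence, it is harmless that $\Scal_1$ depends on the unknown $\mu$. The three requirements---$|\Scal_1| \ge (1-\epsilon^\prime)m$, the bound on $\sigma^2_{\Scal_1,\mu}$, and the bound on $|\mu_{\Scal_1}-\mu|$---will each be shown to hold on a high-probability event, and a union bound with rescaled constants will yield the final probability $1-\tau$. Throughout I would use that the $x_i$ are independent (as they are in the paper's application, where the coordinates correspond to independent sources); this is what makes the concentration steps below possible, since bounded variance alone does not permit concentrating $\sum_i (x_i-\mu)^2$.

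First I would control the number of discarded points. Writing $B = \abs{\{i: |x_i-\mu|>T\}} = \sum_i \indicator_{|x_i-\mu|>T}$, Chebyshev gives $\P(|x_i-\mu|>T) \le \sigma^2/T^2$, so $\E[B]\le m\sigma^2/T^2$; choosing $T = \sigma\sqrt{c_1/\epsilon^\prime}$ makes $\E[B] \le m\epsilon^\prime/c_1$. A multiplicative Chernoff bound for the sum of independent indicators then gives $B < \epsilon^\prime m$ with probability $1-\tau/3$ provided $\epsilon^\prime m \gtrsim \log(1/\tau)$, which is exactly the stated lower bound on $\epsilon^\prime$. This yields $|\Scal_1| = m - B \ge (1-\epsilon^\prime)m$.

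Next I would bound the two moments over $\Scal_1$ via Bernstein's inequality applied to truncated summands. For the variance, set $Y_i = (x_i-\mu)^2\indicator_{|x_i-\mu|\le T}$, so that $\sum_{x_i\in\Scal_1}(x_i-\mu)^2 = \sum_i Y_i$; here $0\le Y_i\le T^2$, $\E Y_i \le \Var(x_i)\le\sigma^2$, and $\Var(Y_i)\le T^2\sigma^2$. Bernstein gives $\sum_i Y_i \le 2m\sigma^2$ with probability $1-\tau/3$, again under $\epsilon^\prime m \gtrsim \log(1/\tau)$ because the exponent scales like $m\sigma^2/T^2 \asymp m\epsilon^\prime$; dividing by $|\Scal_1|\ge(1-\epsilon^\prime)m$ gives $\sigma^2_{\Scal_1,\mu}\le 2\sigma^2/(1-\epsilon^\prime)$. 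For the mean, set $W_i = (x_i-\mu)\indicator_{|x_i-\mu|\le T}$, so $\sum_{x_i\in\Scal_1}(x_i-\mu)=\sum_i W_i$ with $|W_i|\le T$ and $\Var(W_i)\le\sigma^2$. The truncation introduces a bias $|\E W_i| = |\E[(x_i-\mu)\indicator_{|x_i-\mu|>T}]| \le \sigma^2/T \asymp \sigma\sqrt{\epsilon^\prime}$, using $|x_i-\mu|\le(x_i-\mu)^2/T$ on the tail, while Bernstein controls the fluctuation of $\sum_i W_i$ about its mean at scale $m\sigma\sqrt{\epsilon^\prime}$ with probability $1-\tau/3$. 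Combining bias and fluctuation gives $|\sum_i W_i|\lesssim m\sigma\sqrt{\epsilon^\prime}$, hence $|\mu_{\Scal_1}-\mu|\lesssim\sigma\sqrt{\epsilon^\prime}/(1-\epsilon^\prime)$, which is at most $6\sigma\sqrt{\epsilon^\prime}$ after fixing $c_1$ and the Bernstein constants appropriately.

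The main obstacle is the heavy-tailed nature of the problem: with only $\max_i\Var(x_i)\le\sigma^2$, the raw averages $\frac1m\sum_i(x_i-\mu)^2$ and $\frac1m\sum_i(x_i-\mu)$ need not concentrate, so truncation at the carefully chosen level $T\asymp\sigma/\sqrt{\epsilon^\prime}$ is essential. This threshold must be tuned so that simultaneously (i) few points are removed, (ii) the truncated second moment stays $O(\sigma^2)$ on average, and (iii) the truncation bias in the mean is only $O(\sigma\sqrt{\epsilon^\prime})$; the common scale $T^2\asymp\sigma^2/\epsilon^\prime$ is what forces all three Chernoff/Bernstein exponents to be of order $m\epsilon^\prime$, and thereby pins down the requirement $\epsilon^\prime\gtrsim\log(1/\tau)/m$. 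Tracking the absolute constants through these three steps is routine but is where the specific numerical factors $6$ and $2$ in the statement are verified.
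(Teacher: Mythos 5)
Your proposal is correct and follows essentially the same route as the paper's proof: truncation of the points at threshold $T \asymp \sigma/\sqrt{\epsilon^\prime}$ around the true mean (the paper takes $T = 2\sigma/\sqrt{\epsilon^\prime}$ after centering at $\mu = 0$), followed by three concentration steps at the common exponent scale $m\epsilon^\prime$ --- controlling the number of removed points, the truncated second moment, and the truncated first moment with its $O(\sigma\sqrt{\epsilon^\prime})$ bias --- and a union bound that pins down the requirement $\epsilon^\prime \gtrsim \log(1/\tau)/m$. The only differences are cosmetic (you invoke Chernoff/Bernstein where the paper uses Bennett, and you divide $\sum_i W_i$ directly by $\abs{\Scal_1}$ rather than passing through the all-$m$ average as the paper does), and your explicit remark that independence of the $x_i$ is needed is apt, since the paper's statement leaves it implicit while its proof relies on it.
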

\begin{proof}
	Without loss of generality, assume that $\mu = 0$.
	We can choose $\Scal_1 = \Scal_2 \cap \Scal$ with $\Scal_2 = \{z_i: z_i = x_i \indicator{\{\abs{x_i}\leq 2\sigma/\sqrt{\epsilon^\prime}\}} \}$.
	We will show that $\Scal_1$ satisfies the three requirements listed in the Lemma.

	Firstly, we establish a lower bound for the size of $\Scal_1$.
	Applying Chebyshev's inequality, we obtain $ \Pr (|x_i| > 2\sigma/\sqrt{\epsilon^\prime}) \leq \E[\mathbbm{1}{\{|x_i| > 2 \sigma/\sqrt{\epsilon^\prime}\}}] \leq \epsilon^\prime /4 $ and $ \Var[\mathbbm{1}{\{|x_i| > 2 \sigma/\sqrt{\epsilon^\prime}\}}] \leq \epsilon^\prime/4$.
	Then applying Bennett’s inequality \citep[see][Theorem 2.9.2]{vershynin2020high}, we have
	\begin{align*}
		\Pr\Bigl(\sum_{x_i \in \Scal}\bigl\{\indicator{\{|x_i| > 2 \sigma/\sqrt{\epsilon^\prime}\}} - \E[\mathbbm{1}{\{|x_i|
				> 2 \sigma/\sqrt{\epsilon^\prime}\}}]\bigr\} > \frac{1}{2} \epsilon^\prime m \Bigr) \leq \exp(- c_1 \epsilon^\prime m )
	\end{align*}
	for some constant $c_1$, which indicates
	\begin{equation}\label{eq:num_of_removed_points}
		\begin{aligned}
			\abs{\Scal_1} & = m - \sum_{x_i \in \Scal} \mathbbm{1}{\{|x_i| > 2 \sigma/\sqrt{\epsilon^\prime}\}} \\
			 & \geq m - \sum_{x_i \in \Scal} \E[\mathbbm{1}{\{|x_i| > 2 \sigma/\sqrt{\epsilon^\prime}\}}] - \frac{1}{2} \epsilon^\prime m \\
			 & \geq (1 - \epsilon^\prime) m
		\end{aligned}
	\end{equation}
	with probability at least $1-\exp(- c_1 \epsilon^\prime m )$.

	Next, we establish the bound for $\sigma^2_{\Scal_1,\mu}$.
	Note that
	\begin{equation*}
		\sigma^2_{\Scal_1,\mu} = \frac{1}{\abs{\Scal_1}} \sum_{z_i \in \Scal_1} z_i^2 = \frac{1}{\abs{\Scal_1}} \sum_{i \in [m]} z_i^2.
	\end{equation*}
	It is sufficient to derive an upper bound for $\sum_{i \in \Scal_2} z_i^2$.
	Given that $\E[z_i^2] \leq \E[x_i^2] \leq \sigma^2$, $\Var(z_i^2) \leq \E[z_i^4] \leq 4 \E[x_i^2 \sigma^2/\epsilon^\prime] \leq 4 \sigma^4/\epsilon^\prime$ and $\abs{z_i} \leq 2\sigma /\sqrt{\epsilon^\prime}$, by Bennett's inequality \citep[see][Theorem 2.9.2]{vershynin2020high}, we have
	\[\Pr\Bigl( \sum_{i \in [m]} (z_i^2 - \E[z_i^2]) > m\sigma^2 \Bigr) \leq \exp ( - c_2 \epsilon^\prime m ),\]
	for some constant $c_2$. It implies $ \sum_{i \in [m]} z_i^2 \leq 2 m \sigma^2$ with probability at least $1-\exp ( - c_2 \epsilon^\prime m )$.
	Combining the above inequality with $|\Scal_1| \ge (1 - \epsilon^\prime) m$, we obtain
	\begin{equation*}
		\sigma^2_{\Scal_1,\mu} = \frac{1}{\abs{\Scal_1}} \sum_{i \in [m]} z_i^2 \leq \frac{2 \sigma^2}{1-\epsilon^\prime},
	\end{equation*}
	with probability at least $1 - \exp ( - c_1 \epsilon^\prime m ) -\exp ( - c_2 \epsilon^\prime m )$.

	Finally, we establish the bound for $\mu_{\Scal_1}$, which is given by
	\begin{align*}
		\abs{\mu_{\Scal_1}} & \leq \abs{\mu_{\Scal_1} - \mu_{\Scal_2}} + \abs{\mu_{\Scal_2}} \\
		 & = \biggl|\sum_i \Bigl(\frac{1}{\abs{\sS_1}} - \frac{1}{m}\Bigr) z_i \biggr| +
		\biggl|\frac{1}{m} \sum_{i \in [m]} z_i \biggr| \\
		 & \leq \frac{2\sigma}{\sqrt{\epsilon^\prime}} \sum_i \Bigl|\frac{1}{\abs{\sS_1}} - \frac{1}{m}\Bigr| +
		\biggl|\frac{1}{m} \sum_{i \in [m]} z_i \biggr| \\
		 & \leq 4\sigma\sqrt{\epsilon^\prime} +
		\biggl|\frac{1}{m} \sum_{i \in [m]} z_i \biggr|,
	\end{align*}
	where the second inequality comes from $\abs{z_i} \leq 2\sigma/\sqrt{\epsilon^\prime}$ and the third inequality is due to $\abs{\Scal_1} \geq (1 - \epsilon^\prime) m$.
	Given that $\abs{\E[z_i]} = \abs{\E[x_i \indicator{\{\abs{x_i} > 2\sigma/\sqrt{\epsilon^\prime}\}}]} \leq \sigma \sqrt{\epsilon^\prime}$, $\Var(z_i) \leq \sigma^2$ and $\abs{z_i} \leq 2\sigma/\sqrt{\epsilon^\prime}$, by Bennett's inequality,
	\[
		\Pr\Bigl( \Big|\sum_{i \in [m]} (z_i - \E[z_i])\Big| \geq m \sigma \sqrt{\epsilon^\prime} \Bigr) \leq \exp ( - c_3 m\epsilon^\prime ),
	\]
	for some constant $c_3$, which indicates
	\[
		\Bigl|\frac{1}{m} \sum_{i \in [m]} z_i \Bigr| \leq 2 \sigma \sqrt{\epsilon^\prime},
	\]
	with probability at least $1-\exp(-c_3 \epsilon^\prime m)$.
	Therefore, we have $\abs{\mu_{\Scal_1}} \leq 6 \sigma\sqrt{\epsilon^\prime}$.

	Combining the above results, we obtain $|\mu_{\Scal_1}| \leq 6 \sigma \sqrt{\epsilon^\prime} $ and $\sigma^2_{\Scal_1,\mu} \le 2 \sigma^2/(1-\epsilon^\prime)$
	with probability at least $1 - \exp(-c_1 \epsilon^\prime m) + \exp(-c_2 \epsilon^\prime m) + \exp(-c_3 \epsilon^\prime m)$.
	Recalling that $\epsilon^\prime \ge C\log(1/\tau) / m$, we can choose $C$ sufficiently large so that $\tau > \exp(-c_1 \epsilon^\prime m) + \exp(-c_2 \epsilon^\prime m) + \exp(-c_3 \epsilon^\prime m)$.
\end{proof}

In Lemma~\ref{lemma:main_lemma_in_RB}, we prove that a stable subset still exists in the corrupted set with high probability.
\begin{lemma}\label{lemma:main_lemma_in_RB}
	Let $\Tcal$ be an $\epsilon$-corrupted version of the set $\Scal$ considered in Lemma~\ref{lemma:control_the_variance} such that $|\Tcal \cap \Scal| \geq (1-\epsilon) m$ and $\abs{\Tcal} = m$.
	By setting $\epsilon_1 = (1 + c) \epsilon + C \log(1/\tau) / m$ for any constant $c>0$ and $C > 0$ specified in Lemma~\ref{lemma:control_the_variance}, with probability at least $1-\tau$, there exists a subset $\Tcal^\prime \subset \Tcal$ with $\abs{\Tcal^\prime} \geq (1-\epsilon_1)m$ such that
	\begin{equation*}
		\abs{\mu_{\Tcal^\prime} - \mu} = O\biggl(\sigma \sqrt{\epsilon} + \sigma \sqrt{\frac{\log(1/\tau)}{m}}\biggr)\, \mbox{ and }\,
		\sigma^2_{\Tcal^\prime,\mu} = O(\sigma^2).
	\end{equation*}
	In other words, let the entries of the weight vector $\w \in \triangle_{m, \epsilon_1}$ be such that $w_i = 1/\abs{\Tcal^\prime}$ if $x_i \in \Tcal^\prime$ otherwise $0$, we have
	\begin{equation*}
		\abs{\mu_{\w,\Tcal} - \mu} = O\biggl(\sigma \sqrt{\epsilon} + \sigma \sqrt{\frac{\log(1/\tau)}{m}}\biggr)\, \mbox{ and }\,
		\sigma^2_{\w,\Tcal,\mu} = O(\sigma^2).
	\end{equation*}
\end{lemma}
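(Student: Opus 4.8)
The plan is to locate the required subset $\Tcal'$ inside the clean portion of $\Tcal$ by first extracting a well-behaved subset of the uncorrupted set $\Scal$ and then discarding the at most $\epsilon m$ corrupted points. First I would apply Lemma~\ref{lemma:control_the_variance} to $\Scal$ with the choice $\epsilon' = c\epsilon + C\log(1/\tau)/m$, where $C$ is the constant from that lemma. Since $\epsilon' \ge C\log(1/\tau)/m$, its hypothesis is met, and with probability at least $1-\tau$ there is a subset $\Scal_1 \subseteq \Scal$ with $|\Scal_1| \ge (1-\epsilon')m$, $|\mu_{\Scal_1} - \mu| \le 6\sigma\sqrt{\epsilon'}$, and $\sigma^2_{\Scal_1,\mu} \le 2\sigma^2/(1-\epsilon')$. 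This is the only place randomness enters, so the whole conclusion inherits the $1-\tau$ probability.

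Next I would set $\Tcal' = \Scal_1 \cap \Tcal$. Because $|\Scal \setminus \Tcal| \le \epsilon m$ and $\Scal_1 \subseteq \Scal$, deleting the corrupted points costs at most $\epsilon m$ elements, so $|\Tcal'| \ge (1-\epsilon'-\epsilon)m \ge (1-\epsilon_1)m$ by the choice $\epsilon_1 = (1+c)\epsilon + C\log(1/\tau)/m$. Crucially $\Tcal' \subseteq \Scal_1 \subseteq \Scal$, so every point of $\Tcal'$ is an original clean sample, and $\Tcal'$ is a large subset of $\Scal_1$: the removed fraction is at most $\epsilon_2 := \epsilon/(1-\epsilon') \le 1/2$ in the relevant regime.

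For the mean I would promote the two deterministic bounds on $\Scal_1$ to an $(\epsilon_2,\delta_2)$-stability statement via Lemma~\ref{lemma:entire_set_stability_to_subset_stability}: writing $\delta_1 = 6\sigma\sqrt{\epsilon'}$ and solving $\sigma^2_{\Scal_1,\mu} \le \delta_1^2/\tilde\epsilon$ gives $\tilde\epsilon = \Theta(\epsilon')$, so $\Scal_1$ is $(\epsilon_2,\delta_2)$-stable with respect to $\mu$ for $\delta_2 = \delta_1(1+\sqrt{2\epsilon_2/\tilde\epsilon})$. Condition (1) of stability applied to $\Tcal' \subseteq \Scal_1$ then yields $|\mu_{\Tcal'} - \mu| \le \delta_2$ directly, and substituting the parameter values collapses $\delta_2$ to $O(\sigma\sqrt{\epsilon'}) + O(\sigma\sqrt{\epsilon}) = O(\sigma\sqrt{\epsilon} + \sigma\sqrt{\log(1/\tau)/m})$ after using $\sqrt{\epsilon'} \lesssim \sqrt{\epsilon} + \sqrt{\log(1/\tau)/m}$. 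For the variance I would avoid stability condition (2), which loses a factor $\epsilon'/\epsilon$, and instead bound directly: since $\Tcal' \subseteq \Scal_1$, $\sigma^2_{\Tcal',\mu} \le (|\Scal_1|/|\Tcal'|)\,\sigma^2_{\Scal_1,\mu} \le (1-\epsilon_2)^{-1}\,2\sigma^2/(1-\epsilon') = O(\sigma^2)$. The reformulation in terms of $\w$ is then immediate, since $\mu_{\w,\Tcal} = \mu_{\Tcal'}$ and $\sigma^2_{\w,\Tcal,\mu} = \sigma^2_{\Tcal',\mu}$, while $\w \in \triangle_{m,\epsilon_1}$ follows from $|\Tcal'| \ge (1-\epsilon_1)m$.

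The main obstacle is the constant bookkeeping needed to make the three parameter roles mutually consistent: the stability level $\tilde\epsilon$ extracted from the variance bound, the removal fraction $\epsilon_2$, and the budget $\epsilon_1$, all while keeping $\epsilon_2 \le 1/2$ and $\epsilon' \ge C\log(1/\tau)/m$. In particular one must verify that the cross term $\delta_1\sqrt{2\epsilon_2/\tilde\epsilon}$ contributes only an $O(\sigma\sqrt{\epsilon})$ piece rather than inflating the bound, which is precisely what the calibration $\epsilon' \asymp \epsilon + \log(1/\tau)/m$ is engineered to guarantee.
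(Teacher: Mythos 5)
Your proposal matches the paper's proof essentially step for step: both invoke Lemma~\ref{lemma:control_the_variance} with $\epsilon^\prime = c\epsilon + C\log(1/\tau)/m$, take $\Tcal^\prime = \Scal_1 \cap \Tcal$ (so $\abs{\Tcal^\prime} \ge (1-\epsilon-\epsilon^\prime)m = (1-\epsilon_1)m$), and promote the bounds on $\Scal_1$ to $(\epsilon_2,\delta_2)$-stability via Lemma~\ref{lemma:entire_set_stability_to_subset_stability} to get $\abs{\mu_{\Tcal^\prime}-\mu} \le \delta_2 = O(\sigma\sqrt{\epsilon^\prime})$. Your only deviation is harmless and in fact slightly cleaner: for the variance you bound $\sigma^2_{\Tcal^\prime,\mu} \le (\abs{\Scal_1}/\abs{\Tcal^\prime})\,\sigma^2_{\Scal_1,\mu} = O(\sigma^2)$ directly by monotonicity, whereas the paper routes through stability condition (2), which forces it to inflate $\epsilon_2$ to $\{\epsilon + \log(1/\tau)/m\}/(1-\epsilon^\prime)$ so that $\delta_2^2/\epsilon_2 = O(\sigma^2)$; both yield the same conclusion.
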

\begin{proof}
	We can choose $\epsilon^\prime = c \epsilon + C\log(1/\tau)/m$ in Lemma~\ref{lemma:control_the_variance}.
	With probability at least $ 1 - \tau$, there exists a subset $\Scal_1 \subset \Scal$ with $ |\Scal_1| \geq (1-\epsilon^\prime)m $ such that $ |\mu_{\Scal_1} - \mu| = O(\delta_1) $ and $ \sigma_{\Scal_1,\mu}^2 =O(\delta_1^2/\epsilon^\prime)$ with $\delta_1 = \sigma \sqrt{\epsilon^\prime}$.
	Applying Lemma~\ref{lemma:entire_set_stability_to_subset_stability}, we have $ \Scal_1 $ is $ (\epsilon_2,\delta_2) $-stable with respect to $ \mu $, where $\epsilon_2 = \{\epsilon + \log(1/\tau) / m\} / (1 - \epsilon^\prime)$
	and $\delta_2 = O(\delta_1 ( 1 + \sqrt{\epsilon_2/\epsilon^\prime})) = O(\sigma \sqrt{\epsilon^\prime})$.

	We choose the desired set as $\Tcal^\prime = \Scal_1 \cap \Tcal$ and $\epsilon_1 = \epsilon + \epsilon^\prime = (1 + c) \epsilon + C \log(1/\tau) / m$.
	It satisfies that $\abs{\Tcal^\prime} \ge |\Scal_1| - \epsilon m \ge (1-\epsilon_2) \abs{\Scal_1}$ and $\abs{\Tcal^\prime} \ge (1 - \epsilon_1) m$.
	By the stability of $\Scal_1$, we have
	\begin{equation*}
		\abs{\mu_{\Tcal^\prime} - \mu} = O(\sigma \sqrt{\epsilon^\prime}) = O\biggl(\sigma \sqrt{\epsilon} + \sigma \sqrt{\frac{\log(1/\tau)}{m}}\biggr) \, \mbox{ and }\,
		\sigma^2_{\Tcal^\prime,\mu} = O(\epsilon_2^{-1}\delta_2^2) = O(\sigma^2).
	\end{equation*}
\end{proof}

Equipped with the lemmas above, we can now begin proving the core results of this section.
Note that $\Lcal_{j,\targetx}(\vq) = \Lcal(\targetx,\sum_{k = 1}^K q_k \sourcexcyhat)$ and $\Ecal_{j,\targetx}(\vq,\vq^\prime) = \Lcal_{j,\targetx} (\vq) - \Lcal_{j,\targetx} (\vq^\prime)$.
The difference between $\Lcal_{j,\targetx}(\vq)$ and $\Lcal_{j}(\vq)$ lies in whether $\targetx$ or $\targetxhat$ is used to compute the divergence.
The introduction of $\Lcal_{j,\targetx}(\vq)$ is motivated by the fact that they are independent of each other and closely approximate $\Lcal_{j}(\vq)$.
We assume that $\Lcal_{j,\targetx}(\vq)$ and $\Ecal_{j,\targetx}(\vq,\vq^\prime)$ satisfy the following assumptions.

\begin{assumption}\label{ass:lipschitz}
	For any $ j \in [m] $, $\Lcal_{j,\targetx} (\vq)$ and $\Lcal_\target(\vq)$ are Lipschitz continuous, with a common Lipschitz constant $L$, i.e., $ |\Lcal_{j,\targetx} (\vq_1) - \Lcal_{j,\targetx} (\vq_2)| \leq L \norm{\vq_1 - \vq_2}_{2} $ and $ |\Lcal_\target (\vq_1) - \Lcal_\target (\vq_2)| \leq L \norm{\vq_1 - \vq_2}_{2} $ for any $\vq_1$ and $\vq_2 \in \triangle_{K}$.
\end{assumption}

\begin{assumption}\label{ass:target_adaptive_convergence}
	There exists a sequence $a_{m, N} = o(1)$ such that the following inequalities hold with probability at least $1-\tau/2$.

	(i) The empirical excess risk satisfies that
	\[\max_{j \in [m]} \sup_{\vq_1,\vq_2 \in \triangle_{K}} \frac{\abs{\Ecal_{j}(\vq_1,\vq_2) - \Ecal_{j,\targetx}(\vq_1,\vq_2)}}{\norm{\vq_1 - \vq^\ast}_{2} + \norm{\vq_2 - \vq^\ast}_{2}} \le a_{m, N}.\]

	(ii) The empirical divergence satisfies that
	\[\max_{j \in [m]} \sup_{ \vq \in \triangle_{K}} \abs{\Lcal_{j}(\vq) - \Lcal_{j,\targetx}(\vq)} \leq a_{m, N}.\]
\end{assumption}
Assumption~\ref{ass:target_adaptive_convergence} measures the difference between $\targetx$ and $\targetxhat$. Therefore, the error bound $a_{m, N}$ mainly depends on $N$ and $m$.

\begin{assumption}\label{ass:adaptive_moment}
	There exists a set $\Ical \subset [m]$ with $\abs{\Ical} \geq (1-\epsilon) m$ such that $ \E [\Lcal_{j,\targetx} (\vq)] = \Lcal_\target(\vq) $ for any $j \in \Ical$ and $ \vq \in \triangle_{K} $.
	Furthermore, there exists a sequence $\sigma_n = o(1)$ such that the following inequalities hold.

	(i)
	\[\max_{j \in \Ical} \sup_{\vq_1,\vq_2 \in \triangle_{K}} \frac{\Var[\Ecal_{j,\targetx}(\vq_1,\vq_2)]}{(\norm{\vq_1 - \vq^\ast}_{2} + \norm{\vq_2 - \vq^\ast}_{2})^2}\leq \sigma^2_{n}.\]

	(ii)
	\[
		\max_{j \in \Ical} \sup_{\vq \in \triangle_{K}} \Var[\Lcal_{j,\targetx} (\vq)] \leq \sigma^2_{n}.
	\]
\end{assumption}

In the next section, we will verify the above assumptions for the MMD loss and specify the rate of the scaling sequence $\sigma_n$.
Lemma~\ref{lemma:sup_of_adaptive_RB} below provides bounds in Assumptions~\ref{ass:uniform_converge_excess_risk} and \ref{ass:uniform_converge_divergence} for the MWV estimator.

\begin{lemma} \label{lemma:sup_of_adaptive_RB}
	Choosing $\epsilon_h = (1 + c) \epsilon + C \log(1/\tau) / m$ for any constant $c>0$, and $C > 0$ specified in Lemma~\ref{lemma:control_the_variance}, the following results hold.

	(i) Under Assumption~\ref{ass:lipschitz}, \ref{ass:target_adaptive_convergence}(i) and \ref{ass:adaptive_moment}(i), with probability at least $1-\tau$, it follows
	\begin{align*}
		\sup_{\vq_1,\vq_2 \in \triangle_{K}} & \frac{\abs{\mathrm{MWV}_{\epsilon_h}(\{\Ecal_j(\vq_1,\vq_2)\}_{j \in [m]}) - \Ecal_{\target}(\vq_1,\vq_2)}}{\max(\norm{\vq_1 - \vq^\ast}_{2} + \norm{\vq_2 - \vq^\ast}_{2},r_n)} \lesssim r_n.
	\end{align*}

	(ii) Under Assumptions~\ref{ass:lipschitz},\ref{ass:target_adaptive_convergence}(ii) and \ref{ass:adaptive_moment}(ii), with probability at least $1-\tau$, it follows
	\[\sup_{\vq \in \triangle_{K}} \abs{\mathrm{MWV}_{\epsilon_h}(\{\Lcal_{j}(\vq)\}_{j \in [m]}) - \Lcal_\target(\vq)} \lesssim r_n,\]
	where $r_n = \sigma_n \sqrt{\epsilon} + \sigma_n \sqrt{\frac{\log(1/\tau) + K \log (Lm/\sigma_n) }{m}} + a_{m, N}$.
\end{lemma}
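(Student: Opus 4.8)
The plan is to prove Lemma~\ref{lemma:sup_of_adaptive_RB} by combining the deterministic stability machinery (Lemmas~\ref{lemma:efficiency_for_methods_based_on_stability}--\ref{lemma:main_lemma_in_RB}) with a uniform-over-$\triangle_K$ covering argument. For a fixed pair $(\vq_1,\vq_2)$, the excess-risk values $\{\Ecal_{j,\targetx}(\vq_1,\vq_2)\}_{j\in\Ical}$ form an independent sample with common mean $\Ecal_\target(\vq_1,\vq_2)$ and variance bounded by $\sigma_n^2(\norm{\vq_1-\vq^\ast}_2+\norm{\vq_2-\vq^\ast}_2)^2$ under Assumption~\ref{ass:adaptive_moment}(i). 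Applying Lemma~\ref{lemma:main_lemma_in_RB} with this variance scale gives, pointwise and with high probability, that a stable subset exists, so that Lemma~\ref{lemma:stability_on_corrupted_set} (with $\epsilon_h=O(\epsilon_1)$) yields $|\mathrm{MWV}_{\epsilon_h}(\{\Ecal_{j,\targetx}(\vq_1,\vq_2)\})-\Ecal_\target(\vq_1,\vq_2)|=O(\sigma_n(\norm{\vq_1-\vq^\ast}_2+\norm{\vq_2-\vq^\ast}_2)+\sigma_n\sqrt{\log(1/\tau)/m})$. Dividing by the $\max(\cdot,r_n)$ normalizer immediately yields the $\lesssim r_n$ bound for a single pair, after absorbing the $\sigma_n\sqrt{\epsilon}$ and $\sigma_n\sqrt{\log(1/\tau)/m}$ contributions into $r_n$.

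First I would reduce the empirical quantities $\Ecal_j(\vq_1,\vq_2)$ to the population-target surrogates $\Ecal_{j,\targetx}(\vq_1,\vq_2)$ using Assumption~\ref{ass:target_adaptive_convergence}(i): since the MWV estimator is a weighted average with weights in $\triangle_{m,\epsilon_h}$, the discrepancy between $\mathrm{MWV}$ applied to the two families is at most $\max_j|\Ecal_j-\Ecal_{j,\targetx}|\le a_{m,N}(\norm{\vq_1-\vq^\ast}_2+\norm{\vq_2-\vq^\ast}_2)$, which contributes the $a_{m,N}$ term to $r_n$ after normalization. The same reduction for part (ii) uses Assumption~\ref{ass:target_adaptive_convergence}(ii). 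This step is routine but essential, because the stability lemmas require independent samples, a property only the $\targetx$-surrogates enjoy.

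The main obstacle is upgrading the pointwise bound to the uniform-over-$\triangle_K$ bound, which is where Assumption~\ref{ass:lipschitz} and the $K\log(Lm/\sigma_n)$ term enter. The plan is a standard chaining/net argument: build a $\delta$-net $\mathcal{N}$ of $\triangle_K\times\triangle_K$ of cardinality $(C/\delta)^{2K}$, apply the pointwise concentration at each net point via a union bound (so $\log|\mathcal{N}|\asymp K\log(1/\delta)$ enters the exponent, forcing the $\sqrt{K\log(Lm/\sigma_n)/m}$ summand once $\delta$ is chosen of order $\sigma_n/(Lm)$), and then control the fluctuation between a generic $(\vq_1,\vq_2)$ and its nearest net point. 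For the off-net control, since $\mathrm{MWV}$ selects a subset $\widehat{\Jcal}$ and averages, Lipschitzness of each $\Lcal_{j,\targetx}$ propagates to Lipschitzness of the $\mathrm{MWV}$ output as a function of $(\vq_1,\vq_2)$, and $\Ecal_\target$ is Lipschitz by Assumption~\ref{ass:lipschitz} as well, so the increment is $O(L\delta)$, negligible after the net is fine enough. The subtlety is that the selected set $\widehat{\Jcal}$ itself depends on $(\vq_1,\vq_2)$, so one cannot naively differentiate; the clean workaround is to bound $|\mathrm{MWV}(\vq)-\mathrm{MWV}(\vq')|$ by noting that both $\mathrm{MWV}$ values are sandwiched between trimmed averages, each of which moves by at most $L\delta$, which preserves the Lipschitz-type control uniformly. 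Once both the net union bound and the off-net Lipschitz slack are assembled, combining with the pointwise stability estimate and the surrogate-reduction step gives the stated $r_n$, completing both parts (i) and (ii).
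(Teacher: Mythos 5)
There is a genuine gap, and it sits in the two places where you treat the MWV output as stable under pointwise perturbation of its inputs: the reduction from $\{\Ecal_j\}$ to the surrogates $\{\Ecal_{j,\targetx}\}$, and the off-net increment control. Your claim that ``the discrepancy between $\mathrm{MWV}$ applied to the two families is at most $\max_j|\Ecal_j-\Ecal_{j,\targetx}|$'' presumes the same weight vector is selected for both families, but $\RW_{\epsilon_h}$ re-solves the variance-minimizing subset problem on each input set, and the selected subsets can differ; MWV is not monotone in its inputs, so no analogue of the translation/monotonicity argument is available. Your proposed workaround for the net step fails for the same reason: the MWV output is indeed sandwiched between the means of the lowest and highest $(1-\epsilon_h)m$ order statistics, and each of those trimmed means moves by at most $L\delta$, but being sandwiched between two Lipschitz functions does not make the middle quantity Lipschitz --- the gap between the two trimmed means is of the order of the spread of the data, and since the $\lfloor\epsilon m\rfloor$ outlier values are arbitrary, that gap is uncontrolled. (This perturbation-stability property is exactly what the truncated mean \emph{does} enjoy --- it is Lemma~\ref{lemma:target_convergence} in the paper, proved via monotonicity and additivity, and it drives the paper's proof of Lemma~\ref{lemma:adaptive_trunc_convergence_rate}. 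In effect you have written the proof for the truncation estimator, not for MWV.)

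The paper's proof of Lemma~\ref{lemma:sup_of_adaptive_RB} circumvents this by never comparing MWV outputs at different inputs. At each net point $(\vq_1',\vq_2')$ it uses Lemma~\ref{lemma:main_lemma_in_RB} to extract a \emph{fixed} weight vector $\w\in\triangle_{m,\epsilon_h}$ with $|\mu_{\w,\Zcal_1}|$ and $\sigma^2_{\w,\Zcal_1,0}$ controlled, and then transports the \emph{stability certificate} (not the estimator) to the actual empirical values $\Zcal_3=\{z_j\}$ at the off-net point: the Lipschitz bound $|z_{j,\targetx}-z'_{j,\targetx}|\le 4L\gamma$ and Assumption~\ref{ass:target_adaptive_convergence} give $|\mu_{\w,\Zcal_3}|$ directly, while the weighted second moment is bounded via $\sum_j w_j z_j^2 \le 3\sum_j w_j\{(z_j-z_{j,\targetx})^2+(z_{j,\targetx}-z'_{j,\targetx})^2+(z'_{j,\targetx})^2\}$. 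With mean and variance of this subset of the \emph{actual} corrupted values in hand, Lemma~\ref{lemma:stability_on_corrupted_set} is invoked afresh on $\{\Ecal_j(\vq_1,\vq_2)\}$ itself, so the data-dependence of the MWV-selected set $\widehat{\Jcal}$ never needs to be continuous in $(\vq_1,\vq_2)$. Your pointwise step, net construction, union bound with $\tilde\tau=\gamma^{2K}\tau$, and the resulting $K\log(Lm/\sigma_n)$ factor all match the paper; the repair needed is to replace both of your Lipschitz-propagation steps with this ``perturb the certificate, re-apply Lemma~\ref{lemma:stability_on_corrupted_set}'' argument.
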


\begin{proof}
	We first prove Part (i).
	We can construct a $\gamma$-net of $\triangle^{K} \times \triangle^{K}$ with cardinality $ O(\gamma^{-2K}) $, where $\gamma$ will be determined later.
	For any $(\vq_1,\vq_2) \in \triangle^{K} \times \triangle^{K}$, there exist $(\vq_1^\prime , \vq_2^\prime)$ in this $\gamma$-net such that $\max(\norm{\vq_1 - \vq_1^\prime}_{2},\norm{\vq_2 - \vq_2^\prime}_{2}) \leq \gamma$.
	Let $\Zcal_1 = \{z_{j,\targetx}^\prime\}_{j \in [m]}$, $\Zcal_2 = \{z_{j,\targetx}\}_{j \in [m]}$ and $\Zcal_3 = \{z_{j}\}_{j \in [m]}$, where $z_{j,\targetx}^\prime = \Ecal_{j,\targetx}(\vq_1^\prime,\vq_2^\prime) - \Ecal_Q(\vq_1^\prime,\vq_2^\prime)$, $ z_{j,\targetx} = \Ecal_{j,\targetx}(\vq_1,\vq_2) - \Ecal_Q(\vq_1,\vq_2)$ and $ z_{j} = \Ecal_{j}(\vq_1,\vq_2) - \Ecal_Q(\vq_1,\vq_2)$, respectively.

	Under Assumption~\ref{ass:adaptive_moment}(i), we have $\E[z_{j,\targetx}^\prime] = 0$ and $\Var(z_{j,\targetx}^\prime) \leq \sigma_n^2 (\norm{\vq_1^\prime - \vq^\ast}_{2} + \norm{\vq_2^\prime - \vq^\ast}_{2})^2$.
	By Lemma~\ref{lemma:main_lemma_in_RB}, we can find a subset in $\Zcal_1$ and a corresponding weight $\w \in \triangle_{m, \epsilon_h}$ such that
	\begin{align*}
		|\mu_{\w,\Zcal_1}| & = O\Bigl(\delta_n (\norm{\vq_1^\prime - \vq^\ast}_{2} + \norm{\vq_2^\prime - \vq^\ast}_{2})\Bigr) \\
		\sigma^2_{\w,\Zcal_1,0} & = O\Bigl(\delta_n^2 (\norm{\vq_1^\prime - \vq^\ast}_{2} + \norm{\vq_2^\prime - \vq^\ast}_{2})^2/\epsilon_h\Bigr)
	\end{align*}
	with probability at least $1-\tilde{\tau}/2$ for a fixed $(\vq_1^\prime, \vq_2^\prime)$, where $\delta_n = O(\sigma_n \sqrt{\epsilon} + \sigma_n \sqrt{\log(1/\tilde{\tau})/m})$.

	Given the Lipschitz continuity of $\Lcal_{j,\targetx}(\vq)$, we have $ |z_{j,\targetx} - z_{j,\targetx}^\prime| \leq 2 L (\norm{\vq_1 - \vq_1^\prime}_{2} + \norm{\vq_2 - \vq_2^\prime}_{2})\leq 4L \gamma $ and $|\mu_{\w,\Zcal_2} - \mu_{\w,\Zcal_1} | \leq 4L \gamma$.
	Under Assumption~\ref{ass:target_adaptive_convergence}(i), we also have $ |z_{j,\targetx} - z_{j}| \leq a_{m,N} (\norm{\vq_1 - \vq^\ast}_{2} + \norm{\vq_2 - \vq^\ast}_{2})$ and $|\mu_{\w,\Zcal_2} - \mu_{\w,\Zcal_3} | \leq a_{m,N} (\norm{\vq_1 - \vq^\ast}_{2} + \norm{\vq_2 - \vq^\ast}_{2} )$ with probability at least $1-\tau/2$ uniformly for all $(\vq_1, \vq_2)$.
	By setting $\gamma = L^{-1}\delta_n r_n$,
	we obtain
	\begin{align*}
		|\mu_{\w,\Zcal_3}| & \leq |\mu_{\w,\Zcal_3} - \mu_{\w,\Zcal_2} |+|\mu_{\w,\Zcal_2} - \mu_{\w,\Zcal_1} |+ |\mu_{\w,\Zcal_1}| \\
		 & = O\Bigl(a_{m,N} (\norm{\vq_1 - \vq^\ast}_{2} + \norm{\vq_2 - \vq^\ast}_{2}) + \gamma + \delta_n (\norm{\vq_1^\prime - \vq^\ast}_{2} + \norm{\vq_2^\prime - \vq^\ast}_{2}) \Bigr) \\
		 & = O\Bigl((\delta_n + a_{m, N}) \max(\norm{\vq_1 - \vq^\ast}_{2} + \norm{\vq_2 - \vq^\ast}_{2},r_n)\Bigr),
	\end{align*}
	and
	\begin{align*}
		\sigma^2_{\w,\Zcal_3,0} & = \sum_{j \in [m]} w_{j}z_{j}^2 \\
		 & \leq 3 \sum_{j \in [m]} w_{j}\bigl\{ (z_{j} - z_{j,\targetx})^2 + (z_{j,\targetx} - z_{j,\targetx}^\prime)^2 + (z_{j,\targetx}^\prime)^2 \bigr\} \\
		 & = O\Bigl(a_{m, N}^2 (\norm{\vq_1 - \vq^\ast}_{2} + \norm{\vq_2 - \vq^\ast}_{2})^2 + \gamma^2 + \delta_n^2 (\norm{\vq_1^\prime - \vq^\ast}_{2} + \norm{\vq_2^\prime - \vq^\ast}_{2})^2/\epsilon_h \Bigr) \\
		 & = O\Bigl((\delta_n + a_{m,N})^2 \max(\norm{\vq_1 - \vq^\ast}_{2} + \norm{\vq_2 - \vq^\ast}_{2},r_n)^2/\epsilon_h\Bigr).
	\end{align*}
	By Lemma~\ref{lemma:stability_on_corrupted_set}, we conclude that with probability at least $1 - (\tau + \tilde{\tau}) / 2$,
	\[\frac{\abs{\mathrm{MWV}_{\epsilon_h}(\{\Ecal_j(\vq_1,\vq_2)\}_{j \in [m]}) - \Ecal_{\target}(\vq_1,\vq_2)}}	{\max(\norm{\vq_1 - \vq^\ast}_{2} + \norm{\vq_2 - \vq^\ast}_{2},r_n)} \leq \sigma_n \sqrt{\epsilon} + \sigma_n \sqrt{\frac{\log(1/\tilde{\tau})}{m}} + a_{m, N},\]
	for any $\vq_1, \vq_2$ such that $\max(\norm{\vq_1 - \vq_1^\prime}_{2},\norm{\vq_2 - \vq_2^\prime}_{2}) \leq \gamma$.

	By choosing $\tilde{\tau} = \gamma^{2K} \tau$ and taking the union bound over all the $(\vq_1^\prime, \vq_2^\prime)$ in the $\gamma$-net, we obtain
	\begin{align*}
		 & \sup_{\vq_1, \vq_2 \in \triangle_{K}} \frac{\abs{\mathrm{MWV}_{\epsilon_h}(\{\Ecal_j(\vq_1,\vq_2)\}_{j \in [m]}) - \Ecal_{\target}(\vq_1,\vq_2)}}	{\max(\norm{\vq_1 - \vq^\ast}_{2} + \norm{\vq_2 - \vq^\ast}_{2},r_n)} \\
		\lesssim & \sigma_n \sqrt{\epsilon} + \sigma_n \sqrt{\frac{\log(1/\tau) + K \log (L m/\sigma_n) }{m}} + a_{m, N},
	\end{align*}
	which holds with probability at least $1-\tau$.

	We then study Part (ii).
	Similarly, we can construct a $ \gamma $-net of $ \triangle_{K} $ with cardinality $ O(1/\gamma^K) $, where $ \gamma = L^{-1}(\sigma_n \sqrt{\epsilon} + \sigma_n \sqrt{\log(1/\tau)/m} )$.
	For any $ \vq \in \triangle_{K} $, there exists $\vq^\prime$ in this $ \gamma $-net such that $ \norm{\vq^\prime-\vq}_{2} \leq \gamma $.

	Let $\Zcal_1 = \{z_{j,\targetx}^\prime\}_{j \in [m]}$, $\Zcal_2 = \{z_{j,\targetx}\}_{j \in [m]}$ and $\Zcal_3 = \{z_{j}\}$, where $z_{j,\targetx}^\prime = \Lcal_{j,\targetx}(\vq^\prime) - \Lcal_{\target} (\vq^\prime)$, $ z_{j,\targetx} = \Lcal_{j,\targetx}(\vq) - \Lcal_Q(\vq)$ and $ z_{j} = \Lcal_{j}(\vq) - \Lcal_Q(\vq)$, respectively.
	Under Assumption~\ref{ass:adaptive_moment}(ii), we have $\E[z_{j,\targetx}^\prime] = 0$ and $\Var(z_{j,\targetx}^\prime) \leq \sigma_n^2$.
	By Lemma~\ref{lemma:main_lemma_in_RB}, we can find a subset in $\Zcal_1$ and a corresponding weight $\w \in \triangle_{m, \epsilon_h}$ such that with probability at least $1-\tilde{\tau}/2$, for a fixed $\vq'$,
	\begin{align*}
		|\mu_{\w,\Zcal_1}| = O(\delta_n) , \,
		\sigma^2_{\w,\Zcal_1,0} = O(\delta_n^2/\epsilon_h),
	\end{align*}
	where $\delta_n = O(\sigma_n \sqrt{\epsilon} + \sigma_n \sqrt{\log(1/\tilde{\tau})/m})$.

	Given the Lipschitz continuity of $\Lcal_{j,\targetx}(\vq)$, we have $ |z_{j,\targetx} - z_{j,\targetx}^\prime| \leq 2L\norm{\vq - \vq^\prime}_{2}\leq 2L \gamma $ and $|\mu_{\w,\Zcal_2} - \mu_{\w,\Zcal_1} | \leq 2L \gamma$.
	Under Assumption~\ref{ass:target_adaptive_convergence}(ii), we also have $ |z_{j,\targetx} - z_{j}| \leq a_{m,N}$ and $|\mu_{\w,\Zcal_2} - \mu_{\w,\Zcal_3} | \leq a_{m,N}$ with probability at least $1-\tau/2$ uniformly for all $\vq$.
	Therefore, we have,
	\begin{align*}
		|\mu_{\w,\Zcal_3}| & \leq |\mu_{\w,\Zcal_3} - \mu_{\w,\Zcal_2} |+|\mu_{\w,\Zcal_2} - \mu_{\w,\Zcal_1} |+ |\mu_{\w,\Zcal_1}| \\
		 & = O(a_{m, N} + \gamma + \delta_n ) = O(\delta_n + a_{m, N}),
	\end{align*}
	and
	\begin{align*}
		\sigma^2_{\w,\Zcal_3,0} & = \sum_{j \in [m]} w_{j}z_{j}^2 \\
		 & \leq 3 \sum_{j \in [m]} w_{j}\bigl\{ (z_{j} - z_{j,\targetx})^2 + (z_{j,\targetx} - z_{j,\targetx}^\prime)^2 + (z_{j,\targetx}^\prime)^2 \bigr\} \\
		 & = O(a_{m, N}^2 + \gamma^2 + \delta_n^2/\epsilon_h ) \\
		 & = O\Bigl((\delta_n + a_{m,N})^2 /\epsilon_h\Bigr).
	\end{align*}
	By Lemma~\ref{lemma:stability_on_corrupted_set}, we conclude that with probability at least $1 - (\tau + \tilde{\tau}) / 2$,
	\[\abs{\mathrm{MWV}_{\epsilon_h}(\{\Lcal_j(\vq)\}_{j \in [m]}) - \Lcal_{\target}(\vq)} \lesssim \sigma_n \sqrt{\epsilon} + \sigma_n \sqrt{\frac{\log(1/\tilde{\tau})}{m}} + a_{m, N},\]
	for any $\vq$ such that $\lVert \vq - \vq^\prime \rVert_2 \le \gamma$.
	By choosing $\tilde{\tau} = \gamma^{K} \tau$ and taking the union bound over all $\vq^\prime$ in the $\gamma$-net, we obtain that with probability at least $1-\tau$,
	\[\sup_{\vq \in \triangle_{K}} \abs{\mathrm{MWV}_{\epsilon_h}(\{\Lcal_{j}(\vq)\}_{j \in [m]}) - \Lcal_\target(\vq)} \lesssim\sigma_n \sqrt{\epsilon} + \sigma_n \sqrt{\frac{\log(1/\tau) + K \log (Lm/\sigma_n) }{m}} + a_{m, N}.\]
\end{proof}

\subsection{Proof of Lemma~\ref{lemma:upper_var}}\label{sec:proof_upper_var}
\begin{proof}
Recall that $\Lcal_{j,\targetx}(\vq)$ is defined as
	\begin{equation*}
		\Lcal_{j,\targetx}(\vq) = \MMDhat\Bigl(\targetx,\sum_{k=1}^{K} q_k \hat{P}_{j,X|Y = k}\Bigr) = \vq^\top \hat{\mA}_{j} \vq - 2 \vq^\top \bar{\vb}_{j},
	\end{equation*}
	where $\hat{\mA}_{j}$ aligns with Eq.~(\ref{equation:MMD}) and
	\begin{align}\label{eq:b_bar}
		\bar{b}_{j,k} & = \frac{1}{n_{j,k}} \sum_{x \in \Dcal_{j,k}} \mu_\targetx(x), \ \mu_\targetx(\cdot) = \E_{x \sim \targetx}[\kernel(x,\cdot)].
	\end{align}

First, we derive an upper bound for $\Var[\Ecal_{j,\targetx}(\vq_1,\vq_2)]$ to establish the first result in Lemma~\ref{lemma:upper_var} under Assumptions~\ref{ass:sample_balance} and \ref{ass:kernel}. For simplicity, we suppress the index $j$ in this proof, which denotes the source index in the terms $\widehat{\mA}_j$, $\bar{\vb}_j$, $\Dcal_{j,k}$, $n_{j,k}$, and related expressions.
	For example, $\widehat{\mA} = (\Ahat_{k,k^\prime})_{k,k^\prime \in [m]}$, where $\Ahat_{k,k^\prime} = \Ahat_{j,k,k^\prime}$.

	Observe that
	\begin{align*}
		\Ecal_{j,\targetx}(\vq_1,\vq_2) & = (\vq_1 + \vq_2)^\top \widehat{\mA} (\vq_1 - \vq_2) - \bar{\vb}^\top (\vq_1 - \vq_2) \\
		 & = \sum_{i,h \in [K]} v_{i,+} v_{h,-} \Ahat_{i,h} - \sum_{i \in [K]} \bbar_i v_{i,-},
	\end{align*}
	where $\vv_+ = (v_{i,+})_{i\in[K]} = (q_{1,i} + q_{2,i})_{i \in [K]}$ and $\vv_- = (v_{i,-})_{i\in[K]} = (q_{1,i} - q_{2,i})_{i \in [K]}$.
	The variance can be bounded as
	\begin{equation}\label{eq:var_excess_decomp}
		\Var[\Ecal_{j,\targetx}(\vq_1,\vq_2)] = O\biggl(\Var\Bigl( \sum_{i,h \in [K]} v_{i,+} v_{h,-} \Ahat_{i,h}\Bigr)\biggr) + O\biggl(\Var\Bigl( \sum_{i \in [K]} \bbar_i v_{i,-}\Bigr)\biggr).
	\end{equation}

	By the definition of $\bar{\vb}$ in Eq.~(\ref{eq:b_bar}), the covariances between $\bbar_i$ and $\bbar_h$ for $i,h \in [K]$ satisfy the following properties:
	\begin{itemize}
		\item For $i \neq h$, $\Cov(\bbar_i,\bbar_h) = 0$.
		\item For $i = h$, $\Cov(\bbar_i,\bbar_i) = n_i^{-2} \sum_{x_i \in \Dcal_i} \Cov(\mu_\targetx(x_i), \mu_\targetx(x_i)) = O(n_i^{-1}) = O((n \psi)^{-1})$.
	\end{itemize}
	Therefore, for the second term in Eq.~(\ref{eq:var_excess_decomp}), we obtain
	\begin{equation*}
		\Var\Bigl(\sum_{i \in [K]} \bbar_i v_{i,-}\Bigr) = \sum_{i \in [K]} v_{i,-}^2 \Var(\bbar_i) = O\biggl(\frac{\norm{\vv_{-}}_{2}^2}{n\psi}\biggr).
	\end{equation*}

	It remains to analyze the first term in Eq.~(\ref{eq:var_excess_decomp}).
	Specifically, we aim to bound
	\begin{align*}
		 & \Var\Bigl(\sum_{i,h \in [K]} v_{i,+} v_{h,-} \Ahat_{i,h}\Bigr) \\
		= & \sum_{i,h,g,k \in [K]} v_{i,+} v_{h,-} v_{g,+} v_{k,-} \Cov(\Ahat_{i,h},\Ahat_{g,k}) \\
		= & \sum_{i,h,k \in [K]} v_{i,+}^2 v_{h,-} v_{k,-} \Cov(\Ahat_{i,h},\Ahat_{i,k}) + \sum_{\substack{i,h,g \in [K] \\ i \neq g}} v_{h,-}^2 v_{i, +} v_{g, +} \Cov(\Ahat_{i, h},\Ahat_{g, h})\\
		 & + 2\sum_{\substack{i,h,g \in [K] \\ i \neq h, i \neq g}} v_{i,+} v_{h,-} v_{g,+} v_{i,-} \Cov(\Ahat_{i,h},\Ahat_{g,i}) + 2\sum_{\substack{i,g,k \in [K] \\ i \neq g, i \neq k}} v_{i,+} v_{i,-} v_{g,+} v_{k,-} \Cov(\Ahat_{i,i},\Ahat_{g, k}) \\
		 & - \sum_{i \neq h \in [K]} v_{i, +}v_{h, -} v_{h, +} v_{i, -} \Cov(\Ahat_{i, h}, \Ahat_{h, i}) - \sum_{i \neq g \in [K]} v_{i, +}v_{i, -} v_{g, +} v_{g, -} \Cov(\Ahat_{i, i}, \Ahat_{g, g}) \\
		 & + \sum_{i, h, g, k \text{ distinct}} v_{i,+} v_{h,-} v_{g,+} v_{k,-} \Cov(\Ahat_{i,h},\Ahat_{g,k})
	\end{align*}
	By noting that $\Cov(\Ahat_{i,h},\Ahat_{g,k}) = 0$ if $\{i,h\} \cap \{g,k\} = \emptyset$, the expression simplifies to
	\begin{align*}
		 & \Var\Bigl(\sum_{i,h \in [K]} v_{i,+} v_{h,-} \Ahat_{i,h}\Bigr) \\
		= & \underbrace{\sum_{i,h,k \in [K]} v_{i,+}^2 v_{h,-} v_{k,-} \Cov(\Ahat_{i,h},\Ahat_{i,k})}_{\Rmnum{1}} + \underbrace{\sum_{\substack{i,h,g \in [K] \\ i \neq g}} v_{h,-}^2 v_{i, +} v_{g, +} \Cov(\Ahat_{i, h},\Ahat_{g, h})}_{\Rmnum{2}}\\
		 & + \underbrace{2\sum_{\substack{i,h,g \in [K] \\ i \neq h, i \neq g}} v_{i,+} v_{h,-} v_{g,+} v_{i,-} \Cov(\Ahat_{i,h},\Ahat_{g,i})}_{\Rmnum{3}} - \underbrace{\sum_{i \neq h \in [K]} v_{i, +}v_{h, -} v_{h, +} v_{i, -} \Cov(\Ahat_{i, h}, \Ahat_{h, i})}_{\Rmnum{4}}
	\end{align*}

	We present the following proposition concerning the covariance $\Cov(\Ahat_{i, h}, \Ahat_{g, k})$, with the proof provided at the end of this subsection.
	\begin{proposition}\label{prop:bound_cov}
		Under Assumption~\ref{ass:sample_balance} and Assumption~\ref{ass:kernel}(i), we have
		\begin{equation*}
			\sup_{i,h,g,k \in [K]} \abs{\Cov(\Ahat_{i, h}, \Ahat_{g, k})} = \frac{5 \kappa^2}{n \psi}.
		\end{equation*}
	\end{proposition}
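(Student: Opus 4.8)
The plan is to reduce the whole statement to a counting argument over shared samples, exploiting that kernel evaluations built from disjoint collections of independent observations are uncorrelated. First I would record the trivial but essential reduction: since the samples in distinct categories are mutually independent, $\Cov(\Ahat_{i,h}, \Ahat_{g,k}) = 0$ whenever $\{i,h\} \cap \{g,k\} = \emptyset$, so it suffices to bound the covariance on the remaining patterns in which at least one category index is shared. This is exactly the observation already used to prune the variance expansion preceding the proposition.

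Next I would expand each $\Ahat$ as a normalized sum of kernel evaluations --- writing $\Ahat_{i,h} = (n_i n_h)^{-1}\sum_{a,b}\kernel(x_a^{(i)}, x_b^{(h)})$ for $i \neq h$, and the degree-two U-statistic form $\Ahat_{i,i} = [n_i(n_i-1)]^{-1}\sum_{a\neq b}\kernel(x_a^{(i)}, x_b^{(i)})$ on the diagonal --- and express the covariance as a normalized double sum of terms of the shape $\Cov(\kernel(\cdot,\cdot), \kernel(\cdot,\cdot))$. The key structural fact is that such a cross term vanishes unless the two kernel evaluations share at least one physical sample (the same category together with the same within-category index); otherwise the two evaluations depend on disjoint, independent observations and are uncorrelated.

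For the magnitude of a surviving term, I would first note that positive definiteness together with the stated upper bound forces $|\kernel(x,x')| \le \kappa$: taking $x_1 = x_2$ gives $\kernel(x,x) \le \kappa$, and the RKHS Cauchy--Schwarz inequality $|\kernel(x,x')| \le \sqrt{\kernel(x,x)\,\kernel(x',x')} \le \kappa$ finishes the rest. Hence every kernel evaluation lies in $[-\kappa,\kappa]$, so each surviving covariance term is bounded by $\kappa^2$. What remains is to count, for each coincidence pattern among $i,h,g,k$, how many index quadruples produce a shared sample, and to divide by the normalization; using $\min_{j,k} n_{j,k} \ge n\psi$ from Assumption~\ref{ass:sample_balance}, each pattern then contributes $O(\kappa^2/(n\psi))$, and summing over the finitely many patterns yields the uniform bound $5\kappa^2/(n\psi)$.

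I expect the main obstacle to be the bookkeeping in the diagonal (U-statistic) cases, where the constraint $x \neq x'$ and the normalization $[n_i(n_i-1)]^{-1}$ must be handled carefully: one has to separate quadruples according to whether one or both index pairs coincide, and the binding case is essentially $\Cov(\Ahat_{i,i},\Ahat_{i,i}) = \Var(\Ahat_{i,i})$, in which the dominant block of overlapping quadruples is of order $n_i^3$ against a normalization of order $n_i^4$. Tracking these counts across all patterns --- and in particular verifying that the worst-case total does not exceed the stated constant $5$ --- is the only delicate part; the probabilistic content is otherwise captured entirely by the independence-across-categories structure and the pointwise bound $|\kernel| \le \kappa$.
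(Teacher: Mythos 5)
Your proposal is correct and follows essentially the same route as the paper's proof: prune to the index patterns with $\{i,h\} \cap \{g,k\} \neq \emptyset$, expand each $\Ahat$ as a normalized (U-statistic) sum, observe that cross terms of kernel evaluations with no shared sample vanish by independence, bound each surviving term by $\kappa^2$, and count overlaps case by case using $n_{j,k} \geq n\psi$, with the binding constant arising from the diagonal case $\Var(\Ahat_{i,i})$ exactly as you anticipate. Your explicit derivation of $\abs{\kernel(x,x')} \leq \kappa$ via positive definiteness and the RKHS Cauchy--Schwarz inequality is a small refinement the paper leaves implicit, but it does not change the argument.
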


	Utilizing Proposition~\ref{prop:bound_cov}, we derive the bounds for the aforementioned four terms.
	we define $\mC_i = (C_{i,h,k})_{h,k \in [K]} = (\Cov(\Ahat_{i,h},\Ahat_{i,k}))_{h,k \in [K]}$ and $\mD_i = (D_{i,h,k})_{h,k \in [k]} = \Cov(\phi_i)$, where $\phi_i = (\Kcal(x_{i1}, x_{k2}))_{k \in [K]}$ and $\Cov(\phi_i)$ are defined in Assumption~\ref{ass:kernel}.
	For the first summation, denoted as $\Rmnum{1}$, we have
	\begin{equation*}
		\Rmnum{1} = \sum_{i \in [K]} v_{i,+}^2 \sum_{h,k} v_{h,-} v_{k,-} \Cov(\Ahat_{i,h},\Ahat_{i,k}) = \sum_{i \in [K]} v_{i,+}^2 \vv_{-}^\top \mC_i \vv_{-}.
	\end{equation*}
	For distinct $i,h,k$,
	\begin{equation*}
		C_{i,h,k} = \Cov(\Ahat_{i,h},\Ahat_{i,k}) = \frac{1}{n_i} \Cov(\kernel(x_{i1},x_{h2}),\kernel(x_{i1},x_{k2})) = \frac{1}{n_i} D_{i,h,k}.
	\end{equation*}
	For $i = h \neq k$,
	\begin{equation*}
		C_{i,i,k} = C_{i,k,i} = \Cov(\Ahat_{i,i},\Ahat_{i,k}) = \frac{2}{n_i} \Cov(\kernel(x_{i1},x_{i2}),\kernel(x_{i1},x_{k2})) = \frac{2}{n_i}D_{i,i,k}.
	\end{equation*}
	For $i = h = k$, $C_{i,i,i} = \Cov(\Ahat_{i,i},\Ahat_{i,i}) = O((n\psi)^{-1})$.
	Under Assumption~\ref{ass:kernel}, $\lVert \mD_i \rVert_{\mathrm{op}} = \lVert \Cov(\phi_i) \rVert_{\mathrm{op}} \le C$.
	Therefore,
	\begin{equation*}
		\sup_{i \in [K]}\lVert \mC_i \rVert_{\mathrm{op}} \le \sup_{i \in [K]} \frac{1}{n_i}\lVert \mD_i \rVert_{\mathrm{op}} + O\Bigl(\frac{1}{n\psi}\Bigr) = O\Bigl(\frac{1}{n\psi}\Bigr).
	\end{equation*}
	It implies that
	\begin{equation*}
		\Rmnum{1} = \sum_{i \in [K]} v_{i,+}^2 \vv_{-}^\top \mC_i \vv_{-} \lesssim \frac{1}{n \psi} \norm{\vv_{+}}_2^2 \norm{\vv_{-}}_2^2 \le \frac{4}{n \psi} \norm{\vv_{-}}_2^2.
	\end{equation*}

	For the second summation $\Rmnum{2}$,
	\begin{equation*}
		\Rmnum{2} \lesssim \frac{1}{n \psi} \norm{\vv_{-}}_2^2 \Bigl(\sum_{i \in [K]} v_{i,+}\Bigr)^2 = \frac{4}{n \psi} \norm{\vv_{-}}_2^2.
	\end{equation*}

	For the third summation $\Rmnum{3}$,
	denote $\vv_{+,i\rightarrow 0}$ and $\vv_{-, i\rightarrow 0}$ as vectors that set the $i$-th entries of $\vv_{+}$ and $\vv_{-}$ to zero, respectively.
	We have
	\begin{align*}
		\Rmnum{3} & = 2 \sum_i v_{i,+} v_{i,-} \sum_{h \neq i, g \neq i} v_{h,-} v_{g,+} \Cov(\Ahat_{i,h},\Ahat_{g,i}) \\
		 & = 2 \sum_i v_{i,+} v_{i,-} \frac{\vv_{-,i\rightarrow 0} ^\top \mD_{i} \vv_{+,i\rightarrow 0}}{n_i} \\
		 & \leq 2 \sum_i |v_{i,+} v_{i,-}| \frac{\lVert \mD_{i} \rVert_{\mathrm{op}} \norm{\vv_{-,i\rightarrow 0}}_{2} \norm{\vv_{+,i\rightarrow 0}}_{2}}{n_i} \\
		 & \leq \frac{2}{n \psi} \sum_i |v_{i,+} v_{i,-}| \lVert \mD_{i} \rVert_{\mathrm{op}} \norm{\vv_{-}}_{2} \norm{\vv_{+}}_{2} \\
		 & \lesssim \frac{1}{n \psi} \norm{\vv_{-}}_{2}^2 \norm{\vv_{+}}_{2}^2 \leq \frac{4 \norm{\vv_{-}}_{2}^2}{n \psi},
	\end{align*}
	in which the third inequality arises from the application of the Cauchy-Schwarz inequality, alongside the fact that $\sup_{i \in [K]} \lVert \mD_{i} \rVert_{\mathrm{op}} \leq C$, as stipulated in Assumption~\ref{ass:kernel}.

	For the last summation $\Rmnum{4}$,
	\begin{equation*}
		\Rmnum{4} \le \frac{1}{n \psi} \sum_{i \neq h \in [K]} |v_{i, +}v_{i, -}| |v_{h, +} v_{h, -}| \le \frac{1}{n \psi} \Bigl(\sum_{i \in [K]} |v_{i, +}v_{i, -}|\Bigr)^2 \le \frac{1}{n \psi} \norm{\vv_{+}}_2^2 \norm{\vv_{-}}_2^2 \le \frac{4}{n \psi} \norm{\vv_{-}}_2^2.
	\end{equation*}

	Combining $\Rmnum{1}$--$\Rmnum{4}$, we conclude that
	\begin{equation*}
		\Var\Bigl(\sum_{i,h \in [K]} v_{i,+} v_{h,-} \Ahat_{i,h}\Bigr) = O\biggl(\frac{\norm{\vv_{-}}_{2}^2}{n\psi}\biggr),
	\end{equation*}
	and
	\begin{equation*}
		\Var[\Ecal_{j,\targetx}(\vq_1,\vq_2)] = O\biggl(\frac{\norm{\vv_{-}}_{2}^2}{n\psi}\biggr) = O\biggl(\frac{\norm{\vq_1 - \vq_2}_2^2}{n \psi}\biggr) = O\biggl(\frac{(\norm{\vq_1-\vq^\ast}_{2} + \norm{\vq_2-\vq^\ast}_{2})^2}{n\psi}\biggr),
	\end{equation*}
	uniformly for $j \in [m]$ and $\vq_1,\vq_2 \in \triangle_K$.

        Next, we derive an upper bound for $\Var[\Lcal_{j,\targetx}(\vq)]$ to establish the second result in Lemma~\ref{lemma:upper_var}. Note that
	\begin{equation*}
		\Lcal_{j,\targetx}(\vq) = \vq^\top \widehat{\mA} \vq - \bar{\vb}^\top \vq = \sum_{i,h \in [K]} q_i q_h \Ahat_{i,h} - \sum_{i \in [K]} \bbar_i q_i.
	\end{equation*}
	Its variance is bounded by
	\begin{equation*}
		O\biggl(\Var\Bigl( \sum_{i,h \in [K]} q_i q_h \Ahat_{i,h} \Bigr)\biggr) + O\biggl(\Var\Bigl(\sum_{i \in [K]} \bbar_i q_i\Bigr)\biggr).
	\end{equation*}
	The first part satisfies that
	\begin{align*}
		 & \Var\Bigl(\sum_{i,h \in [K]} q_i q_h \Ahat_{i,h}\Bigr) \\
		= & \sum_{i,h,g,k \in [K]} q_i q_h q_g q_k \Cov(\Ahat_{i,h},\Ahat_{g,k}) \\
		= & O\Bigl(\frac{1}{n\psi} \sum_{i,h,g,k \in [K]} q_i q_h q_g q_k \Bigr) = O\Bigl(\frac{1}{n\psi}\Bigr),
	\end{align*}
	where the the second equality comes from Proposition~\ref{prop:bound_cov}, and the last equality uses the fact that $\sum_i q_i = 1$.

	The second part can be written as
	\begin{equation*}
		\Var\Bigl(\sum_{i \in [K]} \bbar_i q_i\Bigr) = \sum_{i \in [K]} q_i^2 \Var(\bbar_i) = O\Bigl(\frac{1}{n\psi}\Bigr),
	\end{equation*}
	where we use the fact that $\Cov(\bbar_i,\bbar_h) = 0$ for $i \neq h$ and $\Var(\bbar_i) = O(n_i^{-1}) = O((n \psi)^{-1})$.

	Note that the above two bounds hold uniformly for all $j \in [m]$ and $\vq \in \Delta_K$.
	Combining the above results, we have
	\begin{equation*}
		\sup_{j \in [m], \vq \in \Delta_K} \Var[\Lcal_{j,\targetx}(\vq)] = O\Bigl(\frac{1}{n\psi}\Bigr).
	\end{equation*}
\end{proof}

\begin{proof}[Proof of Proposition~\ref{prop:bound_cov}]
	Here, we denote $x_{ki}$ as the $i$-th sample in $\Dcal_k$.
	The covariance between $\Ahat_{i,h}$ and $\Ahat_{g,k}$ for $i,h,g,k \in [K]$ is given as follows.
	\begin{itemize}
		\item When $\{i,h\} \cap \{g,k\} = \emptyset$, $\Cov(\Ahat_{i,h},\Ahat_{g,k}) = 0$.
		\item When $i$, $h$ and $k$ are distinct and $g = i$,
		 \begin{align*}
			 & \Cov(\Ahat_{i,h},\Ahat_{i,k}) \\
			 = & \frac{1}{n_i^2 n_h n_k} \sum_{x_i \in \Dcal_i} \sum_{x_h \in \Dcal_h} \sum_{x_i^\prime \in \Dcal_i} \sum_{x_k \in \Dcal_k} \Cov(\kernel(x_i,x_h),\kernel(x_i^\prime,x_k)) \\
			 = & \frac{1}{n_i} \Cov(\kernel(x_{i1},x_{h1}),\kernel(x_{i1},x_{k1})).
		 \end{align*}
		 Therefore, $\abs{ \Cov(\Ahat_{i,h},\Ahat_{i,k}) } \le \kappa^2 / (n \psi)$.
		\item When $i = g$, $h = k$ and $i \neq h$,
		 \begin{align*}
			 & \Cov(\Ahat_{i,h},\Ahat_{i,h}) \\
			 = & \frac{1}{n_i^2 n_h^2} \sum_{x_i \in \Dcal_i} \sum_{x_h \in \Dcal_h} \sum_{x_i^\prime \in \Dcal_i} \sum_{x_h^\prime \in \Dcal_h} \Cov(\kernel(x_i,x_h),\kernel(x_i^\prime,x_h^\prime)) \\
			 = & \frac{1}{n_i^2 n_h^2} \sum_{x_i \in \Dcal_i} \sum_{x_h \in \Dcal_h} \Var(\kernel(x_i,x_h)) \\
			 & + \frac{1}{n_i^2 n_h^2} \sum_{x_i \in \Dcal_i} \sum_{x_h \in \Dcal_h} \sum_{\substack{x_h^\prime \in \Dcal_h, \\ x_h^\prime \neq x_h}} \Cov(\kernel(x_i,x_h),\kernel(x_i,x_h^\prime))\\
			 & +\frac{1}{n_i^2 n_h^2} \sum_{x_i \in \Dcal_i} \sum_{x_h \in \Dcal_h} \sum_{\substack{x_i^\prime \in \Dcal_i, \\ x_i^\prime \neq x_i}} \Cov(\kernel(x_i,x_h),\kernel(x_i^\prime,x_h))\\
			 = & \frac{1}{n_i n_h} \Var(\kernel(x_{i1},x_{h1})) + \frac{1}{n_i} \Cov(\kernel(x_{i1},x_{h1}),\kernel(x_{i1},x_{h2})) \\
			 & + \frac{1}{n_h} \Cov(\kernel(x_{i1},x_{h1}),\kernel(x_{i2},x_{h1})).
		 \end{align*}
		 Therefore, $\abs{ \Cov(\Ahat_{i,h},\Ahat_{i,h}) } \le 3 \kappa^2 / (n \psi)$.
		\item When $i = g = h$ and $i \neq k$,
		 \begin{align*}
			 & \Cov(\Ahat_{i,i},\Ahat_{i,k}) \\
			 = & \frac{1}{n_i^2 (n_i - 1) n_k} \sum_{x_i \in \Dcal_i} \sum_{\substack{x_i^\prime \in \Dcal_i \\ x_i^\prime \neq x_i}} \sum_{x_i^{\prime \prime} \in \Dcal_i} \sum_{x_k \in \Dcal_k} \Cov(\kernel(x_i,x_i^\prime),\kernel(x_i^{\prime\prime},x_k))\\
			 = & \frac{2}{n_i} \Cov(\kernel(x_{i1},x_{i2}),\kernel(x_{i1},x_{k1})) = O\Bigl(\frac{1}{n\psi}\Bigr).
		 \end{align*}
		 Therefore, $\abs{ \Cov(\Ahat_{i,i},\Ahat_{i,k}) } \le 2 \kappa^2 / (n \psi)$.

		\item When $i = g = h = k$,
		 \begin{align*}
			 & \Cov(\Ahat_{i,i},\Ahat_{i,i}) \\
			 = & \frac{1}{n_i^2 (n_i -1)^2} \sum_{x_i \in \Dcal_i} \sum_{\substack{x_i^\prime \in \Dcal_i \\ x_i^\prime \neq x_i}} \sum_{x_i^{\prime\prime} \in \Dcal_i} \sum_{\substack{x_i^{\prime\prime\prime} \in \Dcal_i\\ x_i^{\prime\prime\prime} \neq x_i^{\prime\prime}}} \Cov(\kernel(x_i,x_i^\prime),\kernel(x_i^{\prime\prime},x_i^{\prime\prime\prime}))\\
			 = & \frac{2}{n_i^2 (n_i -1)^2} \sum_{x_i \in \Dcal_i} \sum_{\substack{x_i^\prime \in \Dcal_i \\ x_i^\prime \neq x_i}} \Var(\kernel(x_i,x_i^\prime)) \\
			 & +\frac{4}{n_i^2 (n_i -1)^2} \sum_{x_i \in \Dcal_i} \sum_{\substack{x_i^\prime \in \Dcal_i \\ x_i^\prime \neq x_i}} \sum_{\substack{x_i^{\prime\prime} \in \Dcal_i \\ x_i^{\prime\prime} \neq x_i \\ x_i^{\prime\prime} \neq x_i^\prime}} \Cov(\kernel(x_i,x_i^\prime),\kernel(x_i,x_i^{\prime\prime}))\\
			 = & \frac{2}{n_i (n_i -1)} \Var(\kernel(x_{i1},x_{i2})) +
			 \frac{4(n_i-2)}{n_i(n_i-1)} \Cov(\kernel(x_{i1},x_{i2}),\kernel(x_{i1},x_{i3})) \\
			 = & O\Bigl(\frac{1}{n\psi}\Bigr).
		 \end{align*}
		 Therefore, $\abs{ \Cov(\Ahat_{i,i},\Ahat_{i,i}) } \le 5 \kappa^2 / (n \psi)$.
	\end{itemize}
\end{proof}

\subsection{Proof of Lemma~\ref{lemma:uniform_bound_mwv}}
\begin{proof}
    To prove Lemma~\ref{lemma:uniform_bound_mwv}, we mainly leverage Lemma~\ref{lemma:sup_of_adaptive_RB} and specify the parameters $L$, $a_{m,N}$ and $\sigma_n$ in Assumptions~\ref{ass:lipschitz}, \ref{ass:target_adaptive_convergence}, and \ref{ass:adaptive_moment}, respectively.

    {\noindent\bf Verify Assumption~\ref{ass:lipschitz}.} With the boundedness condition of $\kernel$, we have $\lVert \hat{\mA}_{j} \rVert_{\mathrm{op}}\leq \kappa K$, $\lVert \mA \rVert_{\mathrm{op}} \leq \kappa K$, $\norm{\bar{\vb}_{j}}_{2} \leq \kappa \sqrt{K}$ and $\norm{\vb}_{2} \leq \kappa \sqrt{K}$.
	It implies that $ \Lcal_{j,\targetx} (\vq) $ and $\Lcal_{\target}(\vq)$ are Lipschitz continuous, with a Lipschitz constant $L \le \kappa(4K + 2\sqrt{K})$.
	Therefore, Assumption~\ref{ass:lipschitz} holds with $L = O(K)$.

		{\noindent\bf Verify Assumption~\ref{ass:target_adaptive_convergence}.} For any $j \in [m]$, condition on $\Dcal_{j}$, by McDiarmid’s inequality,
	\begin{align*}
		\Pr\biggl(\Bigabs{\hat{b}_{j,k} - \bar{b}_{j,k}}>t \biggl| \Dcal_{j}\biggr) & = \Pr\biggl(\Bigabs{\frac{1}{N} \sum_{i \in [N]} \frac{1}{n_{j,k}} \sum_{x \in \Dcal_{j,k}}\kernel(x_i,x) - \bar{b}_{j,k}}>t\biggl| \Dcal_{j}\biggr) \\
		 & \leq 2\exp\Bigl(-\frac{Nt^2}{2\kappa^2}\Bigr).
	\end{align*}
	By marginalizing over the randomness of $\Dcal_{j}$, we have
	\[\Pr\Bigl(\bigabs{\hat{b}_{j,k} - \bar{b}_{j,k}}>t\Bigr) \leq 2\exp\Bigl(-\frac{Nt^2}{2\kappa^2}\Bigr).\]
	Therefore, by taking the union bound over $j \in [m]$ and $k \in [K]$, we have with probability at least $1-\tau/2$,
	\begin{equation}\label{equation:upper_bound_of_b}
		\sup_{j \in [m], k \in [K]}\bigabs{\hat{b}_{j,k} - \bar{b}_{j,k}} = O\biggl(\kappa \sqrt{\frac{\log(mK/\tau)}{N}}\biggr),
	\end{equation}
	which indicates that with probability at least $1 - \tau / 2$, uniformly for all $j \in [m]$ and all $\vq_1, \vq_2 \in \triangle_{K}$
	\begin{align*}
		 & \Ecal_{j}(\vq_1,\vq_2) - \Ecal_{j,\targetx}(\vq_1,\vq_2) \\
		= & \Lcal_{j}(\vq_1) - \Lcal_{j}(\vq_2) - \Lcal_{j,\targetx}(\vq_1) + \Lcal_{j,\targetx}(\vq_2) \\
		= & 2 \abs{ (\vq_1 - \vq_2)^\top (\bar{\vb}_{j} - \hat{\vb}_{j})} \\
		\leq & 2 \norm{\vq_1 - \vq_2}_{1} \sup_{j \in [m], k \in [K]}\bigabs{\hat{b}_{j,k} - \bar{b}_{j,k}} \\
		= & O \biggl(\sqrt{K} \norm{\vq_1 - \vq_2}_{2} \kappa \sqrt{\frac{\log(mK/\tau)}{N}} \biggr) \\
		= & O \biggl(\kappa\sqrt{\frac{K\log(mK/\tau)}{N}} (\norm{\vq_1 - \vq^\ast}_{2} + \norm{\vq_2 - \vq^\ast}_{2}) \biggr).
	\end{align*}
	Thus, Assumption~\ref{ass:target_adaptive_convergence}(i) holds with $a_{m, N} =O( \kappa\sqrt{{K\log(mK/\tau)}/{N}})$.

	Similarly, by Eq.~(\ref{equation:upper_bound_of_b}), we have
	\begin{align*}
		 & \sup_{j \in [m], \vq \in \triangle_{K}} \abs{\Lcal_{j,\targetx}(\vq) - \Lcal_{j}(\vq)} \\
		\leq & 2 \sup_{j \in [m], \vq \in \triangle_{K}} \abs{ \vq^\top (\bar{\vb}_{j} - \hat{\vb}_{j})} \\
		\leq & 2 \sup_{j \in [m], k \in [K]}\Bigabs{\hat{b}_{j,k} - \bar{b}_{j,k}} \\
		= & O \biggl(\kappa \sqrt{\frac{\log(mK/\tau)}{N}}\biggr),
	\end{align*}
	This ensures that Assumption~\ref{ass:target_adaptive_convergence}(ii) holds with the same $a_{m,N} = O(\kappa \sqrt{{K\log(mK/\tau)}/{N}})$.
    
	{\noindent\bf Verify Assumption~\ref{ass:adaptive_moment}.}
    Lemma~\ref{lemma:upper_var} indicates that Assumptions~\ref{ass:adaptive_moment}(i)--(ii) are both satisfied with $\sigma^2_{n} = O((n \psi)^{-1})$.
    
    Using the above results, by Lemma~\ref{lemma:sup_of_adaptive_RB}(i), we have
	\begin{align*}
		 & \sup_{\vq_1,\vq_2 \in \triangle_{K}} \frac{\abs{\mathrm{MWV}_{\epsilon_h}(\{\Ecal_j(\vq_1,\vq_2)\}_{j \in [m]}) - \Ecal_{\target}(\vq_1,\vq_2)}}{\max(\norm{\vq_1 - \vq^\ast}_{2} + \norm{\vq_2 - \vq^\ast}_{2},r_n)} \\
		\lesssim & \sqrt{\frac{\epsilon}{n \psi}} + \sqrt{\frac{\log(1/\tau) +K \log( n m K)}{n m \psi}} +\kappa\sqrt{\frac{K\log(mK/\tau)}{N}}.
	\end{align*}
	By Lemma~\ref{lemma:sup_of_adaptive_RB}(ii), we have \[\sup_{\vq \in \triangle_{K}} \abs{\hat{\Lcal}(\vq) - \Lcal_\target(\vq)} \lesssim  \sqrt{\frac{\epsilon}{n \psi}} + \sqrt{\frac{\log(1/\tau) +K \log( n m K)}{n m \psi}} +\kappa\sqrt{\frac{K\log(mK/\tau)}{N}}.\]
\end{proof}

\subsection{Proof of Theorem~\ref{theorem:robust_excess_mwv}}\label{sec:proof_thm1}
\begin{proof}
	It is sufficient to verify the assumptions in Theorem~\ref{theorem:general_one_step_update} with the MMD loss and the MWV estimator.
	When choosing MMD as the distribution divergence, the population divergence in Eq.~(\ref{equation:source_population_MMD}) is $\Lcal_Q(\vq) = \vq^\top \mA \vq - 2 \vq^\top \vb$ and the empirical divergence in Eq.~(\ref{equation:MMD}) is $\Lcal_{j}(\vq) = \vq^\top \hat{\mA}_{j} \vq - 2 \vq^\top \hat{\vb}_{j}$.

	The empirical divergence $\Lcal_{j}(\vq)$ is differentiable because it is a quadratic function of $\vq$.
	The minimum eigenvalue $\lambda_{\mathrm{min}}(\mA) > 0$ because the kernel is characteristic and the conditional distributions, $\{\targetxcy: k \in [K]\}$, are linearly independent.
	Given that $\mA \vq^\ast = \vb$, we have
	\[\Lcal_\target(\vq) - \Lcal_\target(\vq^\ast) = (\vq -\vq^\ast)^\top \mA (\vq -\vq^\ast) \geq \lambda_{\mathrm{min}} (\mA)\norm{\vq - \vq^\ast}_{2}^2.\]
	This indicates that Assumption~\ref{ass:convex} is satisfied with $\lambda = \lambda_{\mathrm{min}} (\mA)$.

    By Lemma~\ref{lemma:uniform_bound_mwv}, we have Assumption~\ref{ass:uniform_converge_excess_risk} holds with 
    \[
        r_n = \sqrt{\frac{\epsilon}{n \psi}} + \sqrt{\frac{\log(1/\tau) +K \log( n m K)}{n m \psi}} +\kappa\sqrt{\frac{K\log(mK/\tau)}{N}}.
    \]

	Finally, Theorem~\ref{theorem:robust_excess_mwv} follows from the above results and Theorem~\ref{theorem:general_one_step_update}.
\end{proof}

\subsection{Proof of Theorem~\ref{theorem:robust_divergence_mwv}}\label{sec:robust_divergence_mwv}
\begin{proof}
	We prove Theorem~\ref{theorem:robust_divergence_mwv} by checking the assumptions mentioned in Theorem~\ref{theorem:general_robust_loss}.
	Assumption~\ref{ass:convex} has been verified in the proof of Theorem~\ref{theorem:robust_excess_mwv}.
    By Lemma~\ref{lemma:uniform_bound_mwv}, we have Assumption~\ref{ass:uniform_converge_divergence} holds with the same rate,
    \[
        r_n = \sqrt{\frac{\epsilon}{n \psi}} + \sqrt{\frac{\log(1/\tau) +K \log( n m K)}{n m \psi}} +\kappa\sqrt{\frac{K\log(mK/\tau)}{N}}.
    \]

	Therefore, Theorem~\ref{theorem:robust_divergence_mwv} follows from Theorem~\ref{theorem:general_robust_loss}.
\end{proof}

\subsection{Illustrating the Variance Reduction Phenomenon by the Linear Model}
\label{sec:linear_model_variance_reduction}
Denote $(x,y)$ as a sample drawn from a linear model $y = x^\top \betabold^\ast + \epsilon$, and $\ell(\betabold) = (y - x^\top \betabold)^2$ as the squared loss.
We show the variance reduction of $\ell(\betabold) - \ell(\betabold^\ast)$ compared to $\ell(\betabold)$.
\begin{align*}
	 & \Var[\ell(\betabold)] -\Var[\ell(\betabold) - \ell(\betabold^\ast)] \\
	= & \E[\ell^2(\betabold)] - \{\E[\ell(\betabold)]\}^2 - \E\{[\ell(\betabold) - \ell(\betabold^\ast)]^2\} + \{\E[\ell(\betabold) - \ell(\betabold^\ast)]\}^2 \\
	= & \E\{[2 \ell(\betabold) - \ell(\betabold^\ast)] \ell(\betabold^\ast)\} - \E[2 \ell(\betabold) - \ell(\betabold^\ast)] \E[\ell(\betabold^\ast)] \\
	= & \Cov(2 \ell(\betabold) - \ell(\betabold^\ast), \ell(\betabold^\ast)) \\
	= & \Cov(\epsilon^2, \epsilon^2) = \Var(\epsilon^2).
\end{align*}
Therefore, the variance of the residual $\epsilon$ is reduced when $\ell(\betabold^\ast)$ is subtracted from the loss $\ell(\betabold)$.

\section{Additional Discussion on Label Shift Adaption}

\textbf{Label shift:} 
Estimating the target class proportion $\vq^\ast$ is inherently difficult due to the absence of labels from the target domain.
The initial approaches, as discussed in \cite{zhang2013domain}, \cite{iyer2014maximum}, and \cite{gong2016domain}, focus on estimating class proportions by aligning the marginal distribution $\targetx$ with a mixed distribution formed from the conditional distributions $\{P_{X|Y = k}: k \in [K]\}$.
This is achieved through the optimization task of a mixture proportion $\vq$ that lies within the $K$-dimensional probability simplex $\triangle_{K}$, as expressed in the following minimization problem,
\begin{equation*}
	\argmin_{\vq \in \triangle_{K}}\Lcal\biggl(\targetx,\sum_{k = 1}^K q_k P_{X|Y = k}\biggr),
\end{equation*}
where $\Lcal$ is a divergence measure between distributions. Notably, the Maximum Mean Discrepancy is utilized in \cite{zhang2013domain} and \cite{iyer2014maximum}, while the Wasserstein distance is employed in \cite{redko2019optimal} and \cite{shui2021aggregating}. In scenarios characterized by label shift, the minimization of this objective yields the optimal solution $\vq^\ast$.

A different avenue of research, initiated by \cite{lipton2018detecting}, introduces a framework for estimating the class ratio $\vr =[\Pr_\target(Y = k)/\Pr_P(Y = k)]_{k\in \Ycal}$ using the following estimation equation,
\[
 \hat{\vq}_f = \hat{\mC}_f \vr,
\]
where $f$ is a pretrained classifier, $\hat{\vq}_f$ signifies the estimated class proportion based on the outputs of $f$ on unlabeled target data, and $\hat{\mC}_f$ represents the confusion matrix for $f$ derived from the source data. 
This methodology has been enhanced by \cite{azizzadenesheli2018regularized} to be more effective in small-sample contexts and further generalized by \cite{garg2020unified} to include a broader class of models $f$, extending beyond the classification tasks.

From a theoretical perspective, the works of \cite{iyer2014maximum}, \cite{lipton2018detecting}, \cite{azizzadenesheli2018regularized}, and \cite{garg2020unified} established error bounds for the estimated class ratio/proportion. These bounds are characterized by the term $\sqrt{1/\min(n,N)}$, where $n$ and $N$ represent the sample sizes of the source and target datasets, respectively. This formulation underscores the relationship between the reliability of the estimates and the availability of data from both domains, highlighting the implications of sample size on the accuracy of class proportion estimation.

\textbf{Generalized label shift:} 
Recognizing that the label shift assumption may impose limitations in certain real-world contexts, the framework of Generalized Label Shift (GLS) \citep{zhang2013domain,gong2016domain,combes2020domain} accommodates situations where both the marginal distribution of $Y$ and the conditional distribution of $X|Y = k$ can differ between the source and target domains. The GLS framework maintains the assumption that there exists a feature mapping $\phi(X)$ such that the conditional distribution $\phi(X)|Y = k$ remains invariant across domains.

Within this framework, both the representation learner and the importance weights are iteratively refined during the training process by minimizing discrepancies within the representation space. However, the joint modeling of the two types of shifts, along with the latent invariant representation framework, raises concerns regarding model identifiability. 
This is particularly relevant when the number of source domains is limited, as it can lead to challenges in accurately estimating the underlying distributions and ensuring the stability of the learned representations. 
Consequently, addressing the identifiability issue is crucial for ensuring robust and reliable performance in scenarios where shifts in both the marginal and conditional distributions are present.

In contrast to the generalized label shift framework, our methodology relaxes the label shift assumptions by allowing an $\epsilon$ proportion of the source domains to be arbitrarily corrupted. This adjustment enhances flexibility and robustness in addressing variations that may arise in practical applications, thereby improving the overall robustness and applicability of our framework.

\section{Additional Discussions on Robust Estimation Techniques}\label{sec:robust_estimation}
In this section, we discuss more choices of robust weighting functions.
Additionally, in Section~\ref{sec:lem_trunc}, we provide neat theoretical results for the truncation method to demonstrate that the properties in Section~\ref{sec:theory} still hold for the truncated mean estimator.
The theory of other robust estimation techniques can be similarly developed.

Truncated mean is a commonly used robust mean estimation method, which removes both the top $\epsilon_h$ proportion and the bottom $\epsilon_h$ proportion of points, aiming to exclude the influence of extreme values or outliers.
The truncated mean estimator can be expressed as
\begin{equation}\label{equation:robust_estimation_trunc}
	\mathrm{Trunc}_{\epsilon_h}(\{z_j\}_{j \in [m]}) = \frac{1}{(1-2\epsilon_h)m} \sum_{j = \epsilon_h m + 1}^{(1-\epsilon_h)m} z_{(j)},
\end{equation}
where $z_{(j)}$ is the $j$-th smallest value in $\{z_j\}_{j \in [m]}$. Without loss of generality, assume that $\epsilon_h m$ is an integer.
The corresponding robust weighting function can be expressed as
\begin{equation}\label{equation:robust_weight_function_trunc}
	\RW_{\epsilon_h} (\{z_j\}_{j \in [m]}) = (w_j)_{j \in [m]} = \Bigl(\{(1-2\epsilon_h)m\}^{-1} \indicator_{\{ z_{( \epsilon_h m + 1)} \leq z_{j} \leq z_{( (1-\epsilon_h) m )}\}}\Bigr)_{j \in [m]},
\end{equation}
and is referred to as truncated weighting function.
The median can be viewed as a special case of truncated mean by setting $\epsilon_h = \lfloor \frac{n-1}{2} \rfloor/n$, which is considered in \cite{rousseeuw1984least}, with a breakdown point of nearly $50\%$.

Other robust methods can also be incorporated into our robust weighted framework, such as least trimmed squares \citep{rousseeuw1984least} and the median-of-means estimator \citep{devroye2016subgaussian}.
The least trimmed squares method removes the largest $\epsilon_h$ proportion of the losses, and the corresponding robust weight can be expressed as $\{w_j = \{(1-\epsilon_h)m\}^{-1} \indicator{\{z_{j} \leq z_{( (1-\epsilon_h) m )}\}}\}_{j \in [m]}$.
The median-of-means method evenly divides $\{z_j:j \in [m]\}$ into $L$ groups $\{G_{l}\}_{l \in [L]}$, computes the average $\abs{G_l}^{-1} \sum_{z \in G_l} z$ within each group, and then calculates the median of group averages.
The corresponding robust weight can be written as $\{w_j = m^{-1} L \indicator{\{z_{j} \in G_{med}\}}\}_{j \in [m]}$, where $G_{med}$ represents the group associated with the median of the group averages.

\subsection{Properties of the Truncated Mean Estimator}\label{sec:lem_trunc}
In this section, we present the properties of the truncated mean estimator in Eq.~(\ref{equation:robust_estimation_trunc}).
To apply Theorems~\ref{theorem:general_one_step_update} and \ref{theorem:general_robust_loss}, the main focus of this subsection is to prove Lemma~\ref{lemma:adaptive_trunc_convergence_rate}, which provides the necessary bounds as outlined in Assumptions~\ref{ass:uniform_converge_excess_risk} and \ref{ass:uniform_converge_divergence} when the truncated weighting function is used as $\RW$.
Based on these bounds, we draw parallel results to Theorems~\ref{theorem:robust_excess_mwv} and \ref{theorem:robust_divergence_mwv}.

We begin by presenting several properties of the truncated mean estimator, which are derived in Lemmas~\ref{lemma:truncation} and \ref{lemma:target_convergence}.

\begin{lemma}\label{lemma:truncation}
	Let $\Scal = \{x_i\}_{i \in [m]}$ be a set of $m$ random variables such that $\E[x_i] = \mu$ and $\max_{i \in [m]} \Var(x_i) \leq \sigma^2 < \infty$. $ \Tcal = (\Scal \backslash \Scal_r ) \cup \Scal_b = \{x_i^\prime\}_{i \in [m]} $ is the set obtained by arbitrarily removing an $ \epsilon $-proportion of points $ \Scal_r $ from $ \Scal $ and replacing them with another set $ \Scal_b $ of the same size.
	For $ 0<\tau<1 $, by setting $\epsilon_h = (1 + c) \epsilon + C \log(1/\tau) / m$ for some constant $C>0$ and any constant $c>0$, it follows that
	\[\abs{\operatorname{Trunc}_{\epsilon_h}(\{x^\prime_i\}_{i \in [m]}) - \mu} \lesssim \sigma \sqrt{\epsilon} + \sigma \sqrt{\frac{\log(1/\tau)}{m}},\]
    with probability at least $ 1-\tau$.
\end{lemma}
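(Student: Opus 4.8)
The plan is to prove Lemma~\ref{lemma:truncation} in two stages: first reduce the contaminated trimmed mean to trimmed means of the \emph{clean} sample $\Scal$ via order-statistic interlacing, and then control clean trimmed means by a truncation (clipping) argument in the spirit of Lemma~\ref{lemma:control_the_variance}. For the first stage, let $x_{(1)}\le\cdots\le x_{(m)}$ and $z_{(1)}\le\cdots\le z_{(m)}$ denote the order statistics of $\Scal$ and $\Tcal$ respectively. Since $\Tcal$ differs from $\Scal$ in at most $\epsilon m$ entries, one has the interlacing bounds $x_{(j-\epsilon m)}\le z_{(j)}\le x_{(j+\epsilon m)}$ for every index $j$ in the retained range. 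Summing over $j\in\{\epsilon_h m+1,\dots,(1-\epsilon_h)m\}$ and dividing by $(1-2\epsilon_h)m$ sandwiches $\operatorname{Trunc}_{\epsilon_h}(\{x_i'\}_{i\in[m]})$ between two trimmed means of the clean data, each averaging $(1-2\epsilon_h)m$ consecutive clean order statistics and trimming at least $(\epsilon_h-\epsilon)m=(c\epsilon+C\log(1/\tau)/m)m$ points from each side. Thus it suffices to show that every such clean trimmed mean $T$ satisfies $\abs{T-\mu}\lesssim\sigma\sqrt{\epsilon}+\sigma\sqrt{\log(1/\tau)/m}$.

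For the second stage, set (WLOG $\mu=0$) a clipping radius $R\asymp\sigma/\sqrt{\epsilon_h-\epsilon}$ and define $\tilde x_i=\Pi_{[-R,R]}(x_i)$. First I would use Chebyshev together with Bennett's inequality \citep[Theorem 2.9.2]{vershynin2020high} to show that, with probability at least $1-\tau$, the number of clean points above $R$ and the number below $-R$ are each at most half the side-trimming level $(\epsilon_h-\epsilon)m$; the choice of $R$ makes the expected counts $O((\epsilon_h-\epsilon)m)$, and $(\epsilon_h-\epsilon)m\gtrsim\log(1/\tau)$ keeps the deviation term controlled. On this event every clean point outside $[-R,R]$ is trimmed, so the retained points forming $T$ are a sub-collection of the clipped-in set, all lying in $[-R,R]$. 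Writing $T$ and $\tfrac1m\sum_i\tilde x_i$ in terms of the sum $S_G=\sum_{i\in G}x_i$ over the central points $G=\{i:\abs{x_i}\le R\}$, the difference between $T$ and $\tfrac1m\sum_i\tilde x_i$ decomposes into three pieces, each of the form (number of reweighted/trimmed central points)$\times R/m$, where the counts are $O(\epsilon_h m)$. Hence $\bigabs{T-\tfrac1m\sum_i\tilde x_i}=O(\epsilon_h R)$. Finally, a standard clipped-variable analysis gives $\bigabs{\tfrac1m\sum_i\tilde x_i-\mu}\lesssim\sigma\sqrt{\epsilon_h-\epsilon}+\sigma\sqrt{\log(1/\tau)/m}$: the clipping bias per coordinate is $\E[\abs{x_i}\indicator_{\{\abs{x_i}>R\}}]\le\sigma^2/R$, and Bernstein's inequality on $\tilde x_i$ (range $2R$, variance $\le\sigma^2$) controls the fluctuation, the linear term being absorbed because $\epsilon_h-\epsilon\gtrsim\log(1/\tau)/m$.

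Combining the two stages yields $\abs{\operatorname{Trunc}_{\epsilon_h}(\{x_i'\})-\mu}=O(\epsilon_h R)+O(\sigma\sqrt{\epsilon_h-\epsilon}+\sigma\sqrt{\log(1/\tau)/m})$, and since $\epsilon_h\asymp\epsilon_h-\epsilon\asymp\epsilon+\log(1/\tau)/m$ we get $\epsilon_h R\asymp\sigma\sqrt{\epsilon_h-\epsilon}\lesssim\sigma\sqrt{\epsilon}+\sigma\sqrt{\log(1/\tau)/m}$, which is the claimed bound. I expect the main obstacle to be the term $\bigabs{T-\tfrac1m\sum_i\tilde x_i}$: a crude bound that merely places each retained order statistic within $O(\sigma/\sqrt{\epsilon})$ of $\mu$ and averages would give the far-too-weak rate $O(\sigma/\sqrt{\epsilon})$. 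The key is that the discrepancy between the trimmed mean and the clipped-variable mean involves only an $O(\epsilon_h)$ fraction of reweighted central points, so that the stray $1/\sqrt{\epsilon}$ from $R$ is multiplied by $\epsilon_h$ and collapses to the correct $\sigma\sqrt{\epsilon}$; making this cancellation rigorous (and simultaneously verifying that all extreme points are indeed trimmed) is the crux of the argument.
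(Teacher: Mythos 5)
Your proposal is correct and achieves the claimed rate, but it organizes the argument differently from the paper, so the comparison is worth recording. The paper works directly with the contaminated trimmed set: it shows via Chebyshev plus Bennett that at most $\epsilon^\prime m$ clean points leave $E=[-\sigma/\sqrt{\epsilon^\prime},\sigma/\sqrt{\epsilon^\prime}]$ with $\epsilon^\prime=c\epsilon+C\log(1/\tau)/m$; notes that since at most $(\epsilon+\epsilon^\prime)m\le\epsilon_h m$ points of $\Tcal$ lie beyond $E$ on each side, the $\epsilon_h$-truncation removes every such point; and then splits the retained sum into the clean central part $\Scal^\prime\cap\Tcal^\prime$ (compared against the full zeroed-out sum $\sum_i x_i\indicator{\{\abs{x_i}\le\sigma/\sqrt{\epsilon^\prime}\}}$, with at most $3\epsilon_h m$ discrepancy points each bounded by $\sigma/\sqrt{\epsilon^\prime}$) and the surviving adversarial part $\Scal_b\cap E\cap\Tcal^\prime$, bounded crudely by $\epsilon m\cdot\sigma/\sqrt{\epsilon^\prime}=O(m\sigma\sqrt{\epsilon})$. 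You instead eliminate the adversarial points at the outset through the deterministic order-statistic interlacing $x_{(j-\epsilon m)}\le z_{(j)}\le x_{(j+\epsilon m)}$, reducing everything to asymmetric trimmed means of the \emph{clean} sample, and only then run the clipping argument — with exactly the paper's radius $R\asymp\sigma/\sqrt{\epsilon_h-\epsilon}$, the same Chebyshev-plus-Bennett count of extreme clean points showing they are all trimmed, and the same count-times-$R$ bound on the $O(\epsilon_h m)$ reweighted or trimmed central points (your use of clipping $\Pi_{[-R,R]}$ rather than the paper's zeroing is immaterial). Your route buys a cleaner separation of concerns: the contamination is dispatched by a deterministic sandwich, so adaptivity of the corrupted set is manifestly harmless and every probabilistic event concerns clean data only, at the cost of bookkeeping for one-sided trim levels; the paper's route avoids order statistics but must argue which adversarial points survive truncation and bound their contribution separately. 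Both proofs hinge on the identical cancellation you flag as the crux — the factor $R\asymp\sigma/\sqrt{\epsilon}$ is only ever multiplied by an $O(\epsilon_h)$ fraction of points, collapsing to $\sigma\sqrt{\epsilon}$ — and your absorption of the Bernstein linear term via $\epsilon_h-\epsilon\gtrsim\log(1/\tau)/m$ correctly mirrors the role this inequality plays, implicitly, in the paper's deviation bounds.
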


\begin{proof}
	Without loss of generality, we assume $\mu = 0$.
	Similar to the derivation of Eq.~(\ref{eq:num_of_removed_points}), with probability at least $1-\tau/2$, we have
	\[
		|\Scal^\prime| \geq (1-\epsilon^\prime) m,
	\]
	where $\epsilon^\prime = C \log(1/\tau)/m + c\epsilon$ for some constant $C$ and any constant $c>0$, and $\Scal^\prime = \{x_i: |x_i| < \sigma/\sqrt{\epsilon^\prime}\}$.	Let $z_i = x_i \mathbbm{1}_{\{\abs{x_i} \leq \sigma/\sqrt{\epsilon^\prime}\}}$.
	Given that $ \Var(z_i) \leq \sigma^2 $ and $|z_i|\leq \sigma / \sqrt{\epsilon^\prime}$, we can apply Bennett’s inequality \citep[see][Theorem 2.9.2]{vershynin2020high} to derive
	\begin{align*}
		\Pr\Bigl(\Bigl| \sum_{i \in [m]} (z_i - \E[z_i])\Bigr| > 2 m\sigma \sqrt{\epsilon^\prime}\Bigr) \leq \tau/2.
	\end{align*}
	The difference between the means of $x_i$ and $z_i$ can be bounded using Hölder's inequality via
	\begin{equation*}
		\begin{aligned}
			\bigabs{\E[x_i - z_i ]} = \bigl|\E[x_i \indicator_{\{\abs{x_i} > \sigma/\sqrt{\epsilon^\prime}\}}]\bigr|
			\leq \sigma \sqrt{\Pr\bigl(\abs{x_i} > \sigma/\sqrt{\epsilon^\prime}\bigr)}
			\leq \sigma \sqrt{\epsilon^\prime} ,
		\end{aligned}
	\end{equation*}
	as $ \Pr(\abs{x_i} > \sigma/\sqrt{\epsilon^\prime}) \leq \epsilon^\prime $ by Chebyshev's inequality.
	Combining the above inequalities, we obtain
	\begin{equation*}
		\Big|\sum_{x_i \in \Scal^\prime} x_i \Big| = \Big| \sum_{i \in [m]} z_i \Big| \leq 3 m\sigma \sqrt{\epsilon^\prime}.
	\end{equation*}

	Let $\Tcal^\prime$ be the set obtained by applying $ \epsilon_h $-truncation to $ \Tcal $.
	If $\epsilon_h \geq \epsilon+\epsilon^\prime $, then we can decompose $\Tcal^\prime$ into two subsets: $\Scal^\prime \cap \Tcal^\prime$ and $\Scal_b \cap E \cap \Tcal^\prime$, where $ E $ is the interval$ [-\sigma/\sqrt{\epsilon^\prime},\sigma/\sqrt{\epsilon^\prime}] $.
	Since $\Scal^\prime \subset E$ and $\abs{\Scal^\prime \setminus \Tcal^\prime} \leq 3\epsilon_h m$, we obtain
	\begin{align*}
		\Big|\sum_{x_i \in \Scal^\prime \cap \Tcal^\prime} x_i\Big| & \leq \Big|\sum_{x_i \in \Scal^\prime} x_i\Big| + \Big|\sum_{x_i \in \Scal^\prime \setminus \Tcal^\prime} x_i\Big| \\
		 & \leq 3 m \sigma \sqrt{\epsilon^\prime} + 3\epsilon_h m \sigma/\sqrt{\epsilon^\prime} \\
		 & = O\Bigl(m \sigma \sqrt{\epsilon} + \sqrt{m \log(1/\tau)}\Bigr).
	\end{align*}
	Since $\abs{\Scal_b \cap E \cap \Tcal^\prime} \leq \epsilon m$, we have
	\begin{align*}
		\Big|\sum_{x_i \in \Scal_b \cap E \cap \Tcal^\prime} x_i\Big| \leq \epsilon m \sigma/\sqrt{\epsilon^\prime} = O(m \sigma \sqrt{\epsilon}).
	\end{align*}
	Finally, the desired conclusion is obtained by
	\begin{align*}
		\Big|\mathrm{Trunc}_{\epsilon_h}(\{x^\prime_i\}_{i \in [m]})\Big| & = \frac{1}{(1-2\epsilon_h)m} \Big|\sum_{x_i^\prime \in \Tcal^\prime} x_i^\prime\Big| \\
		 & \leq \frac{1}{(1-2\epsilon_h)m} \Big|\sum_{x_i \in \Scal^\prime \cap \Tcal^\prime} x_i\Big| + \frac{1}{(1-2\epsilon_h)m} \Big|\sum_{x_i \in \Scal_b \cap E \cap \Tcal^\prime} x_i\Big| \\
		 & = O\Bigl(\sigma \sqrt{\epsilon} + \sqrt{\frac{\log(1/\tau)}{m}}\Bigr),
	\end{align*}
	with probability at least $ 1-\tau$.
	From the proof, we can set $\epsilon_h = (1+c)\epsilon + C \log(1/\tau)/m$ for any positive constant $c$ and a sufficiently large constant $C$.
\end{proof}

\begin{lemma}\label{lemma:target_convergence}
	For $\{x_i\}_{i \in [m]}$, $\{z_i\}_{i \in [m]} \subset \real$, if $\abs{z_i - x_i} \leq \zeta$, then we have
	\[\abs{\mathrm{Trunc}_{\epsilon_h}(\{x_i\}_{i \in [m]}) - \mathrm{Trunc}_{\epsilon_h}(\{z_i\}_{i \in [m]})} \leq \zeta.\]
\end{lemma}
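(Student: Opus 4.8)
The plan is to reduce the statement to a pointwise bound on the order statistics, after which the conclusion follows by summing over the retained indices. The key observation is that the sorting operation is $1$-Lipschitz with respect to the $\ell_\infty$ distance: if $\abs{z_i - x_i} \le \zeta$ for every $i \in [m]$, then the order statistics satisfy $\abs{z_{(k)} - x_{(k)}} \le \zeta$ for every $k \in [m]$. This is precisely the fact that lets us pair $x_{(j)}$ with $z_{(j)}$ even though they need not be attained at the same original index.

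To establish this Lipschitz property, I would fix $k$ and consider the index set $\Scal_k = \{i \in [m]: x_i \le x_{(k)}\}$, which satisfies $\abs{\Scal_k} \ge k$ by the definition of the $k$-th order statistic. For each $i \in \Scal_k$ we have $z_i \le x_i + \zeta \le x_{(k)} + \zeta$, so at least $k$ of the values $\{z_i\}$ lie below $x_{(k)} + \zeta$; hence $z_{(k)} \le x_{(k)} + \zeta$. Interchanging the roles of $\{x_i\}$ and $\{z_i\}$ gives $x_{(k)} \le z_{(k)} + \zeta$, and together these yield $\abs{z_{(k)} - x_{(k)}} \le \zeta$.

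With this in hand, recalling the definition in Eq.~(\ref{equation:robust_estimation_trunc}) and that the retained index range $\{\epsilon_h m + 1, \ldots, (1-\epsilon_h)m\}$ contains exactly $(1-2\epsilon_h)m$ terms, I would bound
\begin{align*}
	\abs{\mathrm{Trunc}_{\epsilon_h}(\{x_i\}_{i \in [m]}) - \mathrm{Trunc}_{\epsilon_h}(\{z_i\}_{i \in [m]})}
	&= \frac{1}{(1-2\epsilon_h)m} \Bigabs{\sum_{j = \epsilon_h m + 1}^{(1-\epsilon_h)m} (x_{(j)} - z_{(j)})} \\
	&\le \frac{1}{(1-2\epsilon_h)m} \sum_{j = \epsilon_h m + 1}^{(1-\epsilon_h)m} \abs{x_{(j)} - z_{(j)}} \le \zeta,
\end{align*}
where the first inequality is the triangle inequality and the second applies the pointwise order-statistic bound to each of the $(1-2\epsilon_h)m$ retained terms.

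I expect the main obstacle to be the $\ell_\infty$-Lipschitz property of sorting itself, because the order statistics of $\{x_i\}$ and $\{z_i\}$ are in general attained at different indices, so one cannot invoke the perturbation bound $\abs{z_i - x_i}\le\zeta$ directly term by term. The counting argument sketched above circumvents this by comparing the cardinalities of sublevel sets rather than tracking individual indices; once it is in place, the remainder of the proof is a one-line triangle-inequality estimate.
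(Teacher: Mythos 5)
Your proof is correct, and it takes a genuinely different route from the paper. The paper's proof never touches individual order statistics: it records two structural properties of $\mathrm{Trunc}_{\epsilon_h}$ — \emph{monotonicity} (if $\xi_i \le \xi_i'$ for all $i$ then $\mathrm{Trunc}_{\epsilon_h}(\{\xi_i\}) \le \mathrm{Trunc}_{\epsilon_h}(\{\xi_i'\})$) and \emph{additivity} ($\mathrm{Trunc}_{\epsilon_h}(\{\xi_i + t\}) = \mathrm{Trunc}_{\epsilon_h}(\{\xi_i\}) + t$) — and then squeezes: from $x_i \le z_i + \zeta$ one gets $\mathrm{Trunc}_{\epsilon_h}(\{x_i\}) \le \mathrm{Trunc}_{\epsilon_h}(\{z_i\}) + \zeta$, and symmetrically the reverse. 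That argument is shorter and applies verbatim to \emph{any} monotone, translation-equivariant functional (the median, asymmetric trimmed means, and so on), which is why the paper phrases it through those two bullet-point properties. Your argument instead establishes the stronger pointwise fact that sorting is $1$-Lipschitz in $\ell_\infty$, i.e., $\abs{x_{(k)} - z_{(k)}} \le \zeta$ for every $k$, via a correct sublevel-set counting argument (the set $\Scal_k = \{i : x_i \le x_{(k)}\}$ has cardinality at least $k$, each $z_i$ with $i \in \Scal_k$ is at most $x_{(k)} + \zeta$, hence $z_{(k)} \le x_{(k)} + \zeta$, and symmetry closes it), after which the triangle inequality over the $(1-2\epsilon_h)m$ retained terms finishes the proof. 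What your route buys is extra generality in a different direction: the order-statistic bound immediately yields $\zeta$-stability for any L-statistic, i.e., any convex combination $\sum_j c_j z_{(j)}$ with nonnegative weights summing to one, not just the flat truncated average; what the paper's route buys is brevity and independence from any sorting analysis. Your self-diagnosis of the main obstacle — that the order statistics of the two samples need not be attained at the same index, so the perturbation bound cannot be applied term by term naively — is exactly right, and your counting argument resolves it cleanly.
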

\begin{proof}
	To begin with, we list two properties of $\mathrm{Trunc}_{\epsilon_h}$.
	\begin{itemize}
		\item \textit{Monotonicity}: For $\{\xi_{i}\}_{i \in [m]}$ and $\{\xi^\prime_{i}\}_{i \in [m]}$ such that $\xi_{i} \leq \xi^\prime_{i}$ for all $i \in [m]$, it follows that $\mathrm{Trunc}_{\epsilon_h}(\{\xi_{i}\}_{i \in [m]}) \leq \mathrm{Trunc}_{\epsilon_h}(\{\xi^\prime_{i}\}_{i \in [m]})$.
		\item \textit{Additivity }: For $\{\xi_{i}\}_{i \in [m]}$ and any constant $t$, it follows that $\mathrm{Trunc}_{\epsilon_h}(\{\xi_{i} + t\}_{i \in [m]}) = \mathrm{Trunc}_{\epsilon_h}(\{\xi_{i}\}_{i \in [m]}) + t$.
	\end{itemize}
	By these two properties, we have,
	\begin{align*}
		\mathrm{Trunc}_{\epsilon_h}(\{x_i\}_{i \in [m]}) \leq \mathrm{Trunc}_{\epsilon_h}(\{z_i + \zeta\}_{i \in [m]}) = \mathrm{Trunc}_{\epsilon_h}(\{z_i \}_{i \in [m]})+ \zeta.
	\end{align*}
	Similarly,
	\begin{align*}
		\mathrm{Trunc}_{\epsilon_h}(\{z_i\}_{i \in [m]}) \leq \mathrm{Trunc}_{\epsilon_h}(\{x_i \}_{i \in [m]})+ \zeta.
	\end{align*}
	Combining the above two inequalities, we have the conclusion.
\end{proof}

Now, we prove the key results for the truncation mean estimator.
Lemma~\ref{lemma:adaptive_trunc_convergence_rate} below provides bounds in Assumptions~\ref{ass:uniform_converge_excess_risk} and \ref{ass:uniform_converge_divergence} for the truncated mean estimator under the same assumptions of Lemma~\ref{lemma:sup_of_adaptive_RB} in Section~\ref{sec:lem_mwv}.
\begin{lemma} \label{lemma:adaptive_trunc_convergence_rate}
	Choosing $\epsilon_h = (1 + c) \epsilon + C \log(1/\tau) / m$ for any constant $c>0$, and some constant $C > 0$, the following results hold.

	(i) Under Assumption~\ref{ass:lipschitz}, \ref{ass:target_adaptive_convergence}(i) and \ref{ass:adaptive_moment}(i), with probability at least $1-\tau$, it follows
	\begin{align*}
		\sup_{\vq_1,\vq_2 \in \triangle_{K}} & \frac{\abs{\mathrm{Trunc}_{\epsilon_h}(\{\Ecal_j(\vq_1,\vq_2)\}_{j \in [m]}) - \Ecal_{\target}(\vq_1,\vq_2)}}{\max(\norm{\vq_1 - \vq^\ast}_{2} + \norm{\vq_2 - \vq^\ast}_{2},r_n)} \lesssim r_n.
	\end{align*}

	(ii) Under Assumptions~\ref{ass:lipschitz},\ref{ass:target_adaptive_convergence}(ii) and \ref{ass:adaptive_moment}(ii), with probability at least $1-\tau$, it follows
	\[\sup_{\vq \in \triangle_{K}} \abs{\mathrm{Trunc}_{\epsilon_h}(\{\Lcal_{j}(\vq)\}_{j \in [m]}) - \Lcal_\target(\vq)} \lesssim r_n,\]
	where $r_n = \sigma_n \sqrt{\epsilon} + \sigma_n \sqrt{\frac{\log(1/\tau) + K \log (L m/\sigma_n) }{m}} + a_{m, N}$.
\end{lemma}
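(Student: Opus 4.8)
The plan is to mirror the proof of Lemma~\ref{lemma:sup_of_adaptive_RB}, replacing the stability machinery tailored to the MWV estimator (Lemmas~\ref{lemma:main_lemma_in_RB} and \ref{lemma:stability_on_corrupted_set}) with the two truncation-specific facts already established: Lemma~\ref{lemma:truncation}, which controls the error of $\mathrm{Trunc}_{\epsilon_h}$ under an $\epsilon$-corruption of a bounded-variance sample, and Lemma~\ref{lemma:target_convergence}, which shows that $\mathrm{Trunc}_{\epsilon_h}$ is $1$-Lipschitz under a uniform perturbation of its inputs. I would prove part (i) in detail and note that part (ii) follows the same template with the norm-dependent scaling removed. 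Throughout, the key structural point is that the \emph{target-based} quantities $\{\Ecal_{j,\targetx}(\vq_1,\vq_2)\}_{j \in [m]}$ are independent across $j$ and, for the inlier indices $j \in \Ical$, have mean $\Ecal_\target(\vq_1,\vq_2)$ and variance controlled by Assumption~\ref{ass:adaptive_moment}(i); the outlier indices $j \in \Ocal$ play exactly the role of the corrupted fraction $\Scal_b$ in Lemma~\ref{lemma:truncation}.

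First I would construct a $\gamma$-net of $\triangle_{K} \times \triangle_{K}$ of cardinality $O(\gamma^{-2K})$, with $\gamma$ chosen at the end. For a fixed net point $(\vq_1^\prime, \vq_2^\prime)$, I would apply Lemma~\ref{lemma:truncation} to the independent values $\{\Ecal_{j,\targetx}(\vq_1^\prime, \vq_2^\prime)\}_{j \in [m]}$ with the noise level $\sigma = \sigma_n(\norm{\vq_1^\prime - \vq^\ast}_2 + \norm{\vq_2^\prime - \vq^\ast}_2)$ supplied by Assumption~\ref{ass:adaptive_moment}(i); this yields, with probability at least $1 - \tilde{\tau}/2$, the pointwise bound $\abs{\mathrm{Trunc}_{\epsilon_h}(\{\Ecal_{j,\targetx}(\vq_1^\prime,\vq_2^\prime)\}) - \Ecal_\target(\vq_1^\prime,\vq_2^\prime)} \lesssim \delta_n (\norm{\vq_1^\prime - \vq^\ast}_2 + \norm{\vq_2^\prime - \vq^\ast}_2)$, where $\delta_n = \sigma_n \sqrt{\epsilon} + \sigma_n \sqrt{\log(1/\tilde{\tau})/m}$.

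Next I would transfer this pointwise bound to all $(\vq_1,\vq_2)$ and to the empirical excess risk using Lemma~\ref{lemma:target_convergence}. Lipschitz continuity (Assumption~\ref{ass:lipschitz}) gives $\abs{\Ecal_{j,\targetx}(\vq_1,\vq_2) - \Ecal_{j,\targetx}(\vq_1^\prime,\vq_2^\prime)} \lesssim L\gamma$ uniformly in $j$, so Lemma~\ref{lemma:target_convergence} bounds the corresponding gap between the truncated means by $O(L\gamma)$; Assumption~\ref{ass:target_adaptive_convergence}(i) gives $\abs{\Ecal_j(\vq_1,\vq_2) - \Ecal_{j,\targetx}(\vq_1,\vq_2)} \le a_{m,N}(\norm{\vq_1 - \vq^\ast}_2 + \norm{\vq_2 - \vq^\ast}_2)$ uniformly, so a second application of Lemma~\ref{lemma:target_convergence} passes from the target excess risk to the empirical one at cost $O(a_{m,N}(\norm{\vq_1 - \vq^\ast}_2 + \norm{\vq_2 - \vq^\ast}_2))$; finally Lipschitz continuity of $\Lcal_\target$ controls $\abs{\Ecal_\target(\vq_1,\vq_2) - \Ecal_\target(\vq_1^\prime,\vq_2^\prime)}$ by $O(L\gamma)$. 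Summing the four contributions and invoking $\norm{\vq_i^\prime - \vq^\ast}_2 \le \norm{\vq_i - \vq^\ast}_2 + \gamma$, I would set $\gamma = L^{-1} \delta_n r_n$ so that the discretization terms are absorbed into $\delta_n r_n$, yielding $\abs{\mathrm{Trunc}_{\epsilon_h}(\{\Ecal_j(\vq_1,\vq_2)\}) - \Ecal_\target(\vq_1,\vq_2)} \lesssim (\delta_n + a_{m,N}) \max(\norm{\vq_1 - \vq^\ast}_2 + \norm{\vq_2 - \vq^\ast}_2, r_n)$. Choosing $\tilde{\tau} = \gamma^{2K}\tau$ and taking a union bound over the net converts $\log(1/\tilde{\tau})$ into $\log(1/\tau) + K\log(Lm/\sigma_n)$, delivering the claimed rate after dividing by the denominator. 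Part (ii) repeats this with a $\gamma$-net of $\triangle_{K}$ of size $O(\gamma^{-K})$, Assumption~\ref{ass:adaptive_moment}(ii) supplying a flat variance bound $\sigma_n^2$, and $\tilde{\tau} = \gamma^{K}\tau$.

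I expect the main obstacle to be threading the norm-dependent scaling of part (i) through the $\max(\cdot, r_n)$ denominator uniformly over the net. Unlike the flat setting of part (ii), here both the target variance and the net-approximation errors scale with $\norm{\vq_1 - \vq^\ast}_2 + \norm{\vq_2 - \vq^\ast}_2$, so I must argue separately in the regime where this quantity falls below $r_n$ (where the floor $r_n$ governs the denominator and the resolution $\gamma \asymp \delta_n r_n / L$ is fine enough) and where it exceeds $r_n$. The redeeming simplification relative to the MWV argument is that Lemma~\ref{lemma:target_convergence} supplies a deterministic, exact $1$-Lipschitz transfer, so no subset-stability has to be re-established when moving between net points, between $\targetx$ and $\targetxhat$, and between target and empirical excess risks.
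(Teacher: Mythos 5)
Your proposal matches the paper's proof essentially step for step: the same $\gamma$-net construction with $\gamma = L^{-1}\delta_n r_n$, the same application of Lemma~\ref{lemma:truncation} at net points with $\sigma = \sigma_n(\norm{\vq_1^\prime - \vq^\ast}_2 + \norm{\vq_2^\prime - \vq^\ast}_2)$, the same transfers via the $1$-Lipschitz property in Lemma~\ref{lemma:target_convergence} (between net points, between $\Ecal_{j,\targetx}$ and $\Ecal_j$), and the same choice $\tilde{\tau} = \gamma^{2K}\tau$ (resp.\ $\gamma^{K}\tau$) in the union bound. The obstacle you flag is resolved exactly as you anticipate, by absorbing the $\gamma$- and $\delta_n$-scaled discretization errors into $(\delta_n + a_{m,N})\max(\norm{\vq_1 - \vq^\ast}_2 + \norm{\vq_2 - \vq^\ast}_2, r_n)$, so your argument is correct and not a different route.
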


\begin{proof}
	We first prove Part (i).
	We can construct a $\gamma$-net of $\triangle^{K} \times \triangle^{K}$ with cardinality $ O(\gamma^{-2K}) $, where $\gamma$ will be determined later.
	For any $(\vq_1,\vq_2) \in \triangle^{K} \times \triangle^{K}$, there exist $(\vq_1^\prime , \vq_2^\prime)$ in this $\gamma$-net such that $\max(\norm{\vq_1 - \vq_1^\prime}_{2},\norm{\vq_2 - \vq_2^\prime}_{2}) \leq \gamma$.
	Let $\Zcal_1 = \{z_{j,\targetx}^\prime\}_{j \in [m]}$, $\Zcal_2 = \{z_{j,\targetx}\}_{j \in [m]}$ and $\Zcal_3 = \{z_{j}\}_{j \in [m]}$, where $z_{j,\targetx}^\prime = \Ecal_{j,\targetx}(\vq_1^\prime,\vq_2^\prime) - \Ecal_Q(\vq_1^\prime,\vq_2^\prime)$, $ z_{j,\targetx} = \Ecal_{j,\targetx}(\vq_1,\vq_2) - \Ecal_Q(\vq_1,\vq_2)$ and $ z_{j} = \Ecal_{j}(\vq_1,\vq_2) - \Ecal_Q(\vq_1,\vq_2)$, respectively.
	
	Under Assumption~\ref{ass:adaptive_moment}(i), we have $\E[z_{j,\targetx}^\prime] = 0$ and $\Var(z_{j,\targetx}^\prime) \leq \sigma_n^2 (\norm{\vq_1^\prime - \vq^\ast}_{2} + \norm{\vq_2^\prime - \vq^\ast}_{2})^2$. By applying Lemma~\ref{lemma:truncation}, we have
	\[
	\abs{\mathrm{Trunc}_{\epsilon_h}(\{z_{j,\targetx}^\prime\}_{j \in [m]})} \lesssim \delta_n (\norm{\vq_1^\prime - \vq^\ast}_{2} + \norm{\vq_2^\prime - \vq^\ast}_{2}),
	\]	
	with probability at least $1-\tilde{\tau}/2$ for a fixed $(\vq_1^\prime, \vq_2^\prime)$, where $\delta_n = O(\sigma_n \sqrt{\epsilon} + \sigma_n \sqrt{\log(1/\tilde{\tau})/m})$.

	Given the Lipschitz continuity of $\Lcal_{j,\targetx}(\vq)$, we have $ |z_{j,\targetx} - z_{j,\targetx}^\prime| \leq 2 L (\norm{\vq_1 - \vq_1^\prime}_{2} + \norm{\vq_2 - \vq_2^\prime}_{2})\leq 4L \gamma $.
	Under Assumption~\ref{ass:target_adaptive_convergence}(i), we also have $ |z_{j,\targetx} - z_{j}| \leq a_{m,N} (\norm{\vq_1 - \vq^\ast}_{2} + \norm{\vq_2 - \vq^\ast}_{2})$ with probability at least $1-\tau/2$ uniformly for all $(\vq_1, \vq_2)$.
	By setting $\gamma = L^{-1}\delta_n r_n$,
	we obtain
	\begin{align*}
		\abs{\mathrm{Trunc}_{\epsilon_h}(\{z_{j}\}_{j \in [m]})} & \leq  \abs{\mathrm{Trunc}_{\epsilon_h}(\{z_{j,\targetx}^\prime\}_{j \in [m]})} \\
		& \ \ \ + \abs{\mathrm{Trunc}_{\epsilon_h}(\{z_{j,\targetx}\}_{j \in [m]}) -
			{\mathrm{Trunc}_{\epsilon_h}(\{z_{j,\targetx}^\prime\}_{j \in [m]})}} \\
		& \ \ \ + \abs{\mathrm{Trunc}_{\epsilon_h}(\{z_{j}\}_{j \in [m]}) - \mathrm{Trunc}_{\epsilon_h}(\{z_{j,\targetx}\}_{j \in [m]})} \\
		& = O\Bigl((\norm{\vq_1^\prime - \vq^\ast}_{2} + \norm{\vq_2^\prime - \vq^\ast}_{2}) \delta_n + L \gamma \\
		& \ \ \ \ \ \ \ \ \ \ \ + a_{m,N} (\norm{\vq_1 - \vq^\ast}_{2} + \norm{\vq_2 - \vq^\ast}_{2})\Bigr) \\
		& = O\Bigl((\norm{\vq_1 - \vq^\ast}_{2} + \norm{\vq_2 - \vq^\ast}_{2} + \gamma) (\delta_n + a_{m, N}) + r_n \delta_n \Bigr) \\
		& = O\Bigl((\delta_n + a_{m, N}) \max(\norm{\vq_1 - \vq^\ast}_{2} + \norm{\vq_2 - \vq^\ast}_{2},r_n)\Bigr),
	\end{align*}
	where the first equality follows from Lemma~\ref{lemma:target_convergence}. This indicates that with probability at least $1 - (\tau + \tilde{\tau}) / 2$,
	\[\frac{\abs{\mathrm{Trunc}_{\epsilon_h}(\{\Ecal_j(\vq_1,\vq_2)\}_{j \in [m]}) - \Ecal_{\target}(\vq_1,\vq_2)}}	{\max(\norm{\vq_1 - \vq^\ast}_{2} + \norm{\vq_2 - \vq^\ast}_{2},r_n)} \lesssim \sigma_n \sqrt{\epsilon} + \sigma_n \sqrt{\frac{\log(1/\tilde{\tau})}{m}} + a_{m, N},\]
	for any $\vq_1, \vq_2$ such that $\max(\norm{\vq_1 - \vq_1^\prime}_{2},\norm{\vq_2 - \vq_2^\prime}_{2}) \leq \gamma$.
	
	By choosing $\tilde{\tau} = \gamma^{2K} \tau$ and taking the union bound over all the $(\vq_1^\prime, \vq_2^\prime)$ in the $\gamma$-net, we obtain
	\begin{align*}
		& \sup_{\vq_1, \vq_2 \in \triangle_{K}} \frac{\abs{\mathrm{Trunc}_{\epsilon_h}(\{\Ecal_j(\vq_1,\vq_2)\}_{j \in [m]}) - \Ecal_{\target}(\vq_1,\vq_2)}}	{\max(\norm{\vq_1 - \vq^\ast}_{2} + \norm{\vq_2 - \vq^\ast}_{2},r_n)} \\
		\lesssim & \sigma_n \sqrt{\epsilon} + \sigma_n \sqrt{\frac{\log(1/\tau) + K \log (L m/\sigma_n) }{m}} + a_{m, N},
	\end{align*}
	which holds with probability at least $1-\tau$.
	
	We then study Part (ii).
	Similarly, we can construct a $ \gamma $-net of $ \triangle_{K} $ with cardinality $ O(1/\gamma^K) $, where $ \gamma = L^{-1}(\sigma_n \sqrt{\epsilon} + \sigma_n \sqrt{\log(1/\tau)/m} )$.
	For any $ \vq \in \triangle_{K} $, there exists $\vq^\prime$ in this $ \gamma $-net such that $ \norm{\vq^\prime-\vq}_{2} \leq \gamma $.
	
	Let $\Zcal_1 = \{z_{j,\targetx}^\prime\}_{j \in [m]}$, $\Zcal_2 = \{z_{j,\targetx}\}_{j \in [m]}$ and $\Zcal_3 = \{z_{j}\}$, where $z_{j,\targetx}^\prime = \Lcal_{j,\targetx}(\vq^\prime) - \Lcal_{\target} (\vq^\prime)$, $ z_{j,\targetx} = \Lcal_{j,\targetx}(\vq) - \Lcal_Q(\vq)$ and $ z_{j} = \Lcal_{j}(\vq) - \Lcal_Q(\vq)$, respectively.
	Under Assumption~\ref{ass:adaptive_moment}(ii), we have $\E[z_{j,\targetx}^\prime] = 0$ and $\Var(z_{j,\targetx}^\prime) \leq \sigma_n^2$.
	By applying Lemma~\ref{lemma:truncation}, we have
	\[
	\abs{\mathrm{Trunc}_{\epsilon_h}(\{z_{j,\targetx}^\prime\}_{j \in [m]})} \lesssim \delta_n,
	\]
	with probability at least $1-\tilde{\tau}/2$ for a fixed $\vq^\prime$, where $\delta_n = O(\sigma_n \sqrt{\epsilon} + \sigma_n \sqrt{\log(1/\tilde{\tau})/m})$.
	
	Given the Lipschitz continuity of $\Lcal_{j,\targetx}(\vq)$, we have $ |z_{j,\targetx} - z_{j,\targetx}^\prime| \leq 2 L \norm{\vq - \vq^\prime}_{2} \leq 4L \gamma $.
	Under Assumption~\ref{ass:target_adaptive_convergence}(i), we also have $ |z_{j,\targetx} - z_{j}| \leq a_{m,N}$ with probability at least $1-\tau/2$ uniformly for all $\vq$.
	Therefore, we have
	\begin{align*}
		\abs{\mathrm{Trunc}_{\epsilon_h}(\{z_{j}\}_{j \in [m]})} & \leq  \abs{\mathrm{Trunc}_{\epsilon_h}(\{z_{j,\targetx}^\prime\}_{j \in [m]})} \\
		& \ \ \ + \abs{\mathrm{Trunc}_{\epsilon_h}(\{z_{j,\targetx}\}_{j \in [m]}) -
			{\mathrm{Trunc}_{\epsilon_h}(\{z_{j,\targetx}^\prime\}_{j \in [m]})}} \\
		& \ \ \ + \abs{\mathrm{Trunc}_{\epsilon_h}(\{z_{j}\}_{j \in [m]}) - \mathrm{Trunc}_{\epsilon_h}(\{z_{j,\targetx}\}_{j \in [m]})} \\
		& = O\Bigl( \delta_n + L \gamma + a_{m,N} \Bigr),
	\end{align*}
	where the equality follows from Lemma~\ref{lemma:target_convergence}. This indicates that with probability at least $1 - (\tau + \tilde{\tau}) / 2$,
	\[\abs{\mathrm{Trunc}_{\epsilon_h}(\{\Lcal_j(\vq)\}_{j \in [m]}) - \Lcal_{\target}(\vq)} \leq \sigma_n \sqrt{\epsilon} + \sigma_n \sqrt{\frac{\log(1/\tilde{\tau})}{m}} + a_{m, N},\]
	for any $\vq$ such that $\lVert \vq - \vq^\prime \rVert_2 \le \gamma$.
	By choosing $\tilde{\tau} = \gamma^{K} \tau$ and taking the union bound over all $\vq^\prime$ in the $\gamma$-net, we obtain that with probability at least $1-\tau$,
	\[\sup_{\vq \in \triangle_{K}} \abs{\mathrm{Trunc}_{\epsilon_h}(\{\Lcal_{j}(\vq)\}_{j \in [m]}) - \Lcal_\target(\vq)} \lesssim\sigma_n \sqrt{\epsilon} + \sigma_n \sqrt{\frac{\log(1/\tau) + K \log (Lm/\sigma_n) }{m}} + a_{m, N}.\]
\end{proof}

Finally, the following theorem provides parallel results to those in Section~\ref{sec:theory}, when the truncated weighting function in Eq.~(\ref{equation:robust_weight_function_trunc}) is chosen as $\RW$.

\begin{theorem}\label{theorem:trunc_thm}
	Consider using Eq.~(\ref{equation:robust_weight_function_trunc}) as the robust weighting function with $\epsilon_h$ slightly larger than $\epsilon$ and $\epsilon_h = O(\epsilon)$.
	Under the same assumptions as in Theorem~\ref{theorem:robust_excess_mwv} and Theorem~\ref{theorem:robust_divergence_mwv}, with probability at least $1-\tau$, the same error bounds hold with
	\[
        r_n = \sqrt{\frac{\epsilon}{n \psi}} + \sqrt{\frac{\log(1/\tau) +K \log( n m K)}{n m \psi}} +\kappa\sqrt{\frac{K\log(mK/\tau)}{N}}.
    \]
\end{theorem}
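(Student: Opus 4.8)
The plan is to reduce Theorem~\ref{theorem:trunc_thm} to the three general-framework results (Theorem~\ref{theorem:general_one_step_update}, Corollary~\ref{corollary:general_multistep_update} and Theorem~\ref{theorem:general_robust_loss}) in exactly the way Theorems~\ref{theorem:robust_excess_mwv} and \ref{theorem:robust_divergence_mwv} did for the MWV estimator, replacing only the single estimator-specific ingredient. The key observation enabling this is that the truncated weighting function in Eq.~(\ref{equation:robust_weight_function_trunc}) returns a genuine weight vector, so $\mathrm{Trunc}_{\epsilon_h}$ coincides with the generic weighted-mean estimator $\RE_{\epsilon_h}$ used in Assumptions~\ref{ass:uniform_converge_excess_risk} and \ref{ass:uniform_converge_divergence}. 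Consequently the general theorems apply verbatim, and the entire task is to re-verify their hypotheses for the MMD-plus-truncation combination.

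First I would verify Assumption~\ref{ass:convex}. This verification does not reference the robust weighting function and is identical to the argument in the proof of Theorem~\ref{theorem:robust_excess_mwv}: by Assumptions~\ref{ass:linearly_independent} and \ref{ass:kernel} the matrix $\mA$ is positive definite, so $\Lcal_\target(\vq) - \Lcal_\target(\vq^\ast) = (\vq - \vq^\ast)^\top \mA (\vq - \vq^\ast) \ge \lambda_{\mathrm{min}}(\mA)\norm{\vq - \vq^\ast}_2^2$, giving strong convexity with $\lambda = \lambda_{\mathrm{min}}(\mA)$. Next I would verify Assumptions~\ref{ass:lipschitz}, \ref{ass:target_adaptive_convergence} and \ref{ass:adaptive_moment}. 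These are intrinsic properties of the MMD loss and again say nothing about the weighting scheme, so the verifications performed inside the proof of Lemma~\ref{lemma:uniform_bound_mwv} transfer unchanged: boundedness of the kernel yields $L = O(K)$; a McDiarmid and union-bound argument on the estimates $\hat{b}_{j,k}$ yields $a_{m,N} = O(\kappa\sqrt{K\log(mK/\tau)/N})$; and Lemma~\ref{lemma:upper_var} yields $\sigma_n^2 = O((n\psi)^{-1})$.

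The central step is then to invoke Lemma~\ref{lemma:adaptive_trunc_convergence_rate}, the truncated-mean counterpart of Lemma~\ref{lemma:sup_of_adaptive_RB}, with these specified constants. Substituting $\sigma_n = O(1/\sqrt{n\psi})$, $L = O(K)$ and the above $a_{m,N}$ into its rate $r_n = \sigma_n\sqrt{\epsilon} + \sigma_n\sqrt{(\log(1/\tau)+K\log(Lm/\sigma_n))/m} + a_{m,N}$, and noting that $\log(Lm/\sigma_n) = O(\log(Kmn))$, collapses $r_n$ to precisely the expression in Eq.~(\ref{eq:specified_rn}). This establishes Assumption~\ref{ass:uniform_converge_excess_risk} (respectively Assumption~\ref{ass:uniform_converge_divergence}) for the truncated weighting function with the claimed rate. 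With Assumption~\ref{ass:weight_identifiability} supplied by hypothesis, Theorem~\ref{theorem:general_one_step_update} together with Corollary~\ref{corollary:general_multistep_update} deliver Eq.~(\ref{eq:err_onestep_general}) and Eq.~(\ref{eq:err_multistep_general}), while Theorem~\ref{theorem:general_robust_loss} delivers Eq.~(\ref{eq:err_minloss_general}).

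The one genuine subtlety, and the step I would treat most carefully, is the mismatch between the weight normalization in the abstract framework of Eq.~(\ref{equation:robust_w_loss}), where nonzero weights equal $(1-\epsilon_h)^{-1}m^{-1}$, and the truncated weighting function, whose nonzero weights equal $\{(1-2\epsilon_h)m\}^{-1}$ since points are trimmed from both tails. I would argue this is immaterial: the general theorems interact with the weighting function only through the uniform-convergence bounds of Assumptions~\ref{ass:uniform_converge_excess_risk}--\ref{ass:uniform_converge_divergence} and the local constancy in $\vq$ of Assumption~\ref{ass:weight_identifiability}, neither of which pins down the exact weight value. Because $\epsilon_h = O(\epsilon) = o(1)$, the two-sided truncation still discards only an $O(\epsilon)$ fraction of sources, so Lemma~\ref{lemma:adaptive_trunc_convergence_rate} supplies exactly the bound the general theorems consume, and the conclusion follows.
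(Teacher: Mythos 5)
Your proposal is correct and follows exactly the paper's route: the paper proves Theorem~\ref{theorem:trunc_thm} by repeating the proofs of Theorems~\ref{theorem:robust_excess_mwv} and \ref{theorem:robust_divergence_mwv} with Lemma~\ref{lemma:sup_of_adaptive_RB} replaced by Lemma~\ref{lemma:adaptive_trunc_convergence_rate}, which is precisely your reduction (verify Assumption~\ref{ass:convex} via positive definiteness of $\mA$, reuse the weighting-independent verifications of Assumptions~\ref{ass:lipschitz}--\ref{ass:adaptive_moment} with $L = O(K)$, $a_{m,N} = O(\kappa\sqrt{K\log(mK/\tau)/N})$ and $\sigma_n^2 = O((n\psi)^{-1})$, then invoke the general theorems). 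Your closing remark on the $\{(1-2\epsilon_h)m\}^{-1}$ versus $(1-\epsilon_h)^{-1}m^{-1}$ weight normalization flags a detail the paper leaves implicit, and your resolution --- that the general results interact with the weighting function only through Assumptions~\ref{ass:uniform_converge_excess_risk}--\ref{ass:weight_identifiability} --- is sound.
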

\begin{proof}
    The proof of this theorem is identical to that of Theorem~\ref{theorem:robust_excess_mwv} and Theorem~\ref{theorem:robust_divergence_mwv}, by replacing Lemma~\ref{lemma:sup_of_adaptive_RB} with Lemma~\ref{lemma:adaptive_trunc_convergence_rate}.
\end{proof}

\section{Additional Numerical Results}
In this section, we analyze the impact of the unlabeled sample size $N$ and the hyperparameter $\epsilon_h$ on our estimation.
In the experiments, we vary either $N$ or $\epsilon_h$ while keep all other settings consistent with Section~\ref{sec:synthetic_data} such that $(m,n,N,\epsilon,\epsilon_h) = (40,100,4000,0.2,0.2)$.
The MSE results are illustrated in Figure~\ref{figure:epsilon_h}.

\begin{figure}[thb]
	\centering
	\includegraphics[width=0.831\linewidth]{"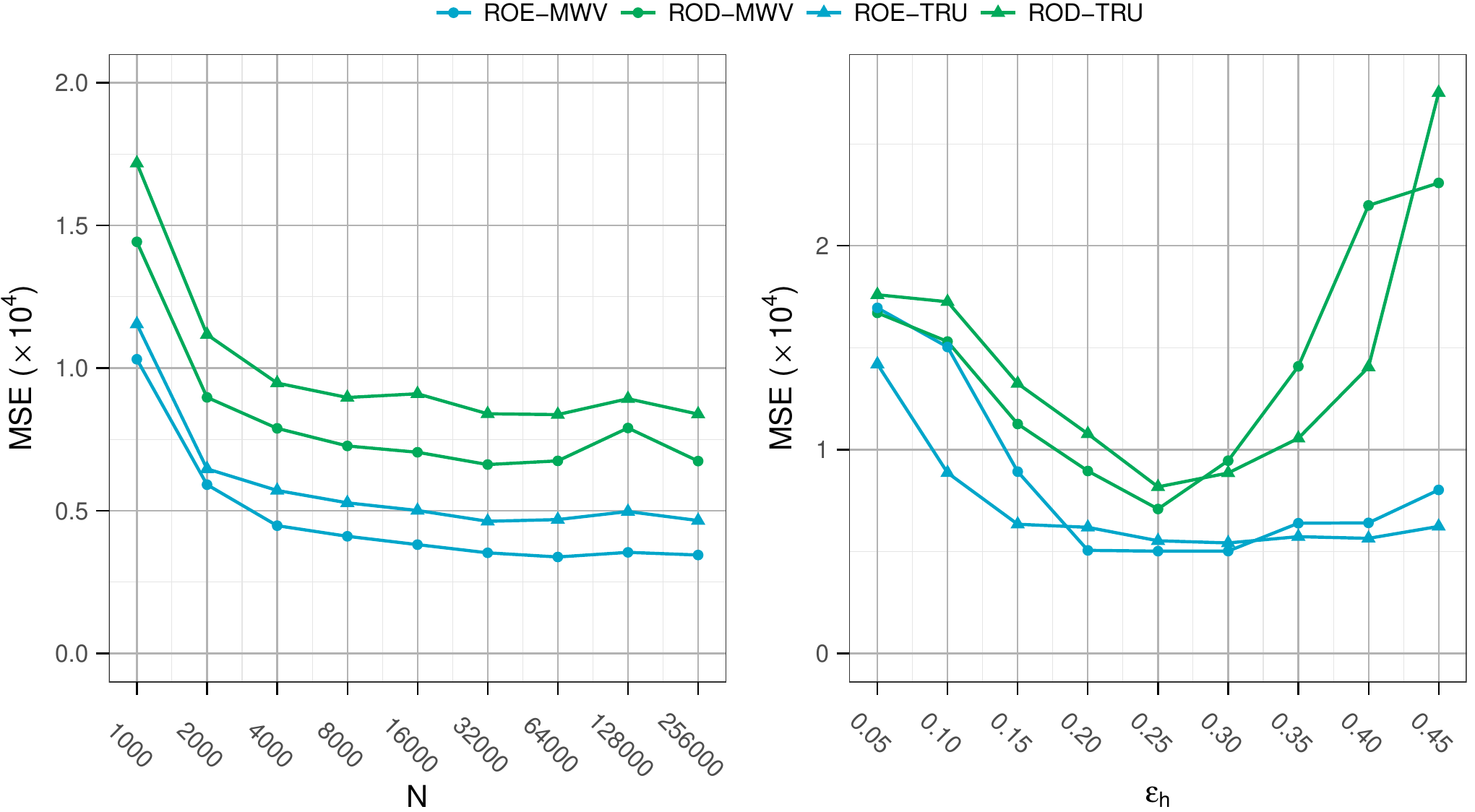"}
	\caption{The mean squared error (MSE) of the comparison methods evaluated under varying values of $N$ and $\epsilon_h$.}
	\label{figure:epsilon_h}
\end{figure}

The MSE decreases as $N$ increases.
However, when $N$ becomes sufficiently large (for example, 32,000), further increases in $N$ bring little improvement.
Thus, we recommend collecting $N = mn$ unlabeled target samples in practice, which ensures that the term involving $N$ in Corollary~\ref{corollary:general_multistep_update} becomes non-dominant.

When $\epsilon_h$ is slightly larger than $\epsilon$, the proposed method achieves optimal performance.
When $\epsilon_h$ is smaller than $\epsilon$, the failure to fully identify outlier sources leads to significant interference in the estimation results.
Conversely, when $\epsilon_h$ is much larger than $\epsilon$, the estimation performance slightly degrades due to the excessive removal of inlier sources, which causes a loss of efficiency.

\bibliographystyle{asa}
\bibliography{ref.bib}
\end{document}